\newcommand{\packageGraphicx}{\usepackage{graphicx}}
\newcommand{\packageHyperref}{\usepackage{hyperref}}
\newcommand{\renewrmdefault}{\renewcommand{\rmdefault}{ptm}}
\newcommand{\packageRelsize}{\usepackage{relsize}}
\newcommand{\packageAmsmath}{\usepackage{amsmath}}
\newcommand{\packageMathabx}{\usepackage{mathabx}}
\newcommand{\packageWasysym}{
  \let\leftmoon\relax \let\rightmoon\relax \let\fullmoon\relax \let\newmoon\relax \let\diameter\relax
  \usepackage[nointegrals]{wasysym}}
\newcommand{\packageTxfonts}{
  \let\widering\relax
  \let\oldwidebar\widebar
  \let\widebar\relax
  \usepackage{newtxmath}
  \ifx\widebar\relax
    \let\widebar\oldwidebar
  \fi
}
\newcommand{\packageTextcomp}{\usepackage{textcomp}}
\newcommand{\packageFramed}{\usepackage{framed}}
\newcommand{\packageHyphenat}{\usepackage[htt]{hyphenat}}
\newcommand{\packageColor}{\usepackage[usenames,dvipsnames]{color}}
\newcommand{\doHypersetup}{\hypersetup{bookmarks=true,bookmarksopen=true,bookmarksnumbered=true}}
\newcommand{\packageTocstyle}{}
\newcommand{\packageCJK}{\IfFileExists{CJK.sty}{\usepackage{CJK}}{}}
\renewcommand\packageColor\relax
\renewcommand\packageTocstyle\relax
\renewcommand\packageMathabx{\ifx\bigtimes\undefined \usepackage{mathabx} \else \relax \fi}
\renewcommand\packageTxfonts\relax
\let\Footnote\undefined
\renewcommand{\renewrmdefault}{}
\newcommand{\sectionNewpage}{}
\newcommand{\preDoc}{}
\newcommand{\postDoc}{}
\newcommand{\ChapRef}[2]{\SecRef{#1}{#2}}
\newcommand{\SecRef}[2]{section~#1}
\newcommand{\ChapRefUC}[2]{\SecRefUC{#1}{#2}}
\newcommand{\SecRefUC}[2]{Section~#1}
\newcommand{\BookRefLocal}[3]{\hyperref[#1]{\BookRef{#2}{#3}}}
\newcommand{\ChapRefLocal}[3]{\hyperref[#1]{\ChapRef{#2}{#3}}}
\newcommand{\SecRefLocal}[3]{\hyperref[#1]{\SecRef{#2}{#3}}}
\newcommand{\PartRefLocal}[3]{\hyperref[#1]{\PartRef{#2}{#3}}}
\newcommand{\BookRefLocalUC}[3]{\hyperref[#1]{\BookRefUC{#2}{#3}}}
\newcommand{\ChapRefLocalUC}[3]{\hyperref[#1]{\ChapRefUC{#2}{#3}}}
\newcommand{\SecRefLocalUC}[3]{\hyperref[#1]{\SecRefUC{#2}{#3}}}
\newcommand{\PartRefLocalUC}[3]{\hyperref[#1]{\PartRefUC{#2}{#3}}}
\newcommand{\BookRefLocalUN}[2]{\hyperref[#1]{\BookRefUN{#2}}}
\newcommand{\SecRefLocalUN}[2]{\hyperref[#1]{\SecRefUN{#2}}}
\newcommand{\SectionNumberLink}[2]{\hyperref[#1]{#2}}
\newcommand{\Scribtexttt}[1]{{\texttt{#1}}}
\newcommand{\textsuper}[1]{$^{\hbox{\textsmaller{#1}}}$}
\newcommand{\planetName}[1]{PLane\hspace{-0.1ex}T}
\newcommand{\Stttextless}{{\fontencoding{T1}\selectfont<}}
\def\empty@finalstrut#1{%
  \unskip\ifhmode\nobreak\fi\vrule\@width\z@\@height\z@\@depth\z@}
\def\no@strut{\global\setbox\@arstrutbox\hbox{%
    \vrule \@height\z@
           \@depth\z@
           \@width\z@}%
    \gdef\@endpbox{\empty@finalstrut\@arstrutbox\par\egroup\hfil}%
}%
\def\yes@strut{\global\setbox\@arstrutbox\hbox{%
    \vrule \@height\arraystretch \ht\strutbox
           \@depth\arraystretch \dp\strutbox
           \@width\z@}%
    \gdef\@endpbox{\@finalstrut\@arstrutbox\par\egroup\hfil}%
}%
\def\@mkpream#1{\@firstamptrue\@lastchclass6
  \let\@preamble\@empty\def\empty@preamble{\add@ins}%
  \let\protect\@unexpandable@protect
  \let\@sharp\relax\let\add@ins\relax
  \let\@startpbox\relax\let\@endpbox\relax
  \@expast{#1}%
  \expandafter\@tfor \expandafter
    \@nextchar \expandafter:\expandafter=\reserved@a\do
       {\@testpach\@nextchar
    \ifcase \@chclass \@classz \or \@classi \or \@classii \or \@classiii
      \or \@classiv \or\@classv \fi\@lastchclass\@chclass}%
  \ifcase \@lastchclass \@acol
      \or \or \@preamerr \@ne\or \@preamerr \tw@\or \or \@acol \fi}
\def\@addamp{%
  \if@firstamp
    \@firstampfalse
    \edef\empty@preamble{\add@ins}%
  \else
    \edef\@preamble{\@preamble &}%
    \edef\empty@preamble{\expandafter\noexpand\empty@preamble &\add@ins}%
  \fi}
\newif\iftw@hlines \tw@hlinesfalse
\def\@xhline{\ifx\reserved@a\hline
               \tw@hlinestrue
             \else\ifx\reserved@a\Hline
               \tw@hlinestrue
             \else
               \tw@hlinesfalse
             \fi\fi
      \iftw@hlines
        \aftergroup\do@after
      \fi
      \ifnum0=`{\fi}%
}
\def\do@after{\emptyrow[\the\doublerulesep]}
\def\emptyrow{\noalign\bgroup\@ifnextchar[\@emptyrow{\@emptyrow[\z@]}}
\def\@emptyrow[#1]{\no@strut\gdef\add@ins{\vrule \@height\z@ \@depth#1 \@width\z@}\egroup%
\empty@preamble\\
\noalign{\yes@strut\gdef\add@ins{\vrule \@height\z@ \@depth\z@ \@width\z@}}%
}
\def\tabrow#1{\noalign\bgroup\@ifnextchar[{\@tabrow{#1}}{\@tabrow{#1}[]}}
\def\@tabrow#1[#2]{\no@strut\egroup#1\ifx.#2.\\\else\\[#2]\fi\noalign{\yes@strut}}
\def\endpltstabular{\crcr\egroup\egroup \egroup}
\let \csname endpltstabular*\endcsname = \endpltstabular
\def\pltstabular{\let\@halignto\@empty\@pltstabular}
\def\@pltstabular{\leavevmode \bgroup \let\@acol\@tabacol
   \let\@classz\@tabclassz
   \let\@classiv\@tabclassiv \let\\\@tabularcr\@stabarray}
\def\@stabarray{\m@th\@ifnextchar[\@sarray{\@sarray[c]}}
\def\@sarray[#1]#2{%
  \bgroup
  \setbox\@arstrutbox\hbox{%
    \vrule \@height\arraystretch\ht\strutbox
           \@depth\arraystretch \dp\strutbox
           \@width\z@}%
  \@mkpream{#2}%
  \edef\@preamble{%
    \ialign \noexpand\@halignto
      \bgroup \@arstrut \@preamble \tabskip\z@skip \cr}%
  \let\@startpbox\@@startpbox \let\@endpbox\@@endpbox
  \let\tabularnewline\\%
    \let\@sharp##%
    \set@typeset@protect
    \lineskip\z@skip\baselineskip\z@skip
    \@preamble}
\newenvironment{bigtabular}{\begin{pltstabular}}{\end{pltstabular}}
\newlength{\stabLeft}
\newcommand{\bigtableleftpad}{\hspace{\stabLeft}}
\newcommand{\atItemizeStart}[0]{\addtolength{\stabLeft}{\labelsep}
                                \addtolength{\stabLeft}{\labelwidth}}
\newenvironment{SingleColumn}{\begin{list}{}{\topsep=0pt\partopsep=0pt%
\listparindent=0pt\itemindent=0pt\labelwidth=0pt\leftmargin=0pt\rightmargin=0pt%
\itemsep=0pt\parsep=0pt}\item}{\end{list}}
\newcommand{\SCodePreSkip}{\vskip\abovedisplayskip}
\newcommand{\SCodePostSkip}{\vskip\belowdisplayskip}
\newenvironment{SCodeFlow}{\SCodePreSkip\begin{list}{}{\topsep=0pt\partopsep=0pt%
\listparindent=0pt\itemindent=0pt\labelwidth=0pt\leftmargin=2ex\rightmargin=2ex%
\itemsep=0pt\parsep=0pt}\item}{\end{list}\SCodePostSkip}
\newcommand{\SCodeInsetBox}[1]{\setbox1=\hbox{\hbox{\hspace{2ex}#1\hspace{2ex}}}\vbox{\SCodePreSkip\vtop{\box1\SCodePostSkip}}}
\newcommand{\SVInsetPreSkip}{\vskip\abovedisplayskip}
\newcommand{\SVInsetPostSkip}{\vskip\belowdisplayskip}
\newcommand{\titleAndVersionAndAuthors}[3]{\title{#1\\{\normalsize \SVersionBefore{}#2}}\author{#3}\maketitle}
\newcommand{\titleAndEmptyVersionAndAuthors}[3]{\title{#1}\author{#3}\maketitle}
\newcommand{\SAuthor}[1]{#1}
\newcommand{\SAuthorSep}[1]{\qquad}
\newcommand{\SVersionBefore}[1]{Version }
\newcommand{\SNumberOfAuthors}[1]{}
\let\SOriginalthesubsection\thesubsection
\let\SOriginalthesubsubsection\thesubsubsection
\newcommand{\Ssection}[2]{\section[#1]{#2}\let\thesubsection\SOriginalthesubsection}
\newcommand{\Ssubsection}[2]{\subsection[#1]{#2}\let\thesubsubsection\SOriginalthesubsubsection}
\newcommand{\Ssectionstar}[1]{\section*{#1}\renewcommand*\thesubsection{\arabic{subsection}}\setcounter{subsection}{0}}
\newcommand{\Ssectionstarx}[2]{\Ssectionstar{#2}\phantomsection\addcontentsline{toc}{section}{#1}}
\newcounter{GrouperTemp}
\newenvironment{SVerbatim}{}{}
\newcommand{\Snolinkurl}[1]{\nolinkurl{#1}}
\newcommand{\SAuthorinfo}[4]{#1}
\newcommand{\SAuthorPlace}[1]{#1}
\newcommand{\SAuthorEmail}[1]{#1}
\newcommand{\SConferenceInfo}[2]{}
\newcommand{\SCopyrightYear}[1]{}
\newcommand{\SCopyrightData}[1]{}
\newcommand{\Sdoi}[1]{}
\newcommand{\SCategory}[3]{}
\newcommand{\SCategoryPlus}[4]{}
\newcommand{\STerms}[1]{}
\newcommand{\SKeywords}[1]{}
\newenvironment{AutoBibliography}{\begin{small}}{\end{small}}
\newcommand{\Autobibentry}[1]{\hspace{0.05\linewidth}\parbox[t]{0.95\linewidth}{\parindent=-0.05\linewidth#1\vspace{1.0ex}}}
\newcommand{\Autobibtarget}[1]{\phantomsection#1}
\newlength{\ABcollength}
\newcommand{\Autobibref}[1]{#1}
\newcommand{\Legend}[1]{~

                        \hrule width \hsize height .33pt
                        \vspace{4pt}
                        \legend{#1}}
\newcommand{\FigureTarget}[2]{#1}
\newlength{\FigOrigskip}
\newcommand{\FigureSetRef}{\refstepcounter{figure}}
\newenvironment{Figure}{\begin{figure}\FigureSetRef}{\end{figure}}
\newenvironment{FigureMulti}{\begin{figure*}[t!p]\FigureSetRef}{\end{figure*}}
\newenvironment{Centerfigure}{\begin{Xfigure}\centering\item}{\end{Xfigure}}
\newenvironment{Xfigure}{\begin{list}{}{\leftmargin=0pt\topsep=0pt\parsep=\FigOrigskip\partopsep=0pt}}{\end{list}}
\newenvironment{FigureInside}{}{}
\newcommand{\Centertext}[1]{\begin{center}#1\end{center}}
\newcommand{\SColorize}[2]{\color{#1}{#2}}
\newcommand{\SHyphen}[1]{#1}
\newcommand{\inColor}[2]{{\SHyphen{\Scribtexttt{\SColorize{#1}{#2}}}}}
\definecolor{PaleBlue}{rgb}{0.90,0.90,1.0}
\definecolor{LightGray}{rgb}{0.90,0.90,0.90}
\definecolor{CommentColor}{rgb}{0.76,0.45,0.12}
\definecolor{ParenColor}{rgb}{0.52,0.24,0.14}
\definecolor{IdentifierColor}{rgb}{0.15,0.15,0.50}
\definecolor{ResultColor}{rgb}{0.0,0.0,0.69}
\definecolor{ValueColor}{rgb}{0.13,0.55,0.13}
\definecolor{OutputColor}{rgb}{0.59,0.00,0.59}
\newcommand{\RktPn}[1]{\inColor{ParenColor}{#1}}
\newcommand{\RktSym}[1]{\inColor{IdentifierColor}{#1}}
\newcommand{\RktVal}[1]{\inColor{ValueColor}{#1}}
\newcommand{\RktMeta}[1]{\inColor{IdentifierColor}{#1}}
\newcommand{\RktRdr}[1]{\inColor{black}{#1}}
\newenvironment{RktBlk}{}{}
\newcommand{\RBackgroundLabel}[1]{}
\newcommand{\NoteBox}[1]{\footnote{#1}}
\newcommand{\NoteContent}[1]{#1}
\newcommand{\Footnote}[1]{\footnote{#1}}
\newcommand{\FootnoteRef}[1]{}
\newcommand{\FootnoteTarget}[1]{}
\newcommand{\FootnoteContent}[1]{#1}
\newenvironment{FootnoteBlock}{\renewcommand{\noindent}{}}{}
\newcommand{\FootnoteBlockContent}[1]{}
\renewcommand{\titleAndVersionAndAuthors}[3]{\title{#1}#3\maketitle}
\renewcommand{\titleAndEmptyVersionAndAuthors}[3]{\titleAndVersionAndAuthors{#1}{#2}{#3}}
\def\SAuthor#1{\SAutoAuthor#1\SAutoAuthorDone{#1}}
\def\SAutoAuthorDone#1{}
\def\SAutoAuthor{\futurelet\next\SAutoAuthorX}
\def\SAutoAuthorX{\ifx\next\SAuthorinfo \let\Snext\relax \else \let\Snext\SToAuthorDone \fi \Snext}
\def\SToAuthorDone{\futurelet\next\SToAuthorDoneX}
\def\SToAuthorDoneX#1{\ifx\next\SAutoAuthorDone \let\Snext\SAddAuthorInfo \else \let\Snext\SToAuthorDone \fi \Snext}
\newcommand{\SAddAuthorInfo}[1]{\SAuthorinfo{#1}{}{}}
\renewcommand{\SAuthorinfo}[4]{\author{#1}{#2}{#3}{#4}}
\renewcommand{\SAuthorSep}[1]{}
\renewcommand{\SAuthorPlace}[1]{\affiliation{#1}}
\renewcommand{\SAuthorEmail}[1]{\email{#1}}
\renewcommand{\SConferenceInfo}[2]{\conferenceinfo{#1}{#2}}
\renewcommand{\SCopyrightYear}[1]{\copyrightyear{#1}}
\renewcommand{\SCopyrightData}[1]{\copyrightdata{#1}}
\renewcommand{\SCategory}[3]{\category{#1}{#2}{#3}}
\renewcommand{\SCategoryPlus}[4]{\category{#1}{#2}{#3}[#4]}
\renewcommand{\STerms}[1]{\terms{#1}}
\renewcommand{\SKeywords}[1]{\keywords{#1}}
  \theoremstyle{remark}
  \newtheorem*{remark*}{Remark}
  \newtheorem*{note*}{Note}
\LetLtxMacro{\AIf}{\If}
\renewcommand{\FuncSty}[1]{\textnormal{\textsf{#1}}\unskip} 
\DeclareMathOperator{\OrAssign}{\texttt{|}\!\!=}
\newcommand{\Gl}{\lambda}
\newcommand{\tGl}{\textlambda}
\newcommand{\Gg}{\gamma}
\newcommand{\GD}{\Delta}
\newcommand{\lset}[1]{\left\{#1\right\}}
\newcommand{\subst}[3]{#1\!\left.\left[#3 \right/ #2\right]}
\DeclareMathOperator{\lappend}{+\!\!+}
\newcommand{\dom}[1]{\mathsf{dom}\!\left(#1\right)}
\newcommand{\To}{\rightarrow}
\newcommand{\Implies}{\Rightarrow}
\newcommand{\From}{\leftarrow}
\newcommand{\ouro}{\mathsf{unreachable}}
\newcommand{\relE}{\mathcal{E}}
\newcommand{\relV}{\mathcal{V}}
\newcommand{\relG}{\mathcal{G}}
\newcommand{\Lsafep}{\textsc{Safe}}
\newcommand{\safep}[1]{\Lsafep\!\left(#1\right)}
\newcommand{\Lundefp}{\textsc{Undef}}
\newcommand{\undefp}[1]{\Lundefp\!\left(#1\right)}
\newcommand{\Ifk}[0]{\textsf{if}}
\renewcommand{\If}[3]{\textsf{(}\Ifk\,#1\,#2\,#3\textsf{)}}
\newcommand{\Op}[3]{\mathit{#1}\,#2\,#3}
\newcommand{\Func}[2]{\text{\tGl}#1.#2}
\newcommand{\App}[2]{#1\,#2}
\newcommand{\Seqk}[0]{\textsf{begin}}
\newcommand{\Seq}[2]{\textsf{(}\Seqk\,#1\,#2\textsf{)}}
\newcommand{\True}{\mathsf{true}}
\newcommand{\False}{\mathsf{false}}
\newcommand{\error}[1]{\mathsf{error}_{#1}}
\DeclareMathOperator{\arrs}{\mathit{R}_s}
\newcommand{\arrS}{\To_{s}}
\DeclareMathOperator{\arrp}{\mathit{R}_p}
\newcommand{\arrP}{\To_{p}}
\DeclareMathOperator{\arru}{\mathit{R}_{u}}
\newcommand{\arrU}{\To_{u}}
\newcommand{\uarr}{ _{u}\From}
\DeclareMathOperator{\arrm}{\mathit{R}_{m}}
\newcommand{\arrM}{\To_{m}}
\newcommand{\arrC}{\To_{c}}
\newcommand{\wfe}[2]{#1 \mmodels #2}
\newcommand{\wfctx}[3]{#1 \mmodels #3 : #2}
\newcommand{\wfarrm}[3]{#1 \vdash #2 \arrm #3}
\newcommand{\mmodels}{\Vdash}
\newcommand{\mi}[1]{\ensuremath{\mathit{#1}}}
\newcommand{\bnfdef}[0]{\ensuremath{\mathrel{::=}}}
\newcommand{\ctx}[0]{\ensuremath{\mi{C}}}
\newcommand{\hole}[1]{\ensuremath{\left[#1\right]}}
\newcommand{\evalctx}[0]{\ensuremath{\mi{E}}}
\newcommand{\commoncol}[0]{black}    
\newcommand{\col}[2]{\ensuremath{{\color{#1}{#2}}}}
\newcommand{\com}[1]{\mi{\col{\commoncol }{#1}}}
\newcounter{typerule}
\crefname{typerule}{rule}{rules}
\newcommand{\typeruleInt}[5]{
	\def\thetyperule{#1}%
	\refstepcounter{typerule}%
  \ensuremath{\begin{array}{c}\inference{#2}{#3}[#5]\end{array}} 
}
\newcommand{\typerule}[4]{
  \typeruleInt{#1}{#2}{#3}{#4}{\rul{#1}}
}
\Crefname{corollary}{Corollary}{Corollaries}
\Crefname{informal}{Definition}{Definition}
\Crefname{assumption}{Assumption}{Assumptions}
\crefname{assumption}{Assumption}{Assumptions}
\Crefname{property}{Property}{Properties}
\crefname{property}{Property}{Properties}
\Crefname{lstlisting}{Listing}{Listings}
\Crefname{problem}{Problem}{Problems}
\Crefname{equation}{Rule}{Rules}
\newcommand{\rul}[1]{\textsc{#1}}
\DeclareMathOperator{\tteq}{\texttt{=}}
\newcommand{\Syntax}{
  \begin{align*} 
    \com{e} \bnfdef&\ \com{x} \mid \com{c} \mid \Func{x}{e} \mid \Op{op}{e}{e} \mid \App{e}{e} \mid \If{e}{e}{e} \mid \Seq{e}{e}  \mid \ouro \mid \error{k} \\ 
    \com{c} \bnfdef&\ \com{n} \mid \False \mid \True \hbox to .3in{} \com{n}\ \in\ \mbox{numbers} \hbox to .3in{}     \com{op}\ \in\ \mbox{primitive operations}\\
  \end{align*}
}
\newcommand{\StandardReductions}{
      \hfill\fbox{$e\ \arrS\ e\vphantom{\arrs}$}~~\fbox{$e \ \arrs \ e$}

  \begin{align*} 
    \com{\evalctx} \bnfdef&\ \hole{} \mid \Op{op}{\evalctx}{e} \mid \Op{op}{v}{\evalctx} \mid \If{\evalctx}{e}{e}\mid \App{\evalctx}{e} \mid \App{v}{\evalctx} \mid \Seq{\evalctx}{e} \\
  \end{align*}
  \vspace*{-.4in}
  \begin{align*} 
    \com{v}\ \bnfdef\ \com{c} \mid \Func{x}{e}
    \hbox to .4in{}
    \com{a}\ \bnfdef\ \com{c} \mid \Func{x}{e} \mid \ouro \mid \error{k} \\
  \end{align*}
  \vspace*{-.25in}
    \begin{center}
      \typerule{S.1}{}{
        \If{\False}{e_1}{e_2} \ \arrs \ e_2
      }{S.1}
      \typerule{S.2}{v\not\equiv\False}{
        \If{v}{e_1}{e_2}\ \arrs \ e_1
      }{S.2}
      \typerule{S.3}{}{
        (\App{(\Func{x}{e})}{v}) \ \arrs \ \subst{e}{x}{v}
      }{S.3}

      \vspace{0.57em}
      \typerule{S.4}{
      }{
        \Seq{v}{e} \ \arrs \ e
      }{S.4}
      \typerule{S.5}{
        \delta(op, v_1, v_2) \equiv c
      }{
        (\Op{op}{v_1}{v_2}) \ \arrs \ c
      }{S.5}
      \typerule{S.6}{
        e \ \arrs e'
      }{
        \evalctx\hole{e} \ \arrS \ \evalctx\hole{e'}
      }{S.6}

      \vspace{0.75em}
      \typerule{S.7}{
          \evalctx \not \equiv \hole{}
      }{
        \evalctx\hole{\ouro} \ \arrS \ \ouro
      }{S.7}
      \typerule{S.8}{
          \evalctx \not \equiv \hole{}
      }{
        \evalctx\hole{\error{k}} \ \arrS \ \error{k}
      }{S.8}

      \vspace{0.75em}

      \typerule{S.9}{
        v_1 \not \equiv \Func{x}{e}
      }{
        (\App{v_1}{v_2}) \ \arrs \ \error{\beta}
      }{S.9}
      \typerule{S.10}{
        (op, v_1, v_2) \not\in \dom{\delta}
      }{
        (\Op{op}{v_1}{v_2}) \ \arrs \ \error{\delta}
      }{S.10}
    \end{center}
    \vspace*{.1in}
}
\newcommand{\CompilerRulesContext}{
\hfill\fbox{$e \ \arrC \ e\vphantom{\arrp}$}

\vspace*{-.2in}
  \begin{center}
    \typerule{}{e \arrP e'}{
      e \arrC e'
    }{}\typerule{}{e \arrU e'}{
      e \arrC e'
    }{}
  \end{center}

  \begin{align*} 
    \com{\ctx} \bnfdef&\ \hole{} \mid \Op{op}{\ctx}{e} \mid \Op{op}{e}{\ctx} \mid \If{\ctx}{e}{e} \mid \If{e}{\ctx}{e} \mid \If{e}{e}{\ctx} \\
    \mid &\ \App{\ctx}{e} \mid \App{e}{\ctx}  
    \mid \Func{x}{\ctx} \mid \Seq{\ctx}{e} \mid \Seq{e}{\ctx}
  \end{align*}

}
\newcommand{\CompilerRulesP}{
\hfill\fbox{$e\ \arrp\ e\vphantom{\arrp}$}~~\fbox{$e \ \arrP \ e\vphantom{\arrp}$}

  \begin{center}
    \typerule{P.1}{\safep{e}}{
      \Seq{e}{\ouro} \ \arrp \ouro
    }{P.1}\typerule{P.2}{}{
      \Seq{\ouro}{e} \ \arrp \ouro
    }{P.2}
    
    \vspace{0.75em}
    \typerule{P.3}{}{
      (\App{(\Func{x}{\ouro})}{e}) \ \arrp \Seq{e}{\ouro}
    }{P.3}

    \vspace{0.75em}
    \typerule{P.4}{}{
      (\App{\ouro}{e}) \ \arrp \ \ouro
    }{P.4}
    \typerule{P.5}{}{
      (\App{e}{\ouro}) \ \arrp \ \Seq{e}{\ouro}
    }{P.5}

    \typerule{Ctx.P}{e \ \arrp \ e'}{
      \ctx\hole{e} \arrP \ctx\hole{e'}
    }{Ctx.P}
    \typerule{CtxSym.P}{e' \ \arrp \ e}{
      \ctx\hole{e} \arrP \ctx\hole{e'}
    }{CtxSym.P}

  \end{center}

}
\newcommand{\CompilerRulesU}{
\hfill\fbox{$e\ R_u\ e\vphantom{\arrp}$}~~\fbox{$e \ \arrU \ e\vphantom{\arrp}$}

~

    \begin{center}
      \typerule{U.1}{}{
        \If{e_c}{\ouro}{e_f} \ R_u \ \Seq{e_c}{e_f}
      }{U.1}
      \typerule{U.2}{}{
        \If{e_c}{e_t}{\ouro} \ R_u \ \Seq{e_c}{e_t}
      }{U.2}
    \end{center}

  \begin{center}
      \typerule{Ctx.U}{e \ \arru \ e'}{
      \ctx\hole{e} \arrU \ctx\hole{e'}
    }{Ctx.U}
  \end{center}
}
\newcommand{\WellformedTermsSelectedRules}
{
    \AxiomC{$x \in \GD$}
    \RightLabel{\rul{W.Var}}
    \UnaryInfC{$\wfe{\GD}{x}$ }
  \DisplayProof
  \hfill
  \fbox{$\wfe{\GD}{e}$}

  \vspace{0.75em}
    \AxiomC{$\wfe{\GD, x}{e}  $}
    \RightLabel{\rul{W.Lambda}}
    \UnaryInfC{$\wfe{\GD}{\Func{x}{e}}$ }
  \DisplayProof
}
\newcommand{\WellFormedContextsSelectedRules}{
    \AxiomC{$\GD'\supseteq\GD$  }
    \RightLabel{\rul{C.Id}}
    \UnaryInfC{$\wfctx{\GD'}{\GD}{\hole{}}$ }
  \DisplayProof
  \hspace{0.1cm}
    \AxiomC{$\wfctx{\GD',x}{\GD}{C}$  }
    \RightLabel{\rul{C.Lambda}}
    \UnaryInfC{$\wfctx{\GD'}{\GD}{\Func{x}{C}}$ }
  \DisplayProof
  \hfill
  \fbox{$\wfctx{\GD'}{\GD}{C}$}

  \vspace{0.75em}
    \AxiomC{$\wfctx{\GD'}{\GD}{C}$}
    \AxiomC{$\wfe{\GD'}{e_t} $}
    \AxiomC{$\wfe{\GD'}{e_f} $}
    \RightLabel{\rul{C.IfC}}
    \TrinaryInfC{$\wfctx{\GD'}{\GD}{\If{C}{e_t}{e_f}}$ }
  \DisplayProof
}
\newcommand{\ContextPreservingRules}{
  \hfill\fbox{$e \ \arrC \ e\vphantom{\arrs}$}~~\fbox{$\wfarrm{\GD}{e}{e}\vphantom{\arrs}$}~~\fbox{$e \ \arrM \ e\vphantom{\arrs}$} 
  \begin{center}
    \vspace{0.75em}
    \typerule{}{e \arrM e'}{
      e \arrC e'
    }{}
    \vspace{0.75em}
      \typerule{M.1}{\wfe{\GD}{e_t} & \wfe{\GD}{e_f} & v \not\equiv \False}{
      \wfarrm{\GD}{e_t}{\If{v}{e_t}{e_f}}
    }{M.1}
    \typerule{M.2}{\wfe{\GD}{e_f} & \wfe{\GD}{e_t} }{
      \wfarrm{\GD}{e_f}{\If{\False}{e_t}{e_f}}
    }{M.2}

    \vspace{0.75em}
    \typerule{M.3}{\wfe{\GD ,x}{e} & \wfe{\GD}{e'} & \safep{e'}}{
      \wfarrm{\GD}{\subst{e}{x}{e'}}{\App{(\Func{x}{e})}{e'}}
    }{M.3}

    \vspace{0.75em}
    \typerule{M.4}{\wfe{\GD}{e_1} & \wfe{\GD}{e_2} & \safep{e_1}}{
      \wfarrm{\GD}{\Seq{e_1}{e_2}}{e_2}
    }{M.4}
    \typerule{M.5}{\delta(op,v_1,v_2) = c}{
      \wfarrm{\GD}{c}{(\Op{op}{v_1}{v_2})}
    }{M.5}

    \vspace{1em}
    \typerule{Ctx.M}{\wfctx{\cdot}{\GD}{C} & \wfarrm{\GD}{e}{e'}}{
      \ctx\hole{e} \arrM \ctx\hole{e'}
    }{Ctx.M}
    \typerule{CtxSym.M}{\wfctx{\cdot}{\GD}{C} & \wfarrm{\GD}{e'}{e}}{
      \ctx\hole{e} \arrM \ctx\hole{e'}
    }{CtxSym.M}
    \vspace{0.75em}
  \end{center}
}
\newcommand{\AdditionalRMRules}{
  \hfill\fbox{$\wfctx{\GD}{\GD}{\ctx}\vphantom{\arrs}$}~~\fbox{$\wfarrm{\GD}{e}{e}\vphantom{\arrs}$}~~\fbox{$\wfe{\GD}{e}\vphantom{\arrs}$}
\vspace*{-.1in}
  \begin{align*}
    \com{v^+}  \bnfdef&\ \com{x} \mid \com{v} \\
    \com{\evalctx^+} \bnfdef&\ \hole{} \mid \Op{op}{\com{\evalctx^+}}{e} \mid \Op{op}{v^+}{\com{\evalctx^+}} \mid \If{\com{\evalctx^+}}{e}{e}\mid \App{\com{\evalctx^+}}{e} \mid \App{v^+}{\com{\evalctx^+}} \mid \Seq{\com{\evalctx^+}}{e}\\
  \end{align*}
    \begin{center}
    \typerule{M.6}{\wfctx{\GD}{\GD}{\evalctx^+}  & \wfe{\GD}{e_f} & \wfe{\GD}{e_s}}{
    \wfarrm{\GD}{\evalctx^+\!\hole{\Seq{e_f}{e_s}}}{\Seq{e_f}{(\evalctx^+[e_s])}}
    }{M.6}

     \vspace{0.5em}
    \typerule{M.7}{\wfctx{\GD}{\GD}{\evalctx^+}  &  \wfe{\GD}{e_c} & \wfe{\GD}{e_f} & \wfe{\GD}{e_t} }{
      \wfarrm{\GD}{\evalctx^+\!\hole{\If{e_c}{e_f}{e_s}}}{\If{e_c}{(\evalctx^+[e_f])}{(\evalctx^+[e_s])}}
    }{M.7}

    \vspace{0.5em}
    \typerule{M.8}{\wfe{\GD ,x}{e_t} &\wfe{\GD ,x}{e_f}}{
      \wfarrm{\GD}{\If{x}{e_t}{e_f}}{\If{x}{e_t}{(\subst{e_f}{x}{\False})}}
    }{M.8}

    \typerule{M.9}{\wfe{\GD}{e_c} & \wfe{\GD}{e_t} & \wfe{\GD}{e_f}}{
      \wfarrm{\GD}{\If{\If{e_c}{\True}{\False}}{e_t}{e_f}}{\If{e_c}{e_t}{e_f}}
    }{M.9}

    \vspace{0.5em}
    \typerule{M.10}{\wfe{\GD}{e_1} & \wfe{\GD}{e_2} & \wfe{\GD}{e_3}}{
      \wfarrm{\GD} {\If{(x\tteq n_1)}{e_1}{\If{(x\tteq n_2)}{e_1}{e_3}}} {\If{(x\tteq n_2)}{e_1}{\If{(x\tteq n_1)}{e_1}{e_3}}}
    }{M.10}
  \end{center}
}
\newcommand{\Domk}{\succcurlyeq}
\newcommand{\Dom}[2]{#1\Domk{}#2}
\newcommand{\CommentColor}{teal}
\newcommand{\Tmn}{\mathit{tmn}}
\newcommand{\Val}{\mathit{val}}
\newcommand{\GetIncomingValueForBlock}{\mathit{getIncomingValueForBlock}}
\newcommand{\PhiNode}[2]{#1=\mathbf{phi}\,#2}
\newcommand{\Ret}[1]{\mathbf{ret}\,#1}
\newcommand{\CallError}{\mathbf{call}\,\textnormal{\textsf{error}()}}
\newcommand{\BrUncond}[1]{\mathbf{br}\,#1}
\newcommand{\BrCond}[3]{\mathbf{br}\,#1\,#2\,#3}
\definecolor{somewhatgolden}{RGB}{255,194,10}
\newcommand{\hlboxm}[1]{{\setlength{\fboxsep}{2pt}\colorbox{somewhatgolden}{$#1$}}}
\newcommand{\NTVminus}{
  \[\begin{array}{@{}c@{}}
    \begin{array}{@{}rcl@{}}
      \com{f} &\bnfdef & \mathbf{fun}\,\{ \overline{b} \}
        \\
      \com{b} &\bnfdef & (l\; \overline{\phi}\; \overline{c}\; \Tmn)
        \\
      \com{\phi} &\bnfdef& \PhiNode{r}{\overline{[\Val_j,l_j]}^j}
        \\
      \com{c} &\bnfdef & r \coloneq \Val_1 \,op\, \Val_2 \;|\; \hlboxm{\CallError}
        \\
      \Val &\bnfdef & r \;|\; n
        \\
      \Tmn &\bnfdef & \BrCond{\Val}{l_1}{l_2} \;|\; \Ret{\Val}
        \\
        & & |\; \hlboxm{\BrUncond{l}} \;|\; \hlboxm{\ouro}
    \end{array} \vspace{0.5em}\\
    l \in \textnormal{\text{Label}} \hbox to .5in{} r \in \textnormal{\text{Register}}
  \end{array}\]
}
\newcommand{\KH}{\mathcal{H}}
\newcommand{\KHcmd}{\KH_{\mathit{cs}}}
\newcommand{\KHproc}{\KH_{\mathit{proc}}}
\newcommand{\KHjump}{\KH_{\mathit{jump}}}
\newcommand{\KHterm}{\KH_{\mathit{term}}}
\newcommand{\mwhere}{\textnormal{where}}
\newcommand{\KHjumpfn}{
  \KHproc(f) = \Func{x}{\left(\App{\KHjump(f, l_0)\,}{0}\right)}
    \hfill
    \;
    \mwhere\, x\, \text{is fresh and $l_0$ is the label of the entry block}
    \vspace{0.5em}
    \\
  \KHjump(f,l) = \Func{r_1\dots r_n}{\;\KHcmd(f,l,\overline{c})}
    \\
  \quad \quad \mwhere \;
      f[l]=\lfloor l\;\overline{\phi}\;\overline{c}\;\text{$\Tmn$}\rfloor
      \;\text{and}\;
      \overline{\phi}=
      (\PhiNode{r_1}{\overline{[\Val_{j_1},l_{j_1}]}^{j_1}})\ldots
      (\PhiNode{r_n}{\overline{[\Val_{j_n},l_{j_n}]}^{j_n}})
      \\
  \quad \quad
      \text{if $\overline{\phi}=[]$, we introduce a dummy fresh variable $r_0$}
}
\newcommand{\KHcmdfn}{
  \KHcmd : f \; l\; \overline{c} \longrightarrow e
    \\
  \begin{array}{@{}l@{\;}c@{\;}l@{}}
    \KHcmd(f,l,((\CallError),\, \overline{c'})) &=&
      \Seq{\,\error{}}{\,\KHcmd(f,l,\overline{c'})}
      \\
    \KHcmd(f,l,((r \coloneq \Val_1\,\mathit{op}\,\Val_2),\, \overline{c'})) &=&
      (\textsf{let}\;\;([r\;(\Op{\mathit{op}}{\Val_1}{\Val_2})]) \;\; \KHcmd(f,l,\overline{c'}))
      \\
    \KHcmd(f,l,[]) &=&
      (\textsf{letrec}\;([l_1\;\KHjump(f,l_1)]
      \dots
      [l_m\;\KHjump(f,l_m)])
      \\
    &&
      \quad \KHterm(f,l))
      \\
      \multicolumn{3}{@{}l@{}}{
        \quad \quad
        \mwhere\,l_1\dots l_m\,
        \text{are the children of node $l$
        in the dominator tree}
      }
    \end{array}
}
\newcommand{\KHtermfn}{
  \KHterm : f \; l \longrightarrow e
    \\
  \begin{array}{@{}l@{\,}c@{\,}l@{\quad \quad \quad \quad}l@{}}
    \KHterm(f,l) &=&
      \Val
      & \text{if}\;
      f[l]=\lfloor l\; \overline{\phi}\; \overline{c}\; (\Ret{\Val}) \rfloor
      \\
    \KHterm(f,l) &=&
      \ouro
      & \text{if}\;
      f[l]=\lfloor l\; \overline{\phi}\; \overline{c}\; \ouro \rfloor
      \\
    \KHterm(f,l) &=&
      (\App{l'}{\,\overline{\Val'}})
      & \text{if}\; f[l]=\lfloor l\; \overline{\phi}\; \overline{c}\; (\BrUncond{l'}) \rfloor
      \\
      \multicolumn{4}{@{}l@{}}{\begin{array}{@{}l@{}}
      \quad \quad
        \mwhere \;
        f[l']=\lfloor l'\;\overline{\phi'}\;\overline{c'}\;\Tmn'\rfloor, \;
        \overline{\GetIncomingValueForBlock(\phi',l)}=\overline{\Val'}
        \\
      \quad \quad
        \text{when $\overline{\phi'}=[]$, we supply a dummy argument $0$}
      \end{array}}
      \\
    \KHterm(f,l) &=&
      \textnormal{$\If{\Val}{
        (\App{l'}{\,\overline{\Val'}})
      }{
        (\App{l''}{\,\overline{\Val''}})
      }$}
      & \text{if}\;
        f[l]=\lfloor l\; \overline{\phi}\; \overline{c}\; (\BrCond{\Val}{l'}{l''}) \rfloor
      \\
      \multicolumn{4}{@{}l@{}}{\begin{array}{@{}l@{}}
      \quad \quad \mwhere\; 
        f[l']=\lfloor l'\;\overline{\phi'}\;\overline{c'}\;\Tmn'\rfloor, \;
        \overline{\GetIncomingValueForBlock(\phi',l)}=\overline{\Val'}
        \\
      \hphantom{\quad \quad \mwhere\; }
        f[l'']=\lfloor l''\;\overline{\phi''}\;\overline{c''}\;\Tmn''\rfloor, \;
        \overline{\GetIncomingValueForBlock(\phi'',l)}=\overline{\Val''}
        \\
      \quad \quad
        \text{we supply $0$ as a dummy argument if there are no $\phi$ nodes as in the case for $(\BrUncond{l'})$}
      \end{array}}
  \end{array} \vspace{0.5em} \\
  \GetIncomingValueForBlock : \phi \; l \longrightarrow \Val
    \\
  \textnormal{$\GetIncomingValueForBlock(\phi,l) = \Val_i$}
  \hfill \mwhere \;
  \phi = (\PhiNode{r}{[\Val_1,l_1]\dots{}[\Val_n,l_n]})
  \,\text{and}\,
  l=l_i
}
\renewcommand{\SColorize}[2]{#2}
\newcommand{\NamedTheorem}[2]{\begin{theorem}[#1]#2\end{theorem}}
\newcommand{\NamedLemma}[2]{\begin{lemma}[#1]#2\end{lemma}}
\begin{document}
\preDoc

\begin{abstract}In Racket, the LLVM IR, Rust, and other modern languages, programmers and
static analyses can hint, with special annotations, that certain parts of
a program are unreachable. Same as other assumptions about undefined
behavior; the compiler assumes these hints are correct and transforms the
program aggressively.

While compile{-}time transformations due to undefined behavior often perplex
compiler writers and developers, we show that the essence of
transformations due to unreachable code can be distilled in a surprisingly
small set of simple formal rules. Specifically, following the
well{-}established tradition of understanding linguistic phenomena through
calculi, we introduce the first calculus for unreachable. Its
term{-}rewriting rules that take advantage of unreachable  fall into two
groups.  The first group allows the compiler to delete any code downstream
of unreachable, and any effect{-}free code upstream of unreachable.  The
second group consists of rules that eliminate conditional expressions when
one of their branches is unreachable. We show the correctness of the rules
with a novel logical relation, and we examine how
they correspond to transformations due to unreachable in Racket and LLVM.\end{abstract}\titleAndEmptyVersionAndAuthors{A Calculus for Unreachable Code}{}{\SNumberOfAuthors{6}\SAuthor{\SAuthorinfo{Peter Zhong}{}{\SAuthorPlace{\institution{PLT, Northwestern University}\country{U.S.A.}}}{\SAuthorEmail{peterzhong2023@u.northwestern.edu}}}\SAuthorSep{}\SAuthor{\SAuthorinfo{Shu{-}Hung You}{}{\SAuthorPlace{\institution{PLT, Northwestern University}\country{U.S.A.}}}{\SAuthorEmail{shu-hung.you@eecs.northwestern.edu}}}\SAuthorSep{}\SAuthor{\SAuthorinfo{Simone Campanoni}{}{\SAuthorPlace{\institution{Northwestern University}\country{U.S.A.}}}{\SAuthorEmail{simonec@eecs.northwestern.edu}}}\SAuthorSep{}\SAuthor{\SAuthorinfo{Robert Bruce Findler}{}{\SAuthorPlace{\institution{PLT, Northwestern University}\country{U.S.A.}}}{\SAuthorEmail{robby@cs.northwestern.edu}}}\SAuthorSep{}\SAuthor{\SAuthorinfo{Matthew Flatt}{}{\SAuthorPlace{\institution{University of Utah}\country{U.S.A.}}}{\SAuthorEmail{mflatt@cs.utah.edu}}}\SAuthorSep{}\SAuthor{\SAuthorinfo{Christos Dimoulas}{}{\SAuthorPlace{\institution{PLT, Northwestern University}\country{U.S.A.}}}{\SAuthorEmail{chrdimo@northwestern.edu}}}}
\label{t:x28part_x22Ax5fCalculusx5fforx5fUnreachablex5fCodex22x29}

\noindent 

\noindent 

\noindent 

\noindent 

\noindent

\sectionNewpage

\Ssection{The Fleeting Essence of Unreachable}{The Fleeting Essence of Unreachable}\label{t:x28part_x22Thex5fFleetingx5fEssencex5fofx5fUnreachablex22x29}

Modern compilers routinely take advantage of undefined behavior.
Specifically, they assume that a program never performs operations that
exhibit
undefined behavior, and then they use this assumption to optimize
the program. After all, a program that lives outside the confines of defined behavior
is meaningless. Hence, the effect of optimizations on its behavior
is a moot point. While the use of undefined behavior in semantics is controversial,
and with good reason\Autobibref{~(\hyperref[t:x28autobib_x22CVEx2d2014x2d01602014httpsx3ax2fx2fcvex2emitrex2eorgx2fcgix2dbinx2fcvenamex2ecgix3fnamex3dcvex2d2014x2d0160The_Heartbleed_Bugx2e_Retrievedx3a_Julyx2c_2022x2e__Discovered_by_Neel_Mehta_from_Googlex2ex22x29]{\AutobibLink{CVE{-}2014{-}0160}} \hyperref[t:x28autobib_x22CVEx2d2014x2d01602014httpsx3ax2fx2fcvex2emitrex2eorgx2fcgix2dbinx2fcvenamex2ecgix3fnamex3dcvex2d2014x2d0160The_Heartbleed_Bugx2e_Retrievedx3a_Julyx2c_2022x2e__Discovered_by_Neel_Mehta_from_Googlex2ex22x29]{\AutobibLink{2014}})}, undefined behavior is widely used and, as \Autobibref{\hyperref[t:x28autobib_x22Ralf_JungUndefined_Behavior_deserves_a_better_reputation2021httpsx3ax2fx2fblogx2esigplanx2eorgx2f2021x2f11x2f18x2fundefinedx2dbehaviorx2ddeservesx2dax2dbetterx2dreputationx2fRetrievedx3a_Julyx2c_2022x2ex22x29]{\AutobibLink{Jung}}~(\hyperref[t:x28autobib_x22Ralf_JungUndefined_Behavior_deserves_a_better_reputation2021httpsx3ax2fx2fblogx2esigplanx2eorgx2f2021x2f11x2f18x2fundefinedx2dbehaviorx2ddeservesx2dax2dbetterx2dreputationx2fRetrievedx3a_Julyx2c_2022x2ex22x29]{\AutobibLink{2021}})}
eloquently argued, not all forms of undefined behavior are the same.

In this paper, we attempt to bring a sound semantics footing of a particular form of
undefined behavior: \texorpdfstring{\ensuremath{\ouro}}{unreachable}. Programmers and tools that
transform code, such as static analyses, introduce the construct \texorpdfstring{\ensuremath{\ouro}}{unreachable} in
programs to claim that some part of the program can never be evaluated.
Concretely, figure~\hyperref[t:x28counter_x28x22figurex22_x22figx3atwox2dflavorsx22x29x29]{\FigureRef{1}{t:x28counter_x28x22figurex22_x22figx3atwox2dflavorsx22x29x29}} depicts two code snippets that
demonstrate \texorpdfstring{\ensuremath{\ouro}}{unreachable} in two different linguistic contexts. In the
Racket snippet, \RktSym{unsafe{-}assert{-}unreachable} signals to the Racket
compiler that the else{-}branch of the \RktSym{if} expression can never be
reached. Hence,
\RktSym{unsafe{-}assert{-}unreachable} gives the
Racket compiler the freedom to optimize away the entire \RktSym{if} expression,
effectively asserting that \RktSym{x} is always a pair.
In the
same spirit, LLVM{'}s \RktSym{unreachable}\RktMeta{} instruction informs the compiler that the
control{-}flow of a program never reaches that instruction.
Therefore, in the LLVM IR snippet in
figure~\hyperref[t:x28counter_x28x22figurex22_x22figx3atwox2dflavorsx22x29x29]{\FigureRef{1}{t:x28counter_x28x22figurex22_x22figx3atwox2dflavorsx22x29x29}}, all three basic blocks can be
collapsed into one. In other words, same as for other forms of undefined
behavior, \texorpdfstring{\ensuremath{\ouro}}{unreachable} describes an assumption about the evaluation of a
program that a compiler can take advantage of while optimizing the
program. If the assumption is true, then the program is transformed to an
equivalent, possibly more performant, new version. If the assumption is
false, then the program is meaningless to begin with and the semantics
of the language gives the compiler a free hand to produce any code whatsoever.

\begin{Figure}\begin{Centerfigure}\begin{FigureInside}\begin{bigtabular}{@{\bigtableleftpad}c@{}c@{}c@{}}
\SCodeInsetBox{\begin{RktBlk}\begin{tabular}[c]{@{}l@{}}
\hbox{\RktPn{(}\RktSym{define}\mbox{\hphantom{\Scribtexttt{x}}}\RktPn{(}\RktSym{func}\mbox{\hphantom{\Scribtexttt{x}}}\RktSym{x}\RktPn{)}} \\
\hbox{\mbox{\hphantom{\Scribtexttt{xx}}}\RktPn{(}\RktSym{if}\mbox{\hphantom{\Scribtexttt{x}}}\RktPn{(}\RktSym{pair{\hbox{\texttt{?}}}}\mbox{\hphantom{\Scribtexttt{x}}}\RktSym{x}\RktPn{)}} \\
\hbox{\mbox{\hphantom{\Scribtexttt{xxxxxx}}}\RktPn{(}\RktSym{car}\mbox{\hphantom{\Scribtexttt{x}}}\RktSym{x}\RktPn{)}} \\
\hbox{\mbox{\hphantom{\Scribtexttt{xxxxxx}}}\RktPn{(}\RktSym{unsafe{-}assert{-}unreachable}\RktPn{)}\RktPn{)}\RktPn{)}}\end{tabular}\end{RktBlk}} &
\hbox{\mbox{\hphantom{\Scribtexttt{xxxxxx}}}} &
\begin{SVerbatim}\begin{tabular}[c]{@{}l@{}}
\begin{minipage}[c]{1.0\linewidth}
\Scribtexttt{define i32 @func(i32 \%0) {\char`\{}} \end{minipage}
 \\
\begin{minipage}[c]{1.0\linewidth}
\Scribtexttt{1{\hbox{\texttt{:}}} \%2 = icmp sgt i32 \%0, 0} \end{minipage}
 \\
\begin{minipage}[c]{1.0\linewidth}
\Scribtexttt{}\mbox{\hphantom{\Scribtexttt{xxx}}}\Scribtexttt{br i1 \%2, label \%3, label \%4} \end{minipage}
 \\
\begin{minipage}[c]{1.0\linewidth}
\Scribtexttt{3{\hbox{\texttt{:}}} ret i32 0}\mbox{\hphantom{\Scribtexttt{xxxxxxxxx}}}\Scribtexttt{; preds = \%1} \end{minipage}
 \\
\begin{minipage}[c]{1.0\linewidth}
\Scribtexttt{4{\hbox{\texttt{:}}} unreachable}\mbox{\hphantom{\Scribtexttt{xxxxxxx}}}\Scribtexttt{; preds = \%1} \end{minipage}
 \\
\begin{minipage}[c]{1.0\linewidth}
\Scribtexttt{{\char`\}}} \end{minipage}
\end{tabular}\end{SVerbatim}\end{bigtabular}\end{FigureInside}\end{Centerfigure}

\Centertext{\Legend{\FigureTarget{\label{t:x28counter_x28x22figurex22_x22figx3atwox2dflavorsx22x29x29}\textsf{Fig.}~\textsf{1}. }{t:x28counter_x28x22figurex22_x22figx3atwox2dflavorsx22x29x29}\textsf{Two Flavors of \texorpdfstring{\ensuremath{\ouro}}{unreachable}: Racket (Left) and
         LLVM IR (Right)}}}\end{Figure}

The discussion so far suggests that the phrase {``}the meaning of \texorpdfstring{\ensuremath{\ouro}}{unreachable}{''} is an
oxymoron. However, we show that staple tools of PL
semanticists, such as calculi and equational theories, can faithfully describe
the interplay between \texorpdfstring{\ensuremath{\ouro}}{unreachable}, compile{-}time transformations and
program behavior, and hence, reveal the essence of \texorpdfstring{\ensuremath{\ouro}}{unreachable}
through the way compilers use it.

Standing on the shoulders of fifty years of PL
tradition\Autobibref{~(\hyperref[t:x28autobib_x22Godon_Dx2e_Plotkinx7bCallx2dbyx2dnamex2c_callx2dbyx2dvalue_and_the_x3bbx2dcalculusTheoretical_Computer_Sciencex281x282x29x29x2c_ppx2e_125x2dx2d1591975x22x29]{\AutobibLink{Plotkin}} \hyperref[t:x28autobib_x22Godon_Dx2e_Plotkinx7bCallx2dbyx2dnamex2c_callx2dbyx2dvalue_and_the_x3bbx2dcalculusTheoretical_Computer_Sciencex281x282x29x29x2c_ppx2e_125x2dx2d1591975x22x29]{\AutobibLink{1975}})}, we turn to the $\lambda$ calculus
and we construct the first calculus that captures the essence of
\texorpdfstring{\ensuremath{\ouro}}{unreachable}. In our \texorpdfstring{\ensuremath{\ouro}}{unreachable} calculus, we describe the legal basic steps that
a compiler performs when transforming a program due to
\texorpdfstring{\ensuremath{\ouro}}{unreachable} as local term{-}rewriting rules. At a high level,
there are two groups of rules: those that eliminate \texorpdfstring{\ensuremath{\ouro}}{unreachable}
and those that propagate it through a program. The first group of rules dictates how a
compiler can discard computations around \texorpdfstring{\ensuremath{\ouro}}{unreachable} and these are the
the same as rules that
discard computations around expressions that are guaranteed to signal errors.
The second group describes how a
compiler can leverage \texorpdfstring{\ensuremath{\ouro}}{unreachable} to eliminate control{-}flow paths that
cannot
be reached. This group consists of just a single rule and its symmetric
counterpart, showing how \Scribtexttt{if} expressions and \Scribtexttt{unreachable} interact.

There are two questions about the \texorpdfstring{\ensuremath{\ouro}}{unreachable}
calculus. The first question is correctness. Specifically, do the
rules of the calculus preserve program semantics {---} as induced by the
standard reduction of the calculus {---} under the assumption
that evaluation of the program never reaches \texorpdfstring{\ensuremath{\ouro}}{unreachable}? The answer is affirmative but
the proof is subtle because the assumption about
\texorpdfstring{\ensuremath{\ouro}}{unreachable} is not compositional. Even when it is true that a
particular program does not reach \texorpdfstring{\ensuremath{\ouro}}{unreachable} no
matter its input, there may be subexpressions that, if they
were to be evaluated directly, would reach \texorpdfstring{\ensuremath{\ouro}}{unreachable}.
Worse, the guarantee that ensures the entire program cannot
reach \texorpdfstring{\ensuremath{\ouro}}{unreachable} may be arbitrarily complex. We show how
to handle this complexity via the construction of a novel
logical relation, while keeping the rules of the calculus simple and
intuitive.

The second question is about generality. Does the simplicity of the
calculus keep it from capturing the optimizations in production compilers?
To answer this question, we focus on the two compilers: Racket{'}s and
LLVM. For Racket, we performed an audit of the Racket codebase, which
revealed that all of the rewrites that Racket performs because of
\texorpdfstring{\ensuremath{\ouro}}{unreachable} map to the rules of our calculus. For LLVM, we cannot draw a
similarly direct conclusion because LLVM is too different from our $\lambda${-}based
calculus. Instead, inspired by an audit of the LLVM
codebase and additions to LLVM codebase at the same time as the addition
of \texorpdfstring{\ensuremath{\ouro}}{unreachable} we design a transformation for a subset of LLVM IR (technically
Vminus\Autobibref{~(\hyperref[t:x28autobib_x22_Jianzhou_Zhaox2c__Santosh_Nagarakattex2c__Milo_Mx2eKx2e_Martinx2c_and__Steve_ZdancewicFormal_Verification_of_SSAx2dBased_Optimizations_for_LLVMIn_Procx2e_ACM_Conference_on_Programming_Language_Design_and_Implementationx2c_PLDI_x2713x2c_ppx2e_175x2dx2d1862013httpsx3ax2fx2fdoix2eorgx2f10x2e1145x2f2491956x2e2462164x22x29]{\AutobibLink{Zhao et al\Sendabbrev{.}}} \hyperref[t:x28autobib_x22_Jianzhou_Zhaox2c__Santosh_Nagarakattex2c__Milo_Mx2eKx2e_Martinx2c_and__Steve_ZdancewicFormal_Verification_of_SSAx2dBased_Optimizations_for_LLVMIn_Procx2e_ACM_Conference_on_Programming_Language_Design_and_Implementationx2c_PLDI_x2713x2c_ppx2e_175x2dx2d1862013httpsx3ax2fx2fdoix2eorgx2f10x2e1145x2f2491956x2e2462164x22x29]{\AutobibLink{2013}})}) that captures some of the ways
that LLVM exploits \texorpdfstring{\ensuremath{\ouro}}{unreachable}.
To do so, we compose our transformation with a version of
\Autobibref{\hyperref[t:x28autobib_x22_Richard_Ax2e_KelseyA_Correspondence_between_Continuation_Passing_Style_and_Static_Single_Assignment_FormIn_Procx2e_Papers_from_the_1995_ACM_SIGPLAN_Workshop_on_Intermediate_Representationsx2c_IR_x2795x2c_ppx2e_13x2dx2d221995httpsx3ax2fx2fdoix2eorgx2f10x2e1145x2f202530x2e202532x22x29]{\AutobibLink{Kelsey}}~(\hyperref[t:x28autobib_x22_Richard_Ax2e_KelseyA_Correspondence_between_Continuation_Passing_Style_and_Static_Single_Assignment_FormIn_Procx2e_Papers_from_the_1995_ACM_SIGPLAN_Workshop_on_Intermediate_Representationsx2c_IR_x2795x2c_ppx2e_13x2dx2d221995httpsx3ax2fx2fdoix2eorgx2f10x2e1145x2f202530x2e202532x22x29]{\AutobibLink{1995}})}{'}s functions that translate SSA to and from the $\lambda$ calculus.

Overall, this paper identifies \texorpdfstring{\ensuremath{\ouro}}{unreachable} as an undefined
behavior whose semantics can, surprisingly, be captured via a simple set of rules.
Even better, these rules match how
compiler writers conceptualize optimization as local, context{-}insensitive rewrites.
In short, we see this paper as a first step towards a systematic demystification of
undefined behavior.

The remainder of the paper is organized as follows.
\ChapRefUC{\SectionNumberLink{t:x28part_x22secx3aexamplesx22x29}{2}}{The Essence of Unreachable, by Example} explains the essence of \texorpdfstring{\ensuremath{\ouro}}{unreachable}, and of our
calculus, through examples. \ChapRefUC{\SectionNumberLink{t:x28part_x22secx3acalculusx22x29}{3}}{The Essence of Unreachable, Formally} presents the
\texorpdfstring{\ensuremath{\ouro}}{unreachable} calculus while \ChapRef{\SectionNumberLink{t:x28part_x22secx3acorrectnessx2dcompilerx22x29}{4}}{The Correctness of the Compile{-}Time Reductions} proves its
correctness. \ChapRefUC{\SectionNumberLink{t:x28part_x22secx3aconventionalx2dtransformationsx22x29}{5}}{Extending the Calculus
with Additional Rules} discusses the
extension of the calculus with arbitrary semantics{-}preserving rules that
are useful when proving equivalences with the calculus.
\ChapRefUC{\SectionNumberLink{t:x28part_x22secx3aotherx2dubx22x29}{6}}{Beyond Unreachable Code} shows how our calculus can shed light to other
forms of undefined behavior beyond \texorpdfstring{\ensuremath{\ouro}}{unreachable} as evidence for our
approach{'}s value in understanding undefined behavior. Sections 7 and 8
connect our calculus to Racket and LLVM, respectively. The last two
sections discuss related work and conclude.

\sectionNewpage

\Ssection{The Essence of Unreachable, by Example}{The Essence of Unreachable, by Example}\label{t:x28part_x22secx3aexamplesx22x29}

From a compiler engineering perspective, the value of \texorpdfstring{\ensuremath{\ouro}}{unreachable} is that
it contributes to the clean separation between the different passes of the
compilation process. For example, a static analysis pass can deduce that
some part of a program is unreachable, and mark it as such with
\texorpdfstring{\ensuremath{\ouro}}{unreachable} to facilitate a subsequent pass that aims to optimize the
program. Given that the ability of compilers to reliably detect
unreachable code is limited, programmers can also take advantage of the
same mechanism to share with the compiler facts about whether some
parts of a program are unreachable. Once the programmer (or the analysis)
has communicated that some code is unreachable, the compiler can take
advantage of this information in various different ways.

The simplest such way is by treating \texorpdfstring{\ensuremath{\ouro}}{unreachable} as erroneous code, or more
generally, as code that never returns. Specifically, the compiler can erase
terminating, effect{-}free computations that happens before \texorpdfstring{\ensuremath{\ouro}}{unreachable}, and
all computation that is guaranteed to happen after it. For example, the
Racket compiler can simplify the expression

\begin{SCodeFlow}\begin{RktBlk}\begin{SingleColumn}\RktPn{(}\RktSym{begin}\mbox{\hphantom{\Scribtexttt{x}}}\RktPn{(}\RktSym{+}\mbox{\hphantom{\Scribtexttt{x}}}\RktSym{x}\mbox{\hphantom{\Scribtexttt{x}}}\RktVal{1}\RktPn{)}

\mbox{\hphantom{\Scribtexttt{xxxxxxx}}}\RktPn{(}\RktSym{unsafe{-}assert{-}unreachable}\RktPn{)}

\mbox{\hphantom{\Scribtexttt{xxxxxxx}}}\RktPn{(}\RktSym{+}\mbox{\hphantom{\Scribtexttt{x}}}\RktSym{y}\mbox{\hphantom{\Scribtexttt{x}}}\RktVal{2}\RktPn{)}\RktPn{)}\end{SingleColumn}\end{RktBlk}\end{SCodeFlow}

\noindent to just \RktPn{(}\RktSym{unsafe{-}assert{-}unreachable}\RktPn{)}, where
\RktPn{(}\RktSym{unsafe{-}assert{-}unreachable}\RktPn{)} is a syntactic form that programmers
use to declare that some part of an expression is
unreachable.\NoteBox{\NoteContent{Racket also offers \RktSym{assert{-}unreachable}, which
is guaranteed to throw an error when evaluated. Similar to
Racket, Rust also has two variants, one that allows
aggressive optimization and one that is safe, for testing
and debugging.}}

Treating \texorpdfstring{\ensuremath{\ouro}}{unreachable} as erroneous code, however, misses optimization
opportunities because it ignores the semantics of unreachable code. To
see how, consider the following Racket code:

\begin{SCodeFlow}\begin{RktBlk}\begin{SingleColumn}\RktPn{(}\RktSym{define}\mbox{\hphantom{\Scribtexttt{x}}}\RktPn{(}\RktSym{m{-}dist}\mbox{\hphantom{\Scribtexttt{x}}}\RktSym{p}\RktPn{)}

\mbox{\hphantom{\Scribtexttt{xx}}}\RktPn{(}\RktSym{match}\mbox{\hphantom{\Scribtexttt{x}}}\RktSym{p}

\mbox{\hphantom{\Scribtexttt{xxxx}}}\RktPn{[}\RktPn{(}\RktSym{cons}\mbox{\hphantom{\Scribtexttt{x}}}\RktSym{x}\mbox{\hphantom{\Scribtexttt{x}}}\RktSym{y}\RktPn{)}\mbox{\hphantom{\Scribtexttt{x}}}\RktPn{(}\RktSym{+}\mbox{\hphantom{\Scribtexttt{x}}}\RktPn{(}\RktSym{abs}\mbox{\hphantom{\Scribtexttt{x}}}\RktSym{x}\RktPn{)}\mbox{\hphantom{\Scribtexttt{x}}}\RktPn{(}\RktSym{abs}\mbox{\hphantom{\Scribtexttt{x}}}\RktSym{y}\RktPn{)}\RktPn{)}\RktPn{]}\RktPn{)}\RktPn{)}\end{SingleColumn}\end{RktBlk}\end{SCodeFlow}

\noindent The \RktSym{m{-}distance} function computes the Manhattan distance for a point represented
as a pair of (real) numbers. Its body consists of use of the
\RktSym{match} which roughly expands to a conditional and appropriate uses
of accessor functions:

\begin{SCodeFlow}\begin{RktBlk}\begin{SingleColumn}\RktPn{(}\RktSym{define}\mbox{\hphantom{\Scribtexttt{x}}}\RktPn{(}\RktSym{m{-}dist}\mbox{\hphantom{\Scribtexttt{x}}}\RktSym{p}\RktPn{)}

\mbox{\hphantom{\Scribtexttt{xx}}}\RktPn{(}\RktSym{if}\mbox{\hphantom{\Scribtexttt{x}}}\RktPn{(}\RktSym{pair{\hbox{\texttt{?}}}}\mbox{\hphantom{\Scribtexttt{x}}}\RktSym{p}\RktPn{)}

\mbox{\hphantom{\Scribtexttt{xxxxxx}}}\RktPn{(}\RktSym{+}\mbox{\hphantom{\Scribtexttt{x}}}\RktPn{(}\RktSym{abs}\mbox{\hphantom{\Scribtexttt{x}}}\RktPn{(}\RktSym{car}\mbox{\hphantom{\Scribtexttt{x}}}\RktSym{x}\RktPn{)}\RktPn{)}\mbox{\hphantom{\Scribtexttt{x}}}\RktPn{(}\RktSym{abs}\mbox{\hphantom{\Scribtexttt{x}}}\RktPn{(}\RktSym{cdr}\mbox{\hphantom{\Scribtexttt{x}}}\RktSym{x}\RktPn{)}\RktPn{)}\RktPn{)}

\mbox{\hphantom{\Scribtexttt{xxxxxx}}}\RktPn{(}\RktSym{error}\mbox{\hphantom{\Scribtexttt{x}}}\RktVal{{\textquotesingle}}\RktVal{match{-}failure}\RktPn{)}\RktPn{)}\RktPn{)}\end{SingleColumn}\end{RktBlk}\end{SCodeFlow}

In general the two accessor functions in the above snippet, \RktSym{car} and
\RktSym{cdr}, come with checks that defensively inspect their argument to
make sure it is a pair. However, the Racket
compiler can deduce that the uses of the
accessor functions in the snippet are guarded by the \RktSym{pair{\hbox{\texttt{?}}}} (on
a variable that isn{'}t modified). Hence, the compiler replaces them with
their unsafe variants:

\begin{SCodeFlow}\begin{RktBlk}\begin{SingleColumn}\RktPn{(}\RktSym{define}\mbox{\hphantom{\Scribtexttt{x}}}\RktPn{(}\RktSym{m{-}dist}\mbox{\hphantom{\Scribtexttt{x}}}\RktSym{p}\RktPn{)}

\mbox{\hphantom{\Scribtexttt{xx}}}\RktPn{(}\RktSym{if}\mbox{\hphantom{\Scribtexttt{x}}}\RktPn{(}\RktSym{pair{\hbox{\texttt{?}}}}\mbox{\hphantom{\Scribtexttt{x}}}\RktSym{p}\RktPn{)}

\mbox{\hphantom{\Scribtexttt{xxxxxx}}}\RktPn{(}\RktSym{+}\mbox{\hphantom{\Scribtexttt{x}}}\RktPn{(}\RktSym{abs}\mbox{\hphantom{\Scribtexttt{x}}}\RktPn{(}\RktSym{unsafe{-}car}\mbox{\hphantom{\Scribtexttt{x}}}\RktSym{p}\RktPn{)}\RktPn{)}\mbox{\hphantom{\Scribtexttt{x}}}\RktPn{(}\RktSym{abs}\mbox{\hphantom{\Scribtexttt{x}}}\RktPn{(}\RktSym{unsafe{-}cdr}\mbox{\hphantom{\Scribtexttt{x}}}\RktSym{p}\RktPn{)}\RktPn{)}\RktPn{)}

\mbox{\hphantom{\Scribtexttt{xxxxxx}}}\RktPn{(}\RktSym{error}\mbox{\hphantom{\Scribtexttt{x}}}\RktVal{{\textquotesingle}}\RktVal{match{-}failure}\RktPn{)}\RktPn{)}\RktPn{)}\end{SingleColumn}\end{RktBlk}\end{SCodeFlow}

And that{'}s how far the Racket compiler can go on its own.
However, if the author of the code knows that \RktSym{m{-}dist}
is only applied to points, i.e., pairs, then,
they can convey this information to the Racket compiler
by adding to the \RktSym{match} expression in the body of \RktSym{m{-}dist}
a catch{-}all case that uses
\RktSym{unsafe{-}assert{-}unreachable}, the Racket variant of \texorpdfstring{\ensuremath{\ouro}}{unreachable}
that has undefined behavior, giving the compiler the license to \emph{assume}
that it is never reached. Rust has a similar construct, \Scribtexttt{unreachable{\char`\_}unchecked},
which also has undefined behavior if it is ever reached.

\begin{SCodeFlow}\begin{RktBlk}\begin{SingleColumn}\RktPn{(}\RktSym{define}\mbox{\hphantom{\Scribtexttt{x}}}\RktPn{(}\RktSym{m{-}dist}\mbox{\hphantom{\Scribtexttt{x}}}\RktSym{p}\RktPn{)}

\mbox{\hphantom{\Scribtexttt{xx}}}\RktPn{(}\RktSym{match}\mbox{\hphantom{\Scribtexttt{x}}}\RktSym{p}

\mbox{\hphantom{\Scribtexttt{xxxx}}}\RktPn{[}\RktPn{(}\RktSym{cons}\mbox{\hphantom{\Scribtexttt{x}}}\RktSym{x}\mbox{\hphantom{\Scribtexttt{x}}}\RktSym{y}\RktPn{)}\mbox{\hphantom{\Scribtexttt{x}}}\RktPn{(}\RktSym{+}\mbox{\hphantom{\Scribtexttt{x}}}\RktPn{(}\RktSym{abs}\mbox{\hphantom{\Scribtexttt{x}}}\RktSym{x}\RktPn{)}\mbox{\hphantom{\Scribtexttt{x}}}\RktPn{(}\RktSym{abs}\mbox{\hphantom{\Scribtexttt{x}}}\RktSym{y}\RktPn{)}\RktPn{)}\RktPn{]}

\mbox{\hphantom{\Scribtexttt{xxxx}}}\RktPn{[}\RktSym{{\char`\_}}\mbox{\hphantom{\Scribtexttt{x}}}\RktPn{(}\RktSym{unsafe{-}assert{-}unreachable}\RktPn{)}\RktPn{]}\RktPn{)}\RktPn{)}\end{SingleColumn}\end{RktBlk}\end{SCodeFlow}

As in the steps as above, the Racket compiler
simplifies the body of \RktSym{m{-}dist} to:

\begin{SCodeFlow}\begin{RktBlk}\begin{SingleColumn}\RktPn{(}\RktSym{define}\mbox{\hphantom{\Scribtexttt{x}}}\RktPn{(}\RktSym{m{-}dist}\mbox{\hphantom{\Scribtexttt{x}}}\RktSym{p}\RktPn{)}

\mbox{\hphantom{\Scribtexttt{xx}}}\RktPn{(}\RktSym{if}\mbox{\hphantom{\Scribtexttt{x}}}\RktPn{(}\RktSym{pair{\hbox{\texttt{?}}}}\mbox{\hphantom{\Scribtexttt{x}}}\RktSym{p}\RktPn{)}

\mbox{\hphantom{\Scribtexttt{xxxxxx}}}\RktPn{(}\RktSym{+}\mbox{\hphantom{\Scribtexttt{x}}}\RktPn{(}\RktSym{abs}\mbox{\hphantom{\Scribtexttt{x}}}\RktPn{(}\RktSym{unsafe{-}car}\mbox{\hphantom{\Scribtexttt{x}}}\RktSym{p}\RktPn{)}\RktPn{)}\mbox{\hphantom{\Scribtexttt{x}}}\RktPn{(}\RktSym{abs}\mbox{\hphantom{\Scribtexttt{x}}}\RktPn{(}\RktSym{unsafe{-}cdr}\mbox{\hphantom{\Scribtexttt{x}}}\RktSym{p}\RktPn{)}\RktPn{)}\RktPn{)}

\mbox{\hphantom{\Scribtexttt{xxxxxx}}}\RktPn{(}\RktSym{unsafe{-}assert{-}unreachable}\RktPn{)}\RktPn{)}\RktPn{)}\end{SingleColumn}\end{RktBlk}\end{SCodeFlow}

Since, the else{-}branch of the conditional is marked as unreachable,
the Racket compiler can assume that in a program that uses
\RktSym{m{-}dist}, \RktSym{m{-}dist} consumes only pairs. Hence,
the compiler can transform the \RktSym{m{-}dist} to eliminate the
unreachable branch and replace it with a \RktSym{begin} expression:

\begin{SCodeFlow}\begin{RktBlk}\begin{SingleColumn}\RktPn{(}\RktSym{define}\mbox{\hphantom{\Scribtexttt{x}}}\RktPn{(}\RktSym{m{-}dist}\mbox{\hphantom{\Scribtexttt{x}}}\RktSym{p}\RktPn{)}

\mbox{\hphantom{\Scribtexttt{xx}}}\RktPn{(}\RktSym{begin}\mbox{\hphantom{\Scribtexttt{x}}}\RktPn{(}\RktSym{pair{\hbox{\texttt{?}}}}\mbox{\hphantom{\Scribtexttt{x}}}\RktSym{p}\RktPn{)}

\mbox{\hphantom{\Scribtexttt{xxxxxxxxx}}}\RktPn{(}\RktSym{+}\mbox{\hphantom{\Scribtexttt{x}}}\RktPn{(}\RktSym{abs}\mbox{\hphantom{\Scribtexttt{x}}}\RktPn{(}\RktSym{unsafe{-}car}\mbox{\hphantom{\Scribtexttt{x}}}\RktSym{p}\RktPn{)}\RktPn{)}\mbox{\hphantom{\Scribtexttt{x}}}\RktPn{(}\RktSym{abs}\mbox{\hphantom{\Scribtexttt{x}}}\RktPn{(}\RktSym{unsafe{-}cdr}\mbox{\hphantom{\Scribtexttt{x}}}\RktSym{p}\RktPn{)}\RktPn{)}\RktPn{)}\RktPn{)}\RktPn{)}\end{SingleColumn}\end{RktBlk}\end{SCodeFlow}

In short, the Racket compiler applies one of the fundamental rules of \texorpdfstring{\ensuremath{\ouro}}{unreachable}:

\begin{SCodeFlow}\begin{RktBlk}\begin{SingleColumn}\RktPn{(}\RktSym{if}\mbox{\hphantom{\Scribtexttt{x}}}\RktSym{e{\char`\_}1}\mbox{\hphantom{\Scribtexttt{x}}}\RktSym{e{\char`\_}2}\mbox{\hphantom{\Scribtexttt{x}}}\RktPn{(}\RktSym{unsafe{-}assert{-}unreachable}\RktPn{)}\RktPn{)}

\RktSym{=}

\RktPn{(}\RktSym{begin}\mbox{\hphantom{\Scribtexttt{x}}}\RktSym{e{\char`\_}1}\mbox{\hphantom{\Scribtexttt{x}}}\RktSym{e{\char`\_}2}\RktPn{)}\end{SingleColumn}\end{RktBlk}\end{SCodeFlow}

The use of the rule unlocks one more opportunity for simplifying
\RktPn{(}\RktSym{m{-}dist}\RktPn{)}. Since \RktSym{pair{\hbox{\texttt{?}}}} has no side effects,
the Racket compiler drops it to obtain the final version of \RktSym{m{-}dist}:

\begin{SCodeFlow}\begin{RktBlk}\begin{SingleColumn}\RktPn{(}\RktSym{define}\mbox{\hphantom{\Scribtexttt{x}}}\RktPn{(}\RktSym{m{-}dist}\mbox{\hphantom{\Scribtexttt{x}}}\RktSym{p}\RktPn{)}

\mbox{\hphantom{\Scribtexttt{xx}}}\RktPn{(}\RktSym{+}\mbox{\hphantom{\Scribtexttt{x}}}\RktPn{(}\RktSym{abs}\mbox{\hphantom{\Scribtexttt{x}}}\RktPn{(}\RktSym{unsafe{-}car}\mbox{\hphantom{\Scribtexttt{x}}}\RktSym{p}\RktPn{)}\RktPn{)}\mbox{\hphantom{\Scribtexttt{x}}}\RktPn{(}\RktSym{abs}\mbox{\hphantom{\Scribtexttt{x}}}\RktPn{(}\RktSym{unsafe{-}cdr}\mbox{\hphantom{\Scribtexttt{x}}}\RktSym{p}\RktPn{)}\RktPn{)}\RktPn{)}\RktPn{)}\end{SingleColumn}\end{RktBlk}\end{SCodeFlow}

Surprisingly, the two examples in this section are sufficient to describe the full
 essence of the compile{-}time treatment of \texorpdfstring{\ensuremath{\ouro}}{unreachable}: either compilers
 treat \texorpdfstring{\ensuremath{\ouro}}{unreachable} same as an error (first example), or as the
 justification for simplifying the branches of conditionals (second
 example). The following section gives a formal account of this insight
 with the unreachable calculus and its compile{-}time reductions.

\sectionNewpage

\Ssection{The Essence of Unreachable, Formally}{The Essence of Unreachable, Formally}\label{t:x28part_x22secx3acalculusx22x29}

Our calculus extends the call{-}by{-}value $\lambda$ calculus with a
construct \texorpdfstring{\ensuremath{\ouro}}{unreachable} and \texorpdfstring{\ensuremath{\ouro}}{unreachable}{-}specific term rewriting rules.
Specifically, there are two groups of such rules: the first, which extends the standard
reduction of the call{-}by{-}value $\lambda$ calculus, corresponds
to the run time behavior of  \texorpdfstring{\ensuremath{\ouro}}{unreachable}; the second corresponds
to legal compile{-}time transformations of \texorpdfstring{\ensuremath{\ouro}}{unreachable}.

\Ssubsection{The Unreachable Calculus and its Standard Reduction}{The Unreachable Calculus and its Standard Reduction}\label{t:x28part_x22Thex5fUnreachablex5fCalculusx5fandx5fitsx5fStandardx5fReductionx22x29}

\begin{Figure}\begin{Centerfigure}\begin{FigureInside}\Syntax\end{FigureInside}\end{Centerfigure}

\Centertext{\Legend{\FigureTarget{\label{t:x28counter_x28x22figurex22_x22figx3asyntaxx22x29x29}\textsf{Fig.}~\textsf{2}. }{t:x28counter_x28x22figurex22_x22figx3asyntaxx22x29x29}\textsf{The Syntax of the \texorpdfstring{\ensuremath{\ouro}}{unreachable} Calculus}}}\end{Figure}

\begin{Figure}\begin{Centerfigure}\begin{FigureInside}\StandardReductions\end{FigureInside}\end{Centerfigure}

\Centertext{\Legend{\FigureTarget{\label{t:x28counter_x28x22figurex22_x22figx3astandardx2dreductionx22x29x29}\textsf{Fig.}~\textsf{3}. }{t:x28counter_x28x22figurex22_x22figx3astandardx2dreductionx22x29x29}\textsf{The Standard Reduction of the \texorpdfstring{\ensuremath{\ouro}}{unreachable} Calculus}}}\end{Figure}

Figure~\hyperref[t:x28counter_x28x22figurex22_x22figx3asyntaxx22x29x29]{\FigureRef{2}{t:x28counter_x28x22figurex22_x22figx3asyntaxx22x29x29}} presents the syntax of the unreachable calculus.
Expressions consist of variables, constants, $\lambda$
expressions, binary operators, conditionals, sequencing, and a
family of different errors (indexed by $k$). The only unique syntactic
element is \texorpdfstring{\ensuremath{\ouro}}{unreachable}, which represents annotations that indicate
unreachable code like Racket{'}s \RktSym{unsafe{-}assert{-}unreachable}.

Figure~\hyperref[t:x28counter_x28x22figurex22_x22figx3astandardx2dreductionx22x29x29]{\FigureRef{3}{t:x28counter_x28x22figurex22_x22figx3astandardx2dreductionx22x29x29}} describes the standard
reduction of the calculus. It starts with a standard
definition of evaluation contexts that ensure a
left{-}to{-}right order of evaluation. The $v$ non{-}terminal
describes values and $a$ final answers. The remainder of
the figure defines the notions of reduction $\arrs$, whose
compatible closure over evaluation contexts is the core of the
standard reduction $\arrS$ ({``}s{''} for standard).

The notions of reduction of the calculus are mostly as expected.  Similar
to Racket, the \texorpdfstring{\ensuremath{\ouro}}{unreachable} calculus employs {``}truthy{''} values in its
conditionals. Hence, rule S.1 reduces an $\Ifk$ expression to
its else{-}branch if the test position reduces to the value
$\False$. Any other value causes the conditional to reduce to
its then{-}branch (rule S.2); we use $\equiv$ for syntactic equality, here
and throughout the paper. The $\arrs$ relation also includes
$\beta${-}value (rule S.3) and the usual sequencing rule that discards the first
sub{-}expression of a $\Seqk$ expression when it is a value  (rule S.4).
Rule S.5 defers to the $\delta$ function to handle calls to primitive
operators.

Rules S.9 and S.10 introduce errors. When a function application involves
a value other than a $\lambda$ expression in the function position, it reduces to a
specific error with the label $\beta$. Similarly, when a primitive operator
encounters arguments that do not make sense, it reduces to an error with
the label $\delta$. We equip the calculus with a family of errors in order to
account for the common linguistic setting where there are multiple,
semantically distinct types of side{-}effects besides non{-}termination.

Rules S.6 and S.8 form a typical definition of the standard reduction $\arrS$
in a language with errors. Rule S.6 lifts the
notions of reduction  to evaluation contexts. Rule S.8 discards the
evaluation context around an error to terminate the reduction.

Rule S.7 is the only rule that is specific to \texorpdfstring{\ensuremath{\ouro}}{unreachable}. It  treats
\texorpdfstring{\ensuremath{\ouro}}{unreachable} the same as an error. The rule is based on Rust{'}s and Racket{'}s
safe variants of \texorpdfstring{\ensuremath{\ouro}}{unreachable} that are supposed to be used to check
assumptions about code unreachability during debugging. After all, from
the perspective of a compiler, which is the perspective we examine in this
paper, programs that evaluate \texorpdfstring{\ensuremath{\ouro}}{unreachable} are plain wrong, and hence
evaluating \texorpdfstring{\ensuremath{\ouro}}{unreachable} should result in some form of error.  Furthermore,
as we discuss further in \ChapRef{\SectionNumberLink{t:x28part_x22secx3acorrectnessx2dcompilerx22x29}{4}}{The Correctness of the Compile{-}Time Reductions}, this run
time behavior of \texorpdfstring{\ensuremath{\ouro}}{unreachable} plays an important role in  the compositional
proof of the correctness of the compile{-}time tranformations that the
unreachable calculus captures.

\Ssubsection{Compile{-}Time Transformations as Reduction Relations}{Compile{-}Time Transformations as Reduction Relations}\label{t:x28part_x22compilerx2dtranformationsx2dreductionx22x29}

\begin{Figure}\begin{Centerfigure}\begin{FigureInside}\CompilerRulesContext\end{FigureInside}\end{Centerfigure}

\Centertext{\Legend{\FigureTarget{\label{t:x28counter_x28x22figurex22_x22figx3acontextx22x29x29}\textsf{Fig.}~\textsf{4}. }{t:x28counter_x28x22figurex22_x22figx3acontextx22x29x29}\textsf{The Compile{-}time Relation (and the definition of contexts)}}}\end{Figure}

\begin{Figure}\begin{Centerfigure}\begin{FigureInside}\CompilerRulesP \vspace{0.75em}\end{FigureInside}\end{Centerfigure}

\Centertext{\Legend{\FigureTarget{\label{t:x28counter_x28x22figurex22_x22figx3acompilerrulePx22x29x29}\textsf{Fig.}~\textsf{5}. }{t:x28counter_x28x22figurex22_x22figx3acompilerrulePx22x29x29}\textsf{The Compile{-}time Unreachable Propagation Relations}}}\end{Figure}

\begin{Figure}\begin{Centerfigure}\begin{FigureInside}\CompilerRulesU \vspace{0.75em}\end{FigureInside}\end{Centerfigure}

\Centertext{\Legend{\FigureTarget{\label{t:x28counter_x28x22figurex22_x22figx3acompilerruleUx22x29x29}\textsf{Fig.}~\textsf{6}. }{t:x28counter_x28x22figurex22_x22figx3acompilerruleUx22x29x29}\textsf{The Compile{-}time Unreachable Undefined Behavior Relations}}}\end{Figure}

Figure~\hyperref[t:x28counter_x28x22figurex22_x22figx3acontextx22x29x29]{\FigureRef{4}{t:x28counter_x28x22figurex22_x22figx3acontextx22x29x29}} shows the relation $\arrC$
({``}c{''} for compile), which captures compile{-}time transformations
due to \texorpdfstring{\ensuremath{\ouro}}{unreachable}. This relation does not define a program
transformation algorithm that a compiler might use; instead
it describes the set of valid basic transformations steps
that a compiler is allowed to perform. In other words, the
relation is a specification of how a correct compiler may
take advantage of \texorpdfstring{\ensuremath{\ouro}}{unreachable} in order to simplify a program.

The $\arrC$ relation is broken down into two pieces:
$\arrP$ ({``}p{''} for propagate), which captures
transformations of \texorpdfstring{\ensuremath{\ouro}}{unreachable} that are legal for a safe
variant of \texorpdfstring{\ensuremath{\ouro}}{unreachable}, namely when it behaves
like an error, and $\arrU$ ({``}u{''} for undefined), which
captures the undefined behavior of \texorpdfstring{\ensuremath{\ouro}}{unreachable}, namely how
it interacts with conditional expressions.
Figure~\hyperref[t:x28counter_x28x22figurex22_x22figx3acontextx22x29x29]{\FigureRef{4}{t:x28counter_x28x22figurex22_x22figx3acontextx22x29x29}} also gives the definition of
contexts, which are used in the definitions of both
$\arrP$ and $\arrU$.

Figure~\hyperref[t:x28counter_x28x22figurex22_x22figx3acompilerrulePx22x29x29]{\FigureRef{5}{t:x28counter_x28x22figurex22_x22figx3acompilerrulePx22x29x29}} presents $\arrP$. Rules
P.1 to P.5 in figure~\hyperref[t:x28counter_x28x22figurex22_x22figx3acompilerrulePx22x29x29]{\FigureRef{5}{t:x28counter_x28x22figurex22_x22figx3acompilerrulePx22x29x29}} allow the
compiler to propagate \texorpdfstring{\ensuremath{\ouro}}{unreachable} downstream and upstream
in a compound expression. Specifically, rules P.2 and P.4
enable the compiler to eliminate all expressions that follow
\texorpdfstring{\ensuremath{\ouro}}{unreachable}. In essence, they capture the idea that all
computation downstream of unreachable code is also
unreachable since computation never progresses past
\texorpdfstring{\ensuremath{\ouro}}{unreachable}. Similarly, the compiler can eliminate all
expressions that precede \texorpdfstring{\ensuremath{\ouro}}{unreachable}, as long as they are
safe. Roughly, safe expressions are those that evaluate to a
value, i.e., they have no side{-}effects:
\begin{definition}[safety]\label{def:safety}An expression $e$ is \emph{safe}, written as $\Lsafep(e)$,
if for all closing substitutions $\vartheta$,
we have $\vartheta(e) \arrS^* v$.\end{definition}
\noindent Given this definition of safe expressions, rule
P.1 embodies how a compiler benefits
from the combination of safe expressions and \texorpdfstring{\ensuremath{\ouro}}{unreachable}.
It allows \texorpdfstring{\ensuremath{\ouro}}{unreachable} to {``}eat{''} safe expressions upstream.
After all, if the value{-}result of a safe expression is not
used, then it can be discarded without affecting the rest of
the evaluation. However, rule P.1 is restricted and operates
only on $\Seqk$ expressions. To mitigate this restriction,
rules P.3 and P.5 reshape expressions that contain
\texorpdfstring{\ensuremath{\ouro}}{unreachable} into $\Seqk$ expressions.
Since a compiler should be able to perform transformations
in any context, the Ctx.P rule lifts all of the $\arrp$
rules to arbitrary contexts. Also, all of the transformations
in $\arrp$ are legal in either direction, so CtxSym.P adds
in the symmetric variants.

Figure~\hyperref[t:x28counter_x28x22figurex22_x22figx3acompilerruleUx22x29x29]{\FigureRef{6}{t:x28counter_x28x22figurex22_x22figx3acompilerruleUx22x29x29}} presents $\arrU$. Rules
U.1 and U.2 in figure~\hyperref[t:x28counter_x28x22figurex22_x22figx3acompilerruleUx22x29x29]{\FigureRef{6}{t:x28counter_x28x22figurex22_x22figx3acompilerruleUx22x29x29}} define the
essential notions of reduction and they
capture the undefined behavior of \texorpdfstring{\ensuremath{\ouro}}{unreachable},
matching the behavior of Racket{'}s
\RktSym{unsafe{-}assert{-}unreachable} and Rust{'}s
\Scribtexttt{unreachable{\char`\_}unchecked}. They specify how a compiler can
eliminate unreachable branches of conditionals. In other
words, they are the formal counterparts of the essential
transformation steps that the Racket compiler performs in
the examples from \ChapRef{\SectionNumberLink{t:x28part_x22secx3aexamplesx22x29}{2}}{The Essence of Unreachable, by Example}.

The $\arru$ axioms are sound in any context but, unlike
$\arrP$, they are not sound in reverse. The issue is that
the $R_p$ notions of reductions eliminate immaterial
pieces of an expression. However, a symmetric $\arrU$ reduction would
inject arbitrary expressions back, possibly affecting the
meaning of the expression. For instance, the compiler is
only justified to transform $\If{e}{3}{\ouro}$ to
$\Seq{e}{3}$ because it assumes that this
\texorpdfstring{\ensuremath{\ouro}}{unreachable} is indeed unreachable. Reversing the reduction,
though, introduces an \texorpdfstring{\ensuremath{\ouro}}{unreachable} for which the compiler
knows nothing about. This asymmetry is the source of the
main challenge for proving the correctness of the
compile{-}time reductions, and we revisit the issue in
\ChapRef{\SectionNumberLink{t:x28part_x22secx3acorrectnessx2dcompilerx22x29}{4}}{The Correctness of the Compile{-}Time Reductions}.

\sectionNewpage

\Ssection{The Correctness of the Compile{-}Time Reductions}{The Correctness of the Compile{-}Time Reductions}\label{t:x28part_x22secx3acorrectnessx2dcompilerx22x29}

The simplicity of the compile{-}time reductions of the unreachable calculus
comes with a price: establishing their correctness is challenging.  The root
of the challenge is that unreachable is a kind of undefined behavior, and
this affects radically what the correctness of the reductions means.  In
general, the ultimate correctness criterion of any program transformation
is that it preserves the meaning of programs. However, transformations
that take advantage of undefined behavior, such as those captured by
$\arrC$, are supposed to preserve the meaning only of programs
that do not exhibit undefined behavior; the rest are outside the universe
of programs a compiler should handle.  In other words, in a program that
exhibits undefined behavior, $\arrC$ has the liberty to alter the
program{'}s behavior. It is this liberty that impedes a compositional proof of correctness
for the reductions. Specifically, a compositional proof requires reasoning
about whether a compiler preserves the meaning of a piece of a
program in isolation from the rest of the program, but undefined
behavior can only be determined for a whole program.

Fortunately, \texorpdfstring{\ensuremath{\ouro}}{unreachable} is a rather {``}well{-}behaved{''} undefined behavior.
In particular, the small set of intuitive compile{-}time reductions from
\ChapRef{\SectionNumberLink{t:x28part_x22secx3acalculusx22x29}{3}}{The Essence of Unreachable, Formally} have the property that they preserve the meaning
of any expression in any context under certain conditions.  In turn, these
conditions can be described with a nonstandard, novel logical relation
that hides the complexity of the proof of correctness, keeping the syntax
and the reductions of the calculus simple.

To get a sense of the correctness theorem we are aiming for,
consider two complete programs $e$ and $e'$ such that
$e \arrC e'$, i.e., where $e'$ is a program that the compiler
is allowed to transform $e$ into. Accordingly, if the evaluation of $e$ does not exhibit
undefined behavior, we wish to ensure that $e'$ does not
exhibit undefined behavior either and that $e'$ and $e$
either both diverge or both terminate with the same result.

Translating this informal description into a formal
statement requires making the notion of undefined behavior
precise, using the standard reduction relation:

\begin{definition}[Undefined Behavior]\label{def:undef}The behavior of a closed expression $e$ is \emph{undefined}, written as $\Lundefp(e)$,
if $e \arrS^* \ouro$.\end{definition}
\noindent
which leads to the formal definition of compiler correctness:

\begin{proposition}\label{prop:compiler-correctness}  For all closed expressions $e, e'$ such that $e\arrC^* e'$,
if $\neg\undefp{e}$ then

\noindent \begin{itemize}\atItemizeStart

\item $\neg\undefp{e'}$,

\item $\forall c. \, (e\arrS^* c \iff e'\arrS^* c$),

\item $(\exists e_1.\ e\arrS^* \Func{x}{e_1}) \iff (\exists e_1'.\ e'\arrS^* \Func{x}{e_1'})$, and

\item $\forall k. \, (e\arrS^* \error{k} \iff e'\arrS^* \error{k}$).\end{itemize}

\noindent \begin{remark*}Because both $e$ and $e'$ do not terminate with $\ouro$,
the conclusion of \cref{prop:compiler-correctness} implies
that non{-}termination is preserved by $\arrC$;
$e'$ diverges if and only if $e$ diverges.\end{remark*}\end{proposition}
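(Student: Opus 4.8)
The plan is to prove \cref{prop:compiler-correctness} by constructing a logical relation that is a precongruence, that contains the single\mbox{-}step relation $\arrC$, and whose closed instances entail the stated observations. Because $\arrU$ is sound only in the forward direction while $\arrP$ is sound in both, the relation must be \emph{asymmetric}: intuitively $\relE(e,e')$ should say that ``$e'$ behaves exactly like $e$, \emph{unless} $e$ is already doomed to reach $\ouro$, in which case $e'$ may do anything.'' Concretely I would define a value relation $\relV$, a term relation $\relE$, and a relation $\relG$ on closing substitutions, stipulating that two closed terms lie in $\relE(e,e')$ iff either $e \arrS^* \ouro$ (the \emph{escape clause}), or $e$ and $e'$ exhibit matching observable behavior: both diverge, or both reduce to the same constant $c$, or both reduce to $\lambda$\mbox{-}values whose bodies are $\relE$\mbox{-}related on $\relV$\mbox{-}related arguments, or both reduce to the same $\error{k}$. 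The open extension relates $e,e'$ under $\GD$ when $\relE(\vartheta(e),\vartheta'(e'))$ holds for every $(\vartheta,\vartheta') \in \relG(\GD)$.

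Given such a relation, the argument decomposes into three movements. First, the \emph{fundamental property}: the open form of $\relE$ is compatible with every term constructor and hence is a precongruence; this is what lets relatedness of subterms be lifted through the contexts $\ctx$. Second, \emph{axiom containment}: each basic rewrite (P.1--P.5, U.1, U.2, plus the symmetric closure of the $\arrp$ rules) relates its two sides in the open relation, and together with precongruence this yields $\arrC \subseteq \relE$; reflexivity (from the fundamental property) and transitivity of $\relE$ on closed terms then give $\arrC^* \subseteq \relE$. Third, \emph{adequacy}: for closed $e$ with $\neg\undefp{e}$, the escape clause cannot fire, so $\relE(e,e')$ forces the matching disjunct; since $\arrS$ is deterministic each closed term has a unique behavior, so one\mbox{-}directional matching immediately upgrades to $\neg\undefp{e'}$, to the three biconditionals, and to the preservation of divergence noted in the remark.

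The heart of the argument is axiom containment for the undefined\mbox{-}behavior rules, and this is exactly where the escape clause pays off. For U.1, $\If{e_c}{\ouro}{e_f} \arru \Seq{e_c}{e_f}$, I would case\mbox{-}split on the behavior of a closing instance of $e_c$, using the fundamental property to relate the two copies of $e_c$: when $e_c$ diverges, errors, or itself reaches $\ouro$, both sides share that behavior (each evaluates $e_c$ first, and S.7 propagates any $\ouro$ out of the evaluation context); when $e_c \arrS^* \False$ both continue to $e_f$; and when $e_c$ reduces to a truthy value the left side steps into its then\mbox{-}branch $\ouro$ and so reaches $\ouro$ --- precisely the case absorbed by the escape clause, so no matching is required. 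The same clause explains why the symmetric $\arrp$ rules are sound both ways yet P.1 needs $\safep{e}$: the direction $\Seq{e}{\ouro} \arrp \ouro$ lands in $\relE$ only because $\safep{e}$ guarantees that $\Seq{e}{\ouro}$ runs $e$ to a value and then reaches $\ouro$ (firing the escape clause), whereas a non\mbox{-}safe $e$ that diverges or errs would make the sides observably differ; the reverse direction $\ouro \arrp \Seq{e}{\ouro}$ is free because the left side is literally $\ouro$.

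I expect two difficulties to dominate. The first is making the relation \emph{well\mbox{-}defined}: the language is untyped and supports self\mbox{-}application, so the clause for $\lambda$\mbox{-}values refers back to $\relE$ and divergence must be observable; the definition therefore needs either step\mbox{-}indexing or a biorthogonal ($\top\top$\mbox{-}closed) formulation, through which the escape clause and the divergence observation must be threaded coherently. The second, and the genuine technical core, is the compatibility proof for the evaluation\mbox{-}position constructors (operator, application, conditional test, $\Seqk$): there I must show that relatedness of a subterm survives placement in an evaluation context, which hinges on the interplay between the escape clause and S.7 --- reaching $\ouro$ in redex position must propagate to the whole term so that the ``$e \arrS^* \ouro$'' disjunct stays compositional --- while non\mbox{-}evaluation positions (under $\lambda$, in a conditional's branches) are instead discharged through $\relV$. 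Establishing this propagation cleanly, so that the simple rules keep their intuitive form while the logical relation carries all the subtlety, is where I expect the real work to lie.
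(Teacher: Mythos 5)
Your central idea is the same one the paper relies on: make the source program's undefined behavior a \emph{positive}, compositional fact inside the relation, rather than a global negative hypothesis outside it. Your ``escape clause'' $e \arrS^* \ouro$, propagated up through evaluation contexts by rule S.7, is precisely how the paper escapes the non-compositionality of $\neg\Lundefp$: its forward relation (\cref{def:forward-rel}) carries $\neg\undefp{e_1}$ as a premise that vacuously discharges relatedness when the source is doomed, and its backward relation (\cref{def:backward-rel}) literally has $\undefp{e_1}$ as an extra disjunct in the conclusion, justified by the same observation you make --- $\Lundefp$ is compositional while $\neg\Lundefp$ is not. Your case analyses for U.1 and for why P.1 needs $\Lsafep$ match the paper's fundamental-property proofs. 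The packaging differs, though: the paper never forms a single biconditional relation. It defines two one-directional step-indexed relations for $\arrU$, handles $\arrP$ with a separate, entirely standard relation (\cref{lem:std-sound}), and --- the key structural choice --- quantifies the whole reduction sequence $e_2 \arrU^* e_2'$ \emph{inside} the expression relations, so that a single application of each fundamental property covers an arbitrary block of $\arrU$ steps; \cref{thm:compiler-correctness} is then proved by induction on $e \arrC^* e'$, chaining only flat observations (same constant, same error, ``reduces to some $\lambda$'', preservation of $\neg\Lundefp$) across steps.

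That contrast exposes the genuine gap in your outline: the claim that ``transitivity of $\relE$ on closed terms'' gives $\arrC^* \subseteq \relE$. Step-indexed relations of this shape are not transitive: the index bounds only the left term's reduction steps while the right term may take arbitrarily many, so composing $\relE_i(e_1,e_2)$ with relatedness of $(e_2,e_3)$ leaves the middle and right answers related only at an uninformative index (for $\lambda$-values, $\relV_1$ says nothing about bodies), and the composition cannot be completed at index $i$. The repair is already latent in your third movement: since $\arrS$ is deterministic, per-step adequacy yields matching flat observations plus preservation of $\neg\Lundefp$, and \emph{those} chain by induction over the $\arrC^*$ sequence with no relation-level transitivity --- which is exactly how the paper's proof proceeds. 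Relatedly, your ``both diverge'' clause cannot appear in a step-indexed definition at all; once you formalize the relation you are forced to split the biconditional into two one-directional indexed relations (source-terminates-implies and target-terminates-implies), at which point you have reconstructed the paper's forward/backward pair, including the asymmetry the paper calls uncharacteristic. Neither issue is fatal, but both must be addressed, and addressing them lands you essentially on the paper's construction.
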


The standard, logical{-}relations based approach to proving
\cref{prop:compiler-correctness} is to define a step{-}indexed
syntax{-}based binary relation that uses the standard reduction to reduce
expressions to values. This logical relation captures a notion of logical
approximation such that two expressions that approximate each other are
contextually equivalent. Soundness of the logical relation with respect to
contextual equivalence is established by showing its Fundamental Property,
i.e., any expression $e$, if $\neg\Lundefp(e)$,
then $e$ is related to itself.

Once that is done, one would attempt to prove the  compile{-}time reductions
correct by showing that the left{-}hand side and the right{-}hand side of each
reduction rule logically approximate each other or, more precisely,
if $\neg\Lundefp(e)$ and $e \arrC e'$, then $e$ is related to  $e'$
and $e'$ is related to $e$.

Unfortunately, this approach does not work because of the
way \Lundefp$~$ is defined. Consider the situation where we
know that some application expression
$\App{e_s}{e_s^\prime}$ that reduces via $\arrC$ to
$\App{e_c}{e_c^\prime}$ and we wish to show that
$\App{e_c}{e_c^\prime}$ is related to $\App{e_s}{e_s^\prime}$.
In this case, we do not benefit from the
$\neg\Lundefp(\App{e_s}{e_s^\prime})$ assumption.
In essence, $\neg\undefp{\App{e_s}{e_s^\prime}}$ does not translate to
facts about the pieces of $\App{e_s}{e_s^\prime}$ that we can map to the
the pieces of $\App{e_c}{e_c^\prime}$ to complete
the proof inductively.

\begin{Figure}\begin{Centerfigure}\begin{FigureInside}\[
\begin{array}{@{}l@{}}
\begin{array}{ccl}\ensuremath{f_\mathrm{src}} &:=& \Func{p x}{(994 + \If{(\App{p}{x})}{\ouro}{x})}
\\
\ensuremath{f_\mathrm{opt}} &:=& \Func{p x}{(994 + \Seq{(\App{p}{x})}{x})}\end{array}
\vspace{0.3em}
\\
\begin{array}{lll}
\App{\App{\ensuremath{f_\mathrm{src}}}{(\Func{y}{\False})}}{n} \arrS^* 994+n & &
\App{\App{\ensuremath{f_\mathrm{src}}}{(\Func{y}{\True})}}{n} \arrS^* \ouro
\\
\App{\App{\ensuremath{f_\mathrm{opt}}}{(\Func{y}{\False})}}{n} \arrS^* 994+n & \quad &
\App{\App{\ensuremath{f_\mathrm{opt}}}{(\Func{y}{\True})}}{n} \arrS^* 994+n
 \end{array}
\end{array}
\]

$\vspace*{-0.1in}$\end{FigureInside}\end{Centerfigure}

\Centertext{\Legend{\FigureTarget{\label{t:x28counter_x28x22figurex22_x22figx3apxx2dsrcoptx22x29x29}\textsf{Fig.}~\textsf{7}. }{t:x28counter_x28x22figurex22_x22figx3apxx2dsrcoptx22x29x29}\textsf{Examples of Semantics{-}Altering Transformations Due to $\arrU$ Reductions.}}}\end{Figure}

Functions  \ensuremath{f_\mathrm{src}}, \ensuremath{f_\mathrm{opt}}
in Figure~\hyperref[t:x28counter_x28x22figurex22_x22figx3apxx2dsrcoptx22x29x29]{\FigureRef{7}{t:x28counter_x28x22figurex22_x22figx3apxx2dsrcoptx22x29x29}} demonstrate the issue.
Consider a pair of programs
$(e,e') :\equiv (\App{\App{\ensuremath{f_\mathrm{src}}}{(\Func{y}{b})}}{n},
\App{\App{\ensuremath{f_\mathrm{opt}}}{(\Func{y}{b})}}{n})$.
Given that $e'\arrS^* 994+n$ for all $n$ and $b$, we would like to
be able to use the logical relation to deduce that
$e\arrS^* 994+n$
based entirely on the fact that $(\ensuremath{f_\mathrm{src}},\ensuremath{f_\mathrm{opt}})$ are related.
But as figure~\hyperref[t:x28counter_x28x22figurex22_x22figx3apxx2dsrcoptx22x29x29]{\FigureRef{7}{t:x28counter_x28x22figurex22_x22figx3apxx2dsrcoptx22x29x29}} demonstrates,
$e$ exhibits undefined behavior  when $b\equiv\False$.
In fact, the conditions under which $e$ exhibits undefined behavior
become involved if we consider an arbitrary argument $p$
that can exhibit undefined behavior:
\[
\begin{array}{r@{}c@{}lll}
\App{\App{\ensuremath{f_\mathrm{src}}}{\left(
  \Func{y}{\vphantom{\int}
   \If{(y\tteq 0)}{\ouro}{b}}
  \right)}}{n}
&\;\arrS^*\;& 994+n
& \iff &
b\equiv\False\land n\not\equiv 0 \\
\App{\App{\ensuremath{f_\mathrm{opt}}}{(\Func{y}{\Seq{(y\tteq 0)}{b}})}}{n} &\arrS^*& 994+n
\end{array}
\]
Put differently,
the assumption $\neg\undefp{e}$ has turned into
an arbitrary constraint on the arguments of the two
functions, which is unclear how to incorporate in a logical relation.

The literature on logical relations suggests a way around the problem. One
can approximate a global property, such as \Lundefp, with an
inductively{-}defined approximate predicate. For example,
RustBelt\Autobibref{~(\hyperref[t:x28autobib_x22Ralf_Jungx2c_Jacquesx2dHenri_Jourdanx2c_Robbert_Krebbersx2c_and_Derek_DreyerRustBeltx3a_securing_the_foundations_of_the_Rust_programming_languageProceedings_of_the_ACM_on_Programming_Languages_x28POPLx29_2x2c_ppx2e_66x3a1x2dx2d66x3a342018httpsx3ax2fx2fdoix2eorgx2f10x2e1145x2f3158154x22x29]{\AutobibLink{Jung et al\Sendabbrev{.}}} \hyperref[t:x28autobib_x22Ralf_Jungx2c_Jacquesx2dHenri_Jourdanx2c_Robbert_Krebbersx2c_and_Derek_DreyerRustBeltx3a_securing_the_foundations_of_the_Rust_programming_languageProceedings_of_the_ACM_on_Programming_Languages_x28POPLx29_2x2c_ppx2e_66x3a1x2dx2d66x3a342018httpsx3ax2fx2fdoix2eorgx2f10x2e1145x2f3158154x22x29]{\AutobibLink{2018}})} takes advantage of Rust{'}s type system and
ensures the absence of undefined behavior based
on a semantic type judgment.  In other words, such predicates aim to break the
whole{-}program property into compositional facts about the structural
pieces of an expression. Unfortunately, such an approach does not work here in a
straightforward manner because \Lundefp$\relax$ simply is not compositional.

Instead of attempting to come up with some conservative inductive
substitute for \Lundefp, we follow a different path that aims to
keep the calculus as close to the way compiler writers use the
full{-}blown, non{-}compositional definition of undefined behavior when
reasoning about compiler transformations. We define
forward{-} and backward{-}aproximation logical relations that, when an expression does not
exhibit undefined behavior, relate it with its transformed version after a
sequence of $\arrC$ reductions, and vice versa. In fact, we define four
such relations: two for $\arrU$ and two for $\arrP$ reductions.
Uncharacteristically, the forward and backward approximations for
$\arrU$ reductions do not mirror each other. This complicates
showing their soundness but achieves the goal of
hiding all complexity away from the calculus.

Before delving into the details of the logical relations, we first introduce
well{-}formedness judgments for managing free variables
since the fundamental properties for the logical relations assume open expressions.
Figure~\hyperref[t:x28counter_x28x22figurex22_x22figx3awellformednessx22x29x29]{\FigureRef{8}{t:x28counter_x28x22figurex22_x22figx3awellformednessx22x29x29}} presents a few selected inference rules. The
complete list of wellformedness rules can be found on Appendix B.
In the figure, $\GD$ and $\GD'$ denote sets of variables.
The judgment $\wfe{\GD}{e}$ holds if
the set of free variables in the expression $e$ is a subset of $\GD$.
The judgment $\wfctx{\GD'}{\GD}{C}$ asserts that the context $C$
maps an expression that refers to variables in $\GD$
into an expression that refers to variables in $\GD'$.
Put differently, if $\wfe{\GD}{e}$ and $\wfctx{\GD'}{\GD}{C}$
then $\wfe{\GD'}{C[e]}$.

\begin{Figure}\begin{Centerfigure}\begin{FigureInside}\begin{minipage}{0.95\textwidth} \centering
\begin{minipage}{0.27\textwidth} \centering
\WellformedTermsSelectedRules
\end{minipage}
\begin{minipage}{0.01\textwidth}
\hspace{0.01\textwidth}
\end{minipage}
\begin{minipage}{0.67\textwidth} \centering
\WellFormedContextsSelectedRules
\end{minipage}
\end{minipage}\end{FigureInside}\end{Centerfigure}

\Centertext{\Legend{\FigureTarget{\label{t:x28counter_x28x22figurex22_x22figx3awellformednessx22x29x29}\textsf{Fig.}~\textsf{8}. }{t:x28counter_x28x22figurex22_x22figx3awellformednessx22x29x29}\textsf{Well{-}formed Expressions and Contexts (Selected Rules)}}}\end{Figure}

The remainder of this section establishes the correctness of
the $\arrC$ reductions. The first two subsections describe
the forward and backward approximation logical relations for
$\arrU$ reductions and their soundness. Then, the final
subsection discusses the correctness of the $\arrP$
reductions to complete the proof of correctness of our
compile{-}time reductions.

\Ssubsection{Forward Approximation of  $\arrU$ Reductions}{Forward Approximation of  $\arrU$ Reductions}\label{t:x28part_x22secx3aforwardx2drelx22x29}

As a first step to prove the correctness of $\arrU$ reductions, we
design the binary \nameref{def:forward-rel}. In detail, for
all $i \ge 0$, we define the step{-}indexed logical relation for
values and  \emph{closed} expressions as $\relV_i^{\arrU}$ and
$\relE_i^{\arrU}$, respectively. In the definition, the notation
$e\arrS^j e'$ means that $e$ reduces to $e'$ using exactly $j$
steps under the standard reduction of the calculus.

\begin{definition}[Forward-Approximation Logical Relation]\label{def:forward-rel}
\[
\begin{array}{r@{}c@{}l}
 \relV_i^{\arrU} & = &
\{   (\Func{x}{e_1},\Func{x}{e_2}) \;|\;
 \forall j<i. \;
 \forall v_1\, v_2.\; (v_1,v_2)\in \relV_j^{\arrU},
 (\subst{e_1}{x}{v_1},\subst{e_2}{x}{v_2})\in \relE_j^{\arrU}
 \}
\\
& & \; \cup \; \lset{ (c,c) }\\
\relE_i^{\arrU} & \;=\; &
\begin{array}[t]{@{}c@{}c@{}l@{}}
\{ (e_1,e_2) & \;|\; &
 \forall j < i. \;
 \forall a_1. \; (\neg\undefp{e_1}) \land e_1 \arrS^j a_1 \Implies \\
 & & \quad \quad 
 \forall e_2'.\; e_2 \arrU^* e_2' \Implies \\
 & & \quad \quad \quad \quad 
 \exists a_2.\; e_2' \arrS^* a_2 \land
 ((a_1,a_2)\in \relV_{i-j}^{\arrU}\;
 \lor \exists k.\, a_1\equiv a_2\equiv \error{k}) \}
\end{array}\\
  \end{array}
\]
 \begin{remark*}The definition of $\relV^{\arrU}_i$ uses $\relV^{\arrU}_j$ and
$\relE^{\arrU}_j$ for $j$ that is strictly less than $i$, and that
the definition of $\relE^{\arrU}_i$ uses $\relV^{\arrU}_{i-j}$
for $0\le j< i$. Thus the mutually{-}referential relations are well{-}founded.\end{remark*}\end{definition}

The value relation $\relV_i^{\arrU}$ is standard.  Two values
$(v_1,v_2)$ are related by $\relV_i^{\arrU}$ at step $i\ge 0$ if
$v_1$ and $v_2$ are the same constant $c$, or if they are
both lambdas and, for all $j < i$,
applying arguments that are related at $j$ steps produces expressions related at
$j$ steps.

The expression relation $\relE_i^{\arrU}$, in contrast, is not
standard. It is crafted to resemble one of the directions of
\cref{prop:compiler-correctness}, but specialized to $\arrU$
reductions. Specifically, under the assumption that $\neg\undefp{e_1}$
and $e_1$ evaluates to an answer after $j<i$ standard
reduction steps, $e_1$ is related to $e_2$ at step index $i$ if all
expressions $e_2^\prime$ that are results of simplifying $e_2$ with
$\arrU$ reductions evaluate to answers that are related to $e_1${'}s answer.
Two answers are related if they are the same error, or if they are related
values at step index $i-j$.

In other words, the expression approximation
aims to establish directly that after some $\arrU$ reductions the resulting
expression approximates the meaning of the initial one. For that reason,
the expression approximation has two built{-}in assumptions that are not
found in standard logical relations: $\neg\undefp{e_1}$ and $e_2
\arrU^* e_2^\prime$. As discussed above, in a standard logical relation the
first would be an assumption for its Fundamental Property, while the
second would be an assumption of a separate theorem that uses the
soundness of the logical relation to prove that $\arrU$ reductions are
correct.

Equipped with these relations, we are ready to prove the
\nameref{lem:forward-fundamental}. As usual, it states that any open
expression $e$ is related to itself:

\begin{lemma}[Fundamental Property of $\relE^{\arrU}$]\label{lem:forward-fundamental}For all $e,\, i\ge 0,\, \GD$ and $\Gg$,
if $\wfe{\GD}{e}$ and $\Gg\in\relG_i^{\arrU}[\GD]$ then
$(\Gg_1(e), \Gg_2(e))\in \relE_i^{\arrU}$.

\begin{remark*}$\relG_i^{\arrU}$ is the standard relation on pairs of substitutions that
map the same identifier to values related by $\relV_i^{\arrU}$.
The complete definition can be found on page 4 of Appendix E.\end{remark*}\end{lemma}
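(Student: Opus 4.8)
The plan is to prove \cref{lem:forward-fundamental} by structural induction on the expression $e$, threading the step index $i$ through the argument and organizing the compound cases as compatibility statements. Before the induction I would record two routine structural facts about the relations of \cref{def:forward-rel}: (i) \emph{downward closure} in the step index, so that any pair in $\relV^{\arrU}_i$ or $\relE^{\arrU}_i$ is also related at every $i'\le i$ (and correspondingly $\Gg\in\relG_i^{\arrU}[\GD]$ gives $\Gg\in\relG_j^{\arrU}[\GD]$ for $j\le i$); and (ii) \emph{right-closure under $\arrU$}, i.e. if $(e_1,e_2)\in\relE^{\arrU}_i$ and $e_2\arrU^* e_2'$ then $(e_1,e_2')\in\relE^{\arrU}_i$, which is immediate from the transitivity of $\arrU^*$ since $\relE^{\arrU}_i$ already quantifies over \emph{all} $\arrU$-reducts of its right component. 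The variable case is then immediate from $\Gg\in\relG_i^{\arrU}[\GD]$; the constant case holds because $c$ reduces only to itself and $(c,c)\in\relV^{\arrU}_i$; the $\ouro$ case is vacuous (its left component already satisfies $\Lundefp$, so the defining implication is never triggered); and the $\error{k}$ case holds through the error disjunct since both sides are the same $\error{k}$. The $\lambda$ case reduces to placing two substituted lambdas in $\relV^{\arrU}_i$: given arguments related at $j<i$ I extend $\Gg$ by $x\mapsto(v_1,v_2)$, apply the induction hypothesis at index $j$, and finish with right-closure to absorb any $\arrU$ reductions inside the right-hand body.

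For the compound cases (primitive operation, application, sequencing, and conditional) the pattern is uniform. To show $(\Gg_1(e),\Gg_2(e))\in\relE^{\arrU}_i$ I unfold the definition: I assume $j<i$, $\neg\undefp{\Gg_1(e)}$, and $\Gg_1(e)\arrS^j a_1$, and fix an arbitrary $e_2'$ with $\Gg_2(e)\arrU^* e_2'$. The first key move is that $\neg\Lundefp$ can be pushed onto the actively-evaluated subterm: that subterm sits inside an evaluation context $\evalctx\not\equiv\hole{}$, so if it could reach $\ouro$, rule~S.7 would propagate $\ouro$ to the whole expression, contradicting $\neg\undefp{\Gg_1(e)}$; hence $\neg\Lundefp$ holds of the subterm and the induction hypothesis applies to it. The second key move is that every $\arrU$-redex is a conditional, so a $\arrU^*$ reduction preserves the top-level constructor of an operation, application, or sequencing; thus $e_2'$ decomposes as that same constructor applied to independent $\arrU$-reducts of the $\Gg_2$-images of the subterms. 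With these two moves the operation, sequencing, and application cases follow by evaluating the left-hand side left-to-right (rules~S.6 and S.8), feeding each phase to the induction hypothesis (for application, invoking the value relation $\relV^{\arrU}$ at the index reduced by the steps already spent, using downward closure to reconcile budgets), and short-circuiting uniformly on $\error{k}$ answers because the relation ties a value on the left to a value on the right and an error to the same error.

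The genuinely delicate case, and the one I expect to be the main obstacle, is the conditional. Here a $\arrU$ reduction on the right can \emph{collapse} the \texttt{if} through rule~U.1 or U.2, but only when a branch is literally $\ouro$ (since $\arrU$ never \emph{creates} an $\ouro$). I would therefore case-split on whether the branches of $e$ are $\ouro$. When neither branch is $\ouro$, no top-level collapse is possible, $e_2'$ is again an \texttt{if} of $\arrU$-reducts, and the argument is standard: $\neg\Lundefp$ pins down the value of the condition, selecting the branch, and the induction hypothesis on the condition and the taken branch closes the case. When, say, the then-branch is $\ouro$, the right-hand \texttt{if} may reduce via U.1 to a $\Seq{\cdot}{\cdot}$; crucially, on the left $\neg\undefp{\If{\Gg_1(e_c)}{\ouro}{\Gg_1(e_f)}}$ \emph{forces} the condition to evaluate to $\False$ (or to an error, or to diverge), because a truthy result would step into the $\ouro$ then-branch and exhibit undefined behavior. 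Hence the left necessarily takes the else-branch, which is exactly the subexpression surviving in the collapsed $\Seq{\cdot}{\cdot}$ on the right, and the two sides are reconciled through the induction hypothesis on $e_c$ and $e_f$. The symmetric U.2 sub-case (else-branch $\ouro$) and the degenerate case (both branches $\ouro$, where $\neg\Lundefp$ forces the condition to error or diverge) are analogous.

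The crux of the entire proof is precisely this reconciliation in the conditional case: the non-compositional hypothesis $\neg\Lundefp$ on the left and the branch-collapsing $\arrU$ reductions on the right are made to agree, and it is the asymmetric, built-in bundling of $\neg\Lundefp$ together with ``for all $\arrU$-reducts'' inside $\relE^{\arrU}$ — rather than any compositional inductive substitute for $\Lundefp$ — that makes the argument go through. Everything else, namely the step-index arithmetic in the application case and the substitution-extension bookkeeping in the $\lambda$ case, is routine once the two structural lemmas (downward closure and right-closure under $\arrU$) are in hand.
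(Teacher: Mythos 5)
Your proposal is correct and follows essentially the same route as the paper's proof: structural induction on $e$, observing that $\arrU$ preserves the outermost constructor in all cases except a conditional with a literal $\ouro$ branch, and resolving that case by the decomposition of $\arrU^*$ sequences on conditionals (the paper's \cref{prop:u-if}) together with the observation that $\neg\Lundefp$ on the left forces the evaluation away from the $\ouro$ branch, exactly matching the collapsed $\Seqk$ on the right. Your explicit auxiliary lemmas (step-index downward closure and right-closure of $\relE^{\arrU}$ under $\arrU^*$) are the same routine bookkeeping the paper's appendix proof relies on, so there is nothing substantively different to compare.
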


\begin{proof}[Proof Sketch]We prove the Fundamental Property by induction on $e$.
In most cases, the transformation $\Gg_2(e)\arrU^* e_2'$ does not
change the outermost shape of $\Gg_2(e)$ and hence the proofs
are straightforward. When $e:\equiv\If{e_c}{\ouro}{e_f}$,
the branch{-}elimination transformation may reduce $\Gg_2(\If{e_c}{\ouro}{e_f})$
to $\Seq{\Gg_2(e_c)}{\Gg_2(e_f)}$.\Footnote{\FootnoteRef{\textsuper{\hyperref[t:x28x7ccounterx2dx28x29x7c_x23x28structx3ageneratedx2dtagx29x29]{1}}}\FootnoteContent{The actual proof needs to generalize
$\Gg_2(e_c)$ and $\Gg_2(e_f)$ further. See \cref{prop:u-if}.}}
In this case, we need to prove that the latter
reduces to a related answer based on the assumption that
$\Gg_1(\If{e_c}{\ouro}{e_f})\arrS^j a_1$ and $a_1\not\equiv\ouro$.
As it turns out, this variation poses no issue to the proof
because the sub{-}expressions reduce in a related manner by induction.
The complete proof can be found on page 6 of Appendix E.\end{proof}

A consequence of the Fundamental Property is the forward direction of
\cref{prop:compiler-correctness} specialized to  $\arrU$:
when $e$ does not
exhibit undefined behavior, for any transformation $e\arrU^* e'$, if
$e$ terminates then $e'$ terminates with a related answer.

\begin{corollary}\label{coro:forward-correct}Assume $\wfe{\GD}{e}$ and $e \arrU^* e'$.
For all $C$, $a$, if $~\wfctx{}{\GD}{C}$,
$\neg\undefp{C[e]}$ and $C[e]\arrS^* a$ then
there exists $a'$ and $j\ge 0$ such that
 $C[e']\arrS^* a'$ and either
 $(a,a')\in\relV_j^{\arrU} \text{ or } a\equiv a'\equiv \error{k}.$\end{corollary}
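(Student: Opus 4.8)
The plan is to read the corollary off directly from the Fundamental Property (\cref{lem:forward-fundamental}), exploiting the fact that the expression relation $\relE^{\arrU}$ already bakes the $\arrU$ reductions into its right-hand component. Concretely, I will instantiate the Fundamental Property at the \emph{closed} term $C[e]$ to obtain $(C[e],C[e])\in\relE_i^{\arrU}$, observe that $C[e]\arrU^* C[e']$, and then unfold the definition of $\relE_i^{\arrU}$ with $e_2'$ taken to be $C[e']$. Because $\relE_i^{\arrU}$ universally quantifies over all $\arrU$-reducts of its second argument, plugging in $C[e']$ yields exactly the promised answer $a'$.

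In detail, I would proceed in three moves. First, well-formedness bookkeeping: by the composition property that $\wfe{\GD}{e}$ and $\wfctx{\GD'}{\GD}{C}$ give $\wfe{\GD'}{C[e]}$, the hypothesis $\wfctx{}{\GD}{C}$ makes $C[e]$ closed; moreover the $\arru$ axioms U.1 and U.2 never introduce free variables, so $\wfe{\GD}{e'}$ holds and hence $C[e']$ is closed too. Second, I lift $e\arrU^* e'$ to $C[e]\arrU^* C[e']$ via the congruence rule Ctx.U: a single $\arrU$ step inside $e$ is an $\arru$ step under some context $D$, the composite $C\!\circ\! D$ is again a context, so $C[e]\arrU C[e']$ one step at a time, and the multi-step case follows by induction on the length of the sequence. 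Third, I apply \cref{lem:forward-fundamental} with $\GD$ empty and $\Gg$ the empty substitution (trivially in $\relG_i^{\arrU}[\emptyset]$), obtaining $(C[e],C[e])\in\relE_i^{\arrU}$ for every $i\ge 0$. Now fix $j_0$ with $C[e]\arrS^{j_0} a$ and take $i:\equiv j_0+1$; unfolding $\relE_i^{\arrU}$ at $(C[e],C[e])$ with $a_1:\equiv a$ and $e_2':\equiv C[e']$, and discharging its antecedents using $\neg\undefp{C[e]}$ and $C[e]\arrU^* C[e']$, produces an answer $a'$ with $C[e']\arrS^* a'$ such that either $(a,a')\in\relV_{i-j_0}^{\arrU}$ or $a\equiv a'\equiv\error{k}$. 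This is precisely the conclusion, taking the witness step index to be $i-j_0=1$.

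The real content of this argument is already discharged inside \cref{lem:forward-fundamental}, so I do not expect a deep obstacle. The point needing the most care is the congruence step $C[e]\arrU^* C[e']$: I must confirm that wrapping a closing context around an $\arrU$ reduction sequence is legal, i.e.\ that context composition preserves $\arrU$ and commutes with its reflexive-transitive closure. This is a routine induction, but it is the one place where the argument genuinely uses that $\arrU$ is a congruence (via Ctx.U) rather than merely a notion of reduction. The only other subtlety is tracking closedness of $C[e']$ so that $C[e']\arrS^* a'$ is meaningful, which holds because the $\arru$ axioms never add free variables.
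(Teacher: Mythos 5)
Your proposal is correct and follows essentially the same route as the paper's proof: compose the context with each step of $e\arrU^* e'$ to get $C[e]\arrU^* C[e']$, apply the Fundamental Property to the closed term $C[e]$ at index $i+1$ where $C[e]\arrS^i a$, and unfold $\relE_{i+1}^{\arrU}$ with $e_2':\equiv C[e']$ to extract $a'$ related at index $1$. Your extra care about the Ctx.U congruence step and the well-formedness of $C[e']$ is exactly the bookkeeping the paper's sketch leaves implicit, so there is nothing to correct.
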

\begin{proof}[Proof Sketch]Assume $C[e]\arrS^i a$. Because $\arrU$ allows us to apply
the transformation in any context, composing $C$
with each of the expressions in $e\arrU^* e'$
yields $C[e]\arrU^* C[e']$.
Now, the \nameref{lem:forward-fundamental} gives
$(C[e],C[e])\in\relE_{i+1}^{\arrU}$.
By $\neg\undefp{C[e]}$, $C[e]\arrS^i a$ and $C[e]\arrU^* C[e']$,
we conclude that there exists $a_2$ such that
$C[e']\arrS^* a_2$ and either $(a,a_2)\in\relV_1^{\arrU}$
or $a\equiv a_2\equiv\error{k}$.
The complete proof can be found on page 4 of Appendix E.\end{proof}\label{t:x28part_x28gentag_0x29x29}

\begin{FootnoteBlock}\FootnoteBlockContent{\FootnoteTarget{\textsuper{\label{t:x28x7ccounterx2dx28x29x7c_x28gentag_1x29x29}\textsf{1}}}The actual proof needs to generalize
$\Gg_2(e_c)$ and $\Gg_2(e_f)$ further. See \cref{prop:u-if}.}\end{FootnoteBlock}

\Ssubsection{Backward Approximation of $\arrU$ Reductions}{Backward Approximation of $\arrU$ Reductions}\label{t:x28part_x22secx3abackwardx2drelx22x29}

Having established the forward direction of \cref{prop:compiler-correctness}
for $\arrU$,
we turn to the backward one.
That is, we would like to show that the behavior of
the original expression approximates the behavior
of the transformed expression, assuming that the original expression does
not exhibit undefined behavior: if $e\arrU^* e'$ and $\neg\undefp{e}$ then

\noindent \begin{itemize}\atItemizeStart

\item $\neg\undefp{e'}$,

\item $\forall c.\,(e\arrS^* c \Longleftarrow e'\arrS^* c$),

\item $(\exists e_1.\ e\arrS^* \Func{x}{e_1}) \Longleftarrow (\exists e_1'.\  e'\arrS^* \Func{x}{e_1'})$, and

\item $\forall k.\,(e\arrS^* \error{k} \Longleftarrow e'\arrS^* \error{k}$).\end{itemize}

\noindent However, as discussed at the beginning of this section, the
$\neg\undefp{e}$ assumption complicates designing a
backward{-}approximation logical relation. In an ideal world, we
would like to proceed by induction on $e'$, as our assumption tells us a lot about
how it evaluates. Unfortunately, our assumption also includes $\neg\undefp{e}$, which
is not helpful when we are working by induction on $e'$.

As it turns out, the treatment of \texorpdfstring{\ensuremath{\ouro}}{unreachable} by the standard reduction
of our calculus as an error offers a way forward. While
$\neg\undefp{e}$ is not compositional, $\undefp{e}$ is. In detail, if
we know whether the sub{-}expressions of $\undefp{e}$ evaluate to
$\ouro$, we can decide whether $e\arrS^* \ouro$ based on the structure
of $e$. Accordingly, while working by induction on $e'$ in the proof, it is easier to establish an
additional proof goal $\undefp{e}$ than it is to discharge the premise
that the sub{-}expressions of $e$ do not evaluate to $\ouro$ (when applying the inductive hypothesis).
Therefore, we reshape the statement of
backward approximation: we omit conjunct $\neg\undefp{e}$ from the
premise and include $\undefp{e}$ as another disjunct in
the conclusion of backward approximation. This gives us an equivalent
proposition, but that is conducive to proof by induction.

To prove the reshaped property, we construct a
logical relation that guarantees the original expression produces a
related answer to that of the transformed expression, but also permits the
original expression to exhibit undefined behavior, just as we did for the forward
logical relation.
\begin{definition}[Backward-Approximation Logical Relation]\label{def:backward-rel}\[
\begin{array}{r@{}c@{}l}
\relV_i^{\uarr} & = &
\{ (\Gl x.e_1,\Gl x.e_2) \;|\;
 \forall j<i. \;
 \forall v_1\, v_2.\; (v_1,v_2)\in \relV_j^{\uarr} \Implies
 (\subst{e_1}{x}{v_1},\subst{e_2}{x}{v_2})\in \relE_j^{\uarr}
 \}
\\
& & \; \cup \; \lset{ (c,c) }
\\
\relE_i^{\uarr} & \;=\; &
\begin{array}[t]{@{}c@{}c@{}l@{}}
\{ (e_1,e_2) & \;|\; &
 \forall j < i. \;
 \forall e_2',a_2. \; \; e_2 \arrU^* e_2' \land e_2' \arrS^j a_2 \Implies \\
 & & \quad
 \undefp{e_1} \; \lor \\
 & & \quad
 (\exists a_1.\; e_1 \arrS^* a_1 \land
 ((a_1,a_2)\in \relV_{i-j}^{\uarr} \lor \exists k.\, a_1\equiv a_2\equiv \error{k})
 )\}
\end{array} \\
     \end{array}
\]\end{definition}

Same as for the forward{-}approximation logical relation, the value relation
$\relV_i^{\uarr}$ is standard, but the expression relation
$\relE_i^{\uarr}$ is not.  For any pair of related expressions
$(e_1,e_2)$, the antecedent of $\relE_i^{\uarr}$ is that $e_2
\arrU^* e_2'$ and $e_2'$ terminates with the answer $a_2$ after $j$
standard reduction steps.  The consequent then asserts that either $e_1$
exhibits undefined behavior, or $e_1$ also terminates with a related
answer $a_1$.  In particular, $\relE_i^{\uarr}$ relates $a_1$ and
$a_2$ if they are related \emph{values} or the same error.
Consequently, if $a_2\equiv\ouro$, i.e. $e_2'$ triggers undefined
behavior, $e_1$ must end up with undefined behavior as $\ouro$ is not
related to any other answers.

As with $\relE^{\arrU}$, we prove the
\nameref{lem:backward-fundamental}.

\begin{lemma}[Fundamental Property for $\relE^{\uarr}$]\label{lem:backward-fundamental}For all $e,\, i\ge 0,\, \GD$ and $\Gg$,
if $\wfe{\GD}{e}$ and $\Gg\in\relG_i^{\uarr}[\GD]$ then
$(\Gg_1(e), \Gg_2(e))\in \relE_i^{\uarr}$.

 \begin{remark*} Similar to the Fundamental Property for the forward logical approximation,
 $\relG_i^{\uarr}[\GD]$ is
the set of closing substitutions mapping $x\in\GD$ to a pair of values
related by $\relV_i^{\uarr}$.\end{remark*}\end{lemma}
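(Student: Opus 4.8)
The plan is to prove the Fundamental Property by structural induction on $e$ (equivalently, on the derivation of $\wfe{\GD}{e}$), with the statement universally quantified over the step index $i$, reducing it to one \emph{compatibility} fact per syntactic form. Two preliminaries set up the induction. First I would establish that $\relV_i^{\uarr}$ is downward closed in the index, so that $\Gg\in\relG_i^{\uarr}[\GD]$ stays in $\relG_j^{\uarr}[\GD]$ for every $j\le i$; this lets me instantiate the inductive hypothesis at smaller indices and is what closes the $\lambda$ case. Second I would prove a value-inclusion fact, that $(v_1,v_2)\in\relV_i^{\uarr}$ implies $(v_1,v_2)\in\relE_i^{\uarr}$. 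Closedness of $\Gg_1(e)$ and $\Gg_2(e)$ follows from $\wfe{\GD}{e}$ together with $\Gg\in\relG_i^{\uarr}[\GD]$. The variable and constant cases are then immediate from value inclusion, and the $\lambda$ case follows by applying the inductive hypothesis to the body under the extended substitution $\Gg[x\mapsto(v_1,v_2)]$ at each $j<i$.

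The bulk of the compound cases is shape-preserving: primitive operations, application, sequencing, and conditionals neither of whose branches is $\ouro$. For these, every $\arrU$ step on $\Gg_2(e)$ fires strictly inside a subexpression, so any reduct $e_2'$ keeps the outer shape of $\Gg_2(e)$. I would unfold the antecedent $\Gg_2(e)\arrU^* e_2'$ and $e_2'\arrS^j a_2$, peel the standard reduction of $e_2'$ along the evaluation-context structure, apply the inductive hypotheses to subexpressions as they are forced, and reassemble on the left. The feature that makes this work is that $\Lundefp$ is compositional: whenever a hypothesis puts a forced subexpression of $\Gg_1(e)$ into its $\undefp$ disjunct, the surrounding evaluation context sends all of $\Gg_1(e)$ to $\ouro$ (rule S.7), discharging the first disjunct of $\relE_i^{\uarr}$ for $e$; otherwise the subexpression produces a related value or the same error and I continue. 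The step budget $j$ is consumed by the reduction of $e_2'$, and downward closure of $\relV_i^{\uarr}$ keeps the residual indices admissible.

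The delicate cases are the conditionals with an $\ouro$ branch, $e\equiv\If{e_c}{\ouro}{e_f}$ and $e\equiv\If{e_c}{e_t}{\ouro}$, since rules U.2/U.1 can rewrite $\Gg_2(e)$ into a $\Seqk$ expression and change its outer shape, so a plain compatibility argument fails. I would isolate this in a dedicated lemma, the backward counterpart of the If-reshaping lemma used in the forward direction (the footnoted \cref{prop:u-if}), generalizing the two branches from the specific images $\Gg_1(e_c),\Gg_2(e_c)$ and $\Gg_1(e_f),\Gg_2(e_f)$ to arbitrary related expressions so the statement survives the induction. There I first note that a reduct $e_2'$ of $\Gg_2(\If{e_c}{\ouro}{e_f})$ is either still $\If{e_c'}{\ouro}{e_f'}$ (shape-preserving, handled as above) or $\Seq{e_c'}{e_f'}$, where $e_c',e_f'$ are $\arrU$-reducts of $\Gg_2(e_c),\Gg_2(e_f)$. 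In the $\Seqk$ subcase I case-split on how the head $e_c'$ evaluates under $\arrS$: divergence makes the premise $e_2'\arrS^j a_2$ vacuous; reaching a value forces $\Gg_1(e_c)$ (by the backward hypothesis on $e_c$) either to $\undefp$ --- whence $\undefp{\Gg_1(e)}$ by S.7 --- or to a related value, and then the branch that is not $\ouro$ matches the $\Seqk$ continuation so I finish with the backward hypothesis on that branch, while selecting the $\ouro$ branch makes the left side exhibit undefined behavior; reaching an error sends both sides to the same error by S.8. (The two conditional shapes are symmetric.)

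The step I expect to be the main obstacle is the remaining subcase, in which the head $e_c'$ evaluates to $\ouro$, so the $\Seqk$ yields $a_2\equiv\ouro$. Because $\ouro$ is neither a value nor an error, it is related to no answer; hence the backward hypothesis on $e_c$ can only be satisfied through its $\undefp$ disjunct, forcing $\Gg_1(e_c)\arrS^*\ouro$ and thus $\undefp{\Gg_1(e)}$ by evaluating the test in context. Getting this subcase right, with the indices and the interleaving of $\arrU$ and $\arrS$ steps accounted for, is exactly where the asymmetric design of $\relE^{\uarr}$ pays off: treating $\ouro$ as an error in the standard reduction and placing $\undefp{e_1}$ as a disjunct of the conclusion are what let the non-compositional undefined-behavior assumption be re-derived compositionally during the induction rather than assumed in advance.
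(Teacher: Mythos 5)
Your proposal is correct and takes essentially the same route as the paper's proof: structural induction on $e$, with the shape-preserving cases handled compositionally and the conditional-with-$\ouro$-branch cases dispatched through exactly the decomposition of $\arrU^*$ sequences that the paper isolates as \cref{prop:u-if}, followed by a case split on how the head of the resulting $\Seqk$ expression evaluates. You also identify the same crux the paper emphasizes---when the transformed side reaches $\ouro$, the relation forces the $\Lundefp$ disjunct because $\ouro$ is related to no answer---so the two arguments agree in both structure and key ideas.
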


\begin{proof}[Proof Sketch]By induction on $e$. Similar to the \nameref{lem:forward-fundamental},
we discuss the case of conditional expressions. The proofs for the other
constructs
follow from routine case analysis and the inductive hypothesis.
Let $e :\equiv \If{e_c}{e_t}{e_f}$, $i\ge 0$
and $\Gg\in\relG_i^{\uarr}[\GD]$ be given.
We need to prove that
\[
\begin{array}{@{}l@{}}\forall j < i. \;
\forall e_2' \, a_2. \; \; \If{\Gg_2(e_c)}{\Gg_2(e_t)}{\Gg_2(e_f)}
\arrU^* e_2' \land e_2' \arrS^j a_2 \Implies \\
\quad
\Lundefp\If{\Gg_1(e_c)}{\Gg_1(e_t)}{\Gg_1(e_f)}
\;\lor \\
\quad
(\exists a_1.\; \If{\Gg_1(e_c)}{\Gg_1(e_t)}{\Gg_1(e_f)} \arrS^* a_1 \land
((a_1,a_2)\in \relV_{i-j}^{\uarr} \lor \exists k.\, a_1\equiv a_2\equiv \error{k})
)\end{array}
\]
By \cref{prop:u-if}, $\If{\Gg_2(e_c)}{\Gg_2(e_t)}{\Gg_2(e_f)} \arrU^* e_2'$
can be decomposed into three reduction sequences
$\Gg_2(e_c)\arrU^* e_c'$, $\Gg_2(e_t)\arrU^* e_t'$ and $\Gg_2(e_f)\arrU^* e_f'$
such that either (i) $e_2'\equiv \If{e_c'}{e_t'}{e_f'}$,
(ii) $e_2'\equiv \Seq{e_c'}{e_t'}$ and $e_f'\equiv\ouro$, or
(iii) $e_2'\equiv \Seq{e_c'}{e_f'}$ and $e_t'\equiv\ouro$.
Among all situations, we focus on case (iii) since it contains an
application of $\arrU$ to the whole expression.
Now, if $a_2$ equals some value $v_f'$,
the evaluation $\Seq{e_c'}{e_f'}\arrS^j a_2$
must have the pattern
\[
\Seq{e_c'}{e_f'} \arrS^{j_c} \Seq{v_c'}{e_f'} \arrS e_f' \arrS^{j_f} v_f'.
\]
Therefore $e_c'\arrS^{j_c} v_c'$ and $e_f'\arrS^{j_f'} v_f'$ for some
steps $j_c'$ and $j_f'$.
Together with the transformations $\Gg_2(e_c)\arrU^* e_c'$ and
$\Gg_2(e_f)\arrU^* e_f'$, the induction hypothesis yields
\begin{equation}
\begin{array}{@{}l@{}}\undefp{\Gg_1(e_c)} \lor
(\exists a_c.\; \Gg_1(e_c) \arrS^* a_c \land
((a_c,v_c')\in \relV_{i-j_c}^{\uarr} \lor
\exists k.\, a_c\equiv v_c'\equiv \error{k})
)
\\
\undefp{\Gg_1(e_f)} \lor
(\exists a_f.\; \Gg_1(e_f) \arrS^* a_f \land
((a_f,v_f')\in \relV_{i-j_c-1-j_f}^{\uarr} \lor
\exists k.\, a_f\equiv v_f'\equiv \error{k})
)\end{array}
\tag{$\star$}
\label{eq:if-IH-res}
\end{equation}
Because $v_c'$ ($v_f'$) is a value, the $\error{k}$ case
in \labelcref{eq:if-IH-res} cannot happen.
The answer $a_c$ ($a_f$) accordingly is related to $v_c'$ ($v_f'$).
As a result, if $\undefp{\Gg_1(e_c)}$ or $\undefp{\Gg_1(e_f)}$,
i.e. a subexpression triggers undefined behavior,
the treatment of \texorpdfstring{\ensuremath{\ouro}}{unreachable} by the standard reduction guarantees that
$\Lundefp\If{\Gg_1(e_c)}{\Gg_1(e_t)}{\Gg_1(e_f)}$.
Otherwise, we have $\Gg_1(e_c) \arrS^* a_c \land (a_c,v_c')\in \relV_{i-j_c}^{\uarr}$
and $\Gg_1(e_f) \arrS^* a_f \land (a_f,v_f')\in \relV_{i-j_c-1-j_f}^{\uarr}$.
The evaluation of $\If{\Gg_1(e_c)}{\Gg_1(e_t)}{\Gg_1(e_f)}$
thus depends on whether $a_c$ is $\False$ or not.

When $a_c$ is $\False$, $\If{\Gg_1(e_c)}{\Gg_1(e_t)}{\Gg_1(e_f)}$
reduces $\Gg_1(e_f)$. Hence the fact
$(a_f,v_f')\in \relV_{i-j_c-1-j_f}^{\uarr}$ entails our desired conclusion.
When $a_c$ is truthy, $\If{\Gg_1(e_c)}{\Gg_1(e_t)}{\Gg_1(e_f)}$
reduces to $\Gg_1(e_t)$. Since $\Gg_2(e_t)\arrU^* e_t'$ and
$e_t'\equiv\ouro$ in case (iii), $e_t$ itself must be \texorpdfstring{\ensuremath{\ouro}}{unreachable}.
Hence, the full expression $\If{\Gg_1(e_c)}{\Gg_1(e_t)}{\Gg_1(e_f)}$
terminates with \texorpdfstring{\ensuremath{\ouro}}{unreachable}. Nevertheless,
exhibiting undefined behavior is precisely one
of the expected behaviors.

At this point, we have covered the sub{-}case $a_2\equiv v_f'$ of case (iii).
The sub{-}case where $a_2$ is \texorpdfstring{\ensuremath{\ouro}}{unreachable} is proved with the same strategy.
The complete proof can be found on page 5 of Appendix F.\end{proof}

The proof of the \nameref{lem:backward-fundamental}
relies on the following property of  sequences of $\arrU$ reductions:

\noindent \begin{proposition}\label{prop:u-if}
If $\If{e_c}{e_t}{e_f}\arrU^* e'$, there exists three transformation
sequences $e_c\arrU^* e_c'$, $e_t\arrU^* e_t'$ and $e_f\arrU^* e_f'$
such that one of the following holds:

\noindent \begin{itemize}\atItemizeStart

\item $e' \equiv \If{e_c'}{e_t'}{e_f'}$

\item $e' \equiv \Seq{e_c'}{e_t'}$ and $e_f'\equiv\ouro$, or

\item $e' \equiv \Seq{e_c'}{e_f'}$ and $e_t'\equiv\ouro$\end{itemize}\end{proposition}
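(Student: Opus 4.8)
The plan is to prove the statement by induction on the length $n$ of the reduction sequence $\If{e_c}{e_t}{e_f}\arrU^n e'$, maintaining as an invariant exactly the three-way disjunction in the conclusion together with accompanying subsequences $e_c\arrU^* e_c'$, $e_t\arrU^* e_t'$, and $e_f\arrU^* e_f'$. For $n=0$ I take $e_c'\equiv e_c$, $e_t'\equiv e_t$, $e_f'\equiv e_f$ with empty subsequences, which lands in the first case. For the inductive step I assume the invariant for a prefix $\If{e_c}{e_t}{e_f}\arrU^n e''$ and analyze the final step $e''\arrU e'$.

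The engine of the argument is a context-decomposition observation about a single $\arrU$ step. By rule \rul{Ctx.U}, $e''\arrU e'$ means $e''\equiv\ctx\hole{r}$ and $e'\equiv\ctx\hole{r'}$ for some context $\ctx$ and some $\arru$-redex $r\arru r'$. The key syntactic fact is that every $\arru$-redex is $\If$-shaped: it is either $\If{e_c''}{\ouro}{e_f''}$ (rule \rul{U.1}) or $\If{e_c''}{e_t''}{\ouro}$ (rule \rul{U.2}); in particular no $\arru$-redex has $\Seq$ at its root. Consequently, when $e''$ has a $\Seq$ at the root the hole of $\ctx$ cannot be at the top, so it must sit inside one of the two subterms; and when $e''$ has an $\If$ at the root, either $\ctx\equiv\hole{}$ (so the whole $\If$ is the redex) or the hole lies inside exactly one of the three subterms, according to the three $\If$-productions of the context grammar.

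I then split on which case of the invariant holds for $e''$. If $e''\equiv\If{e_c'}{e_t'}{e_f'}$, the decomposition gives two possibilities: either the redex is internal to one of $e_c',e_t',e_f'$, in which case I extend the corresponding subsequence by one step and remain in the first case; or $\ctx\equiv\hole{}$ and the whole term is rewritten, where a \rul{U.1} step forces $e_t'\equiv\ouro$ and yields $e'\equiv\Seq{e_c'}{e_f'}$ (the third case), while a \rul{U.2} step forces $e_f'\equiv\ouro$ and yields $e'\equiv\Seq{e_c'}{e_t'}$ (the second case), the subsequences being inherited unchanged. If instead $e''$ is already a $\Seq$ (the second or third case), the top-level decomposition rules out a redex at the root, so the step is internal to one of the two retained subterms; I extend that subsequence and stay in the same case, the frozen branch ($e_f'\equiv\ouro$ or $e_t'\equiv\ouro$, respectively) being left untouched.

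The conceptual crux, and the only place needing care, is the observation that the outermost shape of the term is monotone under $\arrU$: an $\If$ can turn into a $\Seq$ at most once, and only via \rul{U.1} or \rul{U.2}, which record the eliminated branch as $\ouro$, whereas a $\Seq$ can never revert, precisely because $\arru$ has no axiom with a $\Seq$ redex and no axiom producing an $\If$. This monotonicity is what guarantees that every $\arrU^*$ reduction factors through at most a single top-level rewrite while keeping the three subexpression reductions independent. The remaining work, namely justifying the single-step context decomposition and checking that extending a subsequence preserves the invariant, is routine inversion on the context grammar and on the two $\arru$ axioms.
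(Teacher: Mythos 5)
Your proof is correct and follows essentially the same route as the paper, which likewise proceeds by induction on the reduction sequence $\If{e_c}{e_t}{e_f}\arrU^* e'$; your added observations---that every $\arru$-redex (\rul{U.1}, \rul{U.2}) is $\If$-rooted, never $\Seq$-rooted, so the top-level shape changes at most once and the invariant is preserved by inversion on \rul{Ctx.U} and the context grammar---are exactly the details the paper's sketch leaves to its appendix. No gaps.
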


\noindent \begin{proof}[Proof Sketch]By induction on $\If{e_c}{e_t}{e_f}\arrU^* e'$.
The complete proof can be found on page 12 of Appendix C.\end{proof}

Analogously to the Fundamental Property for the forward{-}approximation relation, a
corollary of the \nameref{lem:backward-fundamental} is the backward direction of
\cref{prop:compiler-correctness} specialized to  $\arrU$: either the
transformed program and the original program evaluate to
related answers, or the original program exhibits undefined
behavior:

\begin{corollary}\label{coro:backward-correct}Assume that $\wfe{\GD}{e}$ and $e \arrU^* e'$.
For all $C$ and $a'$, if $\wfctx{}{\GD}{C}$ and
$C[e']\arrS^* a'$ then either $\undefp{C[e]}$
or there exists $a$ and $j\ge 0$ such that
$C[e]\arrS^* a$ and either
$(a,a')\in\relV_j^{\arrU} \text{ or } a\equiv a'\equiv \error{k}.$
\begin{remark*}In \cref{coro:backward-correct}, $a'$ can be
\texorpdfstring{\ensuremath{\ouro}}{unreachable} in which case the conclusion is $\undefp{C[e]}$.\end{remark*}\end{corollary}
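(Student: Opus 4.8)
The plan is to mirror the proof of \cref{coro:forward-correct}, substituting the \nameref{lem:backward-fundamental} for its forward counterpart. First I would use that the $\arrU$ axioms are closed under arbitrary contexts (rule \rul{Ctx.U}): applying $C$ to every expression of the sequence $e\arrU^* e'$ lifts it to $C[e]\arrU^* C[e']$. Since $\wfctx{}{\GD}{C}$ maps $\GD$-expressions to closed expressions and $\wfe{\GD}{e}$ holds, the stated composition property of the well-formedness judgments yields $\wfe{\emptyset}{C[e]}$, so $C[e]$ is closed (and, as the $\arrU$ rules introduce no free variables, so is $C[e']$). In particular the empty substitution belongs to $\relG_i^{\uarr}[\emptyset]$ for every $i$, which is exactly what is needed to invoke the Fundamental Property at the closed expression $C[e]$.

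Next I would instantiate the \nameref{lem:backward-fundamental} at $C[e]$ with the empty substitution, obtaining $(C[e],C[e])\in\relE_i^{\uarr}$ for all $i\ge 0$. Writing the hypothesis as $C[e']\arrS^{i} a'$ for a concrete $i$, I would then reason at step index $i+1$. Unfolding $\relE_{i+1}^{\uarr}$ and choosing its universally quantified witnesses to be $e_2':\equiv C[e']$, $a_2:\equiv a'$, and $j:\equiv i$ (note $j<i+1$), the antecedent $C[e]\arrU^* C[e'] \land C[e']\arrS^{i} a'$ is precisely what the first step supplies. The consequent is then the disjunction we want: either $\undefp{C[e]}$, or there exists an answer $a$ with $C[e]\arrS^* a$ and $(a,a')\in\relV_{1}^{\uarr}$ or $a\equiv a'\equiv\error{k}$. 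Taking the existential step index in the statement to be $1$ closes the argument.

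The degenerate case flagged in the remark, where $a'\equiv\ouro$, is handled for free: $\ouro$ is not an error and is related to no value by $\relV^{\uarr}$, so the second disjunct of the consequent is impossible and the relation must select $\undefp{C[e]}$, which is exactly the stated conclusion in that case. I do not expect a real obstacle at the level of this corollary, because it is a direct specialization of the backward Fundamental Property: all the genuine difficulty --- the fact that $\neg\undefp{\cdot}$ is not compositional, circumvented by reshaping the property so that $\undefp{\cdot}$ appears as a conclusion disjunct rather than a hypothesis --- has already been absorbed into \cref{lem:backward-fundamental}. The only bookkeeping that must be handled with care is the step-index arithmetic and the appeal to $\wfctx{}{\GD}{C}$ to ensure that $C[e]$ is closed, since closedness is what licenses instantiating the logical relation with the empty substitution.
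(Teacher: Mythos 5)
Your proposal is correct and takes essentially the same route as the paper's proof: lift $e\arrU^* e'$ to $C[e]\arrU^* C[e']$ by closure under contexts, apply the backward Fundamental Property to get $(C[e],C[e])\in\relE_{i+1}^{\uarr}$, and instantiate the relation with $j:=i$, $e_2':=C[e']$, $a_2:=a'$ to obtain exactly the stated disjunction at index $\relV_1^{\uarr}$. Your extra bookkeeping---the closedness of $C[e]$ licensing the empty substitution, the step-index arithmetic, and the observation that $a'\equiv\ouro$ forces the $\undefp{C[e]}$ disjunct---is sound and merely makes explicit what the paper's sketch leaves implicit.
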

\begin{proof}[Proof Sketch]Assume $C[e']\arrS^i a'$. Similar to \cref{coro:forward-correct},
we compose $C$ with each expression in $e\arrU^* e'$
to obtain $C[e]\arrU^* C[e']$.
By the \nameref{lem:backward-fundamental},
$(C[e],C[e])\in\relE_{i+1}^{\uarr}$.
The fact that $C[e]\arrU^* C[e']$ and $C[e']\arrS^i a'$
yields $\undefp{C[e]}$
or that there exists $a_1$ such that
$C[e]\arrS^* a_1$ and either $(a_1,a')\in\relV_1^{\uarr}$
or $a_1\equiv a'\equiv\error{k}$.
The complete proof can be found on page 2 of Appendix F.\end{proof}

\Ssubsection{Completing the Proof of Correctness of the Compile{-}Time Reductions}{Completing the Proof of Correctness of the Compile{-}Time Reductions}\label{t:x28part_x22secx3acorrectx22x29}

The fundamental properties of the two above logical relations entail
only one piece of the correctness of our compile{-}time reduction relation.
After all, $\arrC$ includes $\arrP$ as well as $\arrU$.

Fortunately, and unlike the challenging $\arrU$ reductions, $\arrP$
reductions preserve the meaning of expressions in all contexts without
any conditions. Hence, the proof of \cref{prop:compiler-correctness}
specialized to  $\arrP$, i.e. the correctness of the  $\arrP$
reductions, can be established the standard way we discuss at the
beginning of this section {---} with a standard logical relation. Indeed,
the correctness of the $\arrP$ reductions  is a simple collorary of
\cref{lem:std-sound}, which we discuss in
\ChapRefUC{\SectionNumberLink{t:x28part_x22secx3aconventionalx2dtransformationsx22x29}{5}}{Extending the Calculus
with Additional Rules} that introduces extensions of
the compile{-}time reduction with useful, semantics{-}preserving but
unrelated{-}to{-}$\ouro$ transformations. To avoid repetition, we omit
further discussion herein. The interested reader can also find the
 the full formal details in page 28 of Appendix G and Appendix H.

With the correctness of $\arrU$ and $\arrP$ in hand, we proceed to
prove the correctness of $\arrC$. In fact, we establish a  a generalized
compiler correctness theorem that adapts
\cref{prop:compiler-correctness} to open expressions, formalizing the
intuitive idea that compiler transformations preserve the semantics of
program pieces. In keep with the nature of undefined behavior, the
theorem only holds for contexts $C$ that close an expression $e$ so that
$\neg\undefp{C[e]}$:

\NamedTheorem{Generalized Compiler Correctness}{\label{thm:compiler-correctness}}%
{Assume that $\wfe{\GD}{e}$ and $e \arrC^* e'$.
For all $C$ such that $\wfctx{}{\GD}{C}$,
if $\neg\undefp{C[e]}$ we have

\noindent \begin{itemize}\atItemizeStart

\item $\neg\undefp{C[e']}$,

\item $\forall c.\ C[e]\arrS^* c \iff C[e'] \arrS^* c$,

\item $(\exists e_1. C[e]\arrS^* \Func{x}{e_1}) \iff (\exists e_1'. C[e'] \arrS^* \Func{x}{e_1'})$, and

\item $\forall k.\ C[e]\arrS^* \error{k} \iff C[e'] \arrS^* \error{k}$\end{itemize}}

\noindent \begin{proof}[Proof Sketch]By induction on $e\arrC^* e'$
and application of \cref{coro:forward-correct} and
\cref{coro:backward-correct} for each $\arrU$ reduction.
For $\arrP$ reductions, we apply \cref{lem:std-sound}. The complete proof
can be found on Appendix J.\end{proof}

\sectionNewpage

\Ssection{Extending the Calculus
with Additional Rules}{Extending the Calculus
with Additional Rules}\label{t:x28part_x22secx3aconventionalx2dtransformationsx22x29}

The $\arrU$ and $\arrP$ reductions capture the essence of
\texorpdfstring{\ensuremath{\ouro}}{unreachable} but they are not sufficient to describe realistic
compile{-}time transformations. Production compilers combine common
transformations such as Common Subexpression Elimination, Loop Unrolling, and Strength Reduction
with the \texorpdfstring{\ensuremath{\ouro}}{unreachable}{-}related transformations; the transformations work in tandem, as one
transformation may open up additional optimization opportunities for another.
For this reason, we add extra rules to the \texorpdfstring{\ensuremath{\ouro}}{unreachable} calculus that
enhance its power to capture realistic program transformations. In fact,
any extra rule is compatible with the calculus  under one
requirement: the new rule must be sound with respect to contextual
equivalence as induced by the standard reduction of the calculus.

\begin{Figure}\begin{Centerfigure}\begin{FigureInside}\ContextPreservingRules\end{FigureInside}\end{Centerfigure}

\Centertext{\Legend{\FigureTarget{\label{t:x28counter_x28x22figurex22_x22figx3acontextpreservingrulesx22x29x29}\textsf{Fig.}~\textsf{9}. }{t:x28counter_x28x22figurex22_x22figx3acontextpreservingrulesx22x29x29}\textsf{Compile{-}Time Transformations Based on Standard Reduction}}}\end{Figure}

As an example of extra compile{-}time rules, rules M.1 to M.5 in
Figure~\hyperref[t:x28counter_x28x22figurex22_x22figx3acontextpreservingrulesx22x29x29]{\FigureRef{9}{t:x28counter_x28x22figurex22_x22figx3acontextpreservingrulesx22x29x29}} make available at compile{-}time the
standard notions of reduction of the calculus from
Figure~\hyperref[t:x28counter_x28x22figurex22_x22figx3astandardx2dreductionx22x29x29]{\FigureRef{3}{t:x28counter_x28x22figurex22_x22figx3astandardx2dreductionx22x29x29}}. In essence,
they allow the compiler to partially {``}compute{''} forward and backward in
any sub{-}expression of a program:

\begin{itemize}\atItemizeStart

\item  M.1 corresponds to the reverse notion of reduction S.2. It enables
a compiler to  {``}reverse the evaluation{''} of conditional expressions whose test is known to be not
false.  The else{-}branch of the produced conditional expression can be any
arbitrary expression as long as it doesn{'}t introduce free variables not
accounted for by $\Delta$ {---} even expressions that contain
\texorpdfstring{\ensuremath{\ouro}}{unreachable}.

\item M.2 is the reverse of notion of reduction S.1, and similarly to M.1 produces conditional
expression that in this case have a test that is equal to false.

\item M.3 is a generalized reversed beta reduction. Specifically, it allows any
\emph{safe} sub{-}expression to be lifted
out of an expression, which, in the compiler realm, is useful for
modeling transformations such as Common Subexpression Elimination.

\item M.4 corresponds to the notion of reduction S.4. It permits the compiler
to drop all expressions in a sequence of expressions except the last one
as long as these dropped expressions evaluate to a value. This rule
together with rules U.1 and U.2 plays an important rule for the
simplification of the examples in \ChapRefUC{\SectionNumberLink{t:x28part_x22secx3aexamplesx22x29}{2}}{The Essence of Unreachable, by Example}.

\item M.5 is the reverse of notion of reduction S.5, which allows for the
backward {``}evaluation{''} of arithmetic expressions.

\item \textsc{Ctx.M} and
\textsc{CtxSym.M} define the symmetric and compatible closure (over
contexts) of the above rules. Hence, they allow compilers to use all
these rules forward and backward, and in any part of a program.\end{itemize}

\begin{Figure}\begin{Centerfigure}\begin{FigureInside}\AdditionalRMRules
 $\vspace*{.05in}$\end{FigureInside}\end{Centerfigure}

\Centertext{\Legend{\FigureTarget{\label{t:x28counter_x28x22figurex22_x22figx3aadditionalrmrulesx22x29x29}\textsf{Fig.}~\textsf{10}. }{t:x28counter_x28x22figurex22_x22figx3aadditionalrmrulesx22x29x29}\textsf{Additional $R_m$ rules}}}\end{Figure}

In addition to the forward and backward compile{-}time partial evaluation of
expressions, compilers come with a large number of transformation rules
that move the sub{-}expressions of a program.
 Figure~\hyperref[t:x28counter_x28x22figurex22_x22figx3aadditionalrmrulesx22x29x29]{\FigureRef{10}{t:x28counter_x28x22figurex22_x22figx3aadditionalrmrulesx22x29x29}}
contains a collection of such rules.

\begin{itemize}\atItemizeStart

\item M.6 and M.7 enable the compiler to move computation from the context of  a sequence and a conditional expression respectively
to tail position. The $\evalctx^+$ represents
a generalized version of evaluation context which admits variables.

\item M.8 allows the compiler to optimize conditional expressions whose test
is  a variable (say $x$). If the else{-}branch of the expression is ever evaluated,
then it must be the case that $x$ is $\False$. Note the same is not
correct for the then{-}branch and $\True$ due to the truthiness of the language.

\item M.9 permits the compiler to collapse certain conditional expressions
whose test is a trivial conditional statement as well. This is helpful for modeling optimizations for
languages with conditional select statements such as the LLVM IR.

\item M.10 empowers the compiler to change the order of equality checks in
nested conditional expressions when the branches involved are
syntactically identical.  Such rearrangements of checks can model the
behavior of transformations of switch statements in the LLVM IR.\end{itemize}

To establish the correctness of $\arrM$, we demonstrate that each of its
rules preserves contextual equivalence (as induced by the standard
reduction of the calculus). We do so following the standard recipe that we
discuss at the beginning of \ChapRefUC{\SectionNumberLink{t:x28part_x22secx3acorrectnessx2dcompilerx22x29}{4}}{The Correctness of the Compile{-}Time Reductions}: we define
a standard binary step{-}indexed logical relation for the calculus, we prove
it sound with respect to contextual approximation, and then we use the
logical relation to prove correct each rule of $\arrM$. Specifically,
for all $i\ge 0$, we define value and expression logical approximation
  at step $i$ to be $\relV_i^{\arrS}$ and $\relE_i^{\arrS}$.

\begin{definition}[Standard Logical Relation]\[
  \begin{array}{@{}r@{}c@{}l@{}}
  \relV_i^{\arrS} & = &
  \{ (\Gl x.e_1,\Gl x.e_2) \;|\;
   \forall j<i. \;
   \forall v_1\, v_2.\; (v_1,v_2)\in \relV_j^{\arrS} \Implies
   (\subst{e_1}{x}{v_1},\subst{e_2}{x}{v_2})\in \relE_j^{\arrS}
   \}
  \\
  && \; \cup \lset{ (c,c) }
\\
  \relE_i^{\arrS} & \;=\; &
  \left\{\vphantom{\relV^{\arrS}_{i-j}}\right. \!
   (e_1,e_2) \,\left|\vphantom{\relV^{\arrS}_{i-j}}\right.\,
   \forall j < i. \; \forall a_1. \; e_1 \arrS^j a_1 \Implies \\
   & & \quad \quad \quad \quad \quad
   \exists a_2.\; e_2 \arrS^* a_2 \; \land \\
   & & \quad \quad \quad\quad\quad \quad
   \left((a_1,a_2)\in \relV_{i-j}^{\arrS} \lor
   (a_1 \equiv a_2 \equiv \ouro) \lor
   (\exists k.\, a_1\equiv a_2\equiv \error{k})\right)
   \! \left.\vphantom{\relV^{\arrS}_{i-j}}\right\}
    \end{array}
\]\end{definition}

The relations $\relV^{\arrS}$ and $\relE^{\arrS}$ are straight{-}forward
adaptations of standard binary step{-}indexed logical relations for proving
contextual equivalences in functional languages. According to
$\relV^{\arrS}$, base values, errors, and \texorpdfstring{\ensuremath{\ouro}}{unreachable} are only related to
themselves. Two functions are related only when, given related arguments,
they produce related results.  $\relE^{\arrS}$ relates closed
expressions $e$, $e'$ as long as $e$ approximates $e'$ after at
most $i$ steps under the standard reduction of the calculus.  As usual,
the step indices guarantee that the relations are well{-}founded.

$\relE^{\arrS}$ relates only closed expressions for a given number of steps.
To prove contextual approximation for a pair of expressions, a stronger
statement that generalizes over open expressions and arbitrary number of
steps is necessary. Hence, we define logical approximation $\GD\vdash
e_1 \preceq e_2$:

\begin{definition}[Logical Approximation] For all $\GD$, $e_1$, $e_2$ such that
 $\wfe{\GD}{e_1}$ and $\wfe{\GD}{e_2}$, we say that $e_1$ logically
 approximates $e_2$,
  $\GD\vdash e_1 \preceq e_2$, iff
 \[
\displaystyle
\forall i \geq 0.\;
\forall \Gg\in\relG_i^{\arrS}[\GD].\;
(\Gg_1(e_1), \Gg_2(e_2))\in\relE^{\arrS}_i.
\]
where $\relG_i^{\arrS}[\GD]$ is as in
\ChapRefUC{\SectionNumberLink{t:x28part_x22secx3acorrectnessx2dcompilerx22x29}{4}}{The Correctness of the Compile{-}Time Reductions} except that it draws pairs of values
from $\relV^{\arrS}$.  \end{definition}

Due to the compatibility lemmas of the standard logical
relation, logical approximation is sound with respect to
contextual approximation. The complete proof is on page 28
of Appendix G.

\NamedLemma{Soundness}{\label{lem:std-sound}
 If $\GD\vdash e \preceq e'$ and $\cdot \mmodels C:\GD$ then

\noindent \begin{itemize}\atItemizeStart

\item $\undefp{C[e]} \implies \undefp{C[e']}$,

\item $\forall c.\ C[e]\arrS^* c \implies C[e'] \arrS^* c$,

\item $(\exists e_1.\ C[e]\arrS^* \Func{x}{e_1}) \implies
(\exists e_1'.\ C[e'] \arrS^* \Func{x}{e_1'})$, and

\item $\forall k.\ C[e]\arrS^* \error{k} \implies C[e'] \arrS^* \error{k}$.\end{itemize}}

Finally, given the soundness of the standard logical relation,
we prove that  the $\arrM$ reductions
are correct by demonstrating that all expressions before
and after each reduction logically approximate each other.
  For instance, to prove M.1 is correct, we
show:

\noindent \begin{itemize}\atItemizeStart

\item If $v\not\equiv\False$, $\GD \mmodels e_t$ and $\GD \mmodels e_f$ then $\GD \vdash e_t \preceq \If{v}{e_t}{e_f}$

\item If $v\not\equiv\False$, $\GD \mmodels e_t$ and $\GD \mmodels e_f$ then $\GD \vdash \If{v}{e_t}{e_f} \preceq e_t$.\end{itemize}

\noindent \noindent The complete proof can be found on page 14 of Appendix H, where the proof for other $R_m$ rules are also located.

\sectionNewpage

\Ssection{Beyond Unreachable Code}{Beyond Unreachable Code}\label{t:x28part_x22secx3aotherx2dubx22x29}

Beyond simple unreachability, \texorpdfstring{\ensuremath{\ouro}}{unreachable} can model a broad
range of undefined behaviors, including division by zero,
arithmetic overflow and under flow, and null pointer
dereferencing. In particular, we can use \texorpdfstring{\ensuremath{\ouro}}{unreachable} to
encode undefined behaviors due to uses of primitive
operators with an illegal argument. For instance, integer
division is well{-}defined over all integers unless the
divisor is zero. Therefore, a compiler can assume that the
erroneous divisor is never supplied to the operation. In
other words, calls to division with a zero divisor are
unreachable.

Abstractly, if some operation $\mathcal{P}$ is undefined over some
portion of its domain $\mathcal{X}_\mathit{undef}$, then we can encode such an
operation as a conditional wrapper of the raw operation
$\mathcal{P}_{r}$:
\[
\mathcal{P} = \Func{x}{
 \If{(x \in \mathcal{X}_\mathit{undef})}{
  \ouro
  }{
  (\App{\mathcal{P}_r}{x})
 }
}
\]
As a result, with an $\arrU$ reduction,
a call $(\App{\mathcal{P}}{x})$ reduces to
$(\App{\mathcal{P}_{r}}{x})$ only if $x \not\in \mathcal{X}_\mathit{undef}$.
In essence, the insertion of \texorpdfstring{\ensuremath{\ouro}}{unreachable}
signals to the compiler that the undefined behavior never occurs, and therefore,
 the compiler is allowed to optimize away the checks, exposing the
raw operation. The beauty of this encoding, however, is that it other
transformations that match the kinds of optimizations that compilers
do, but without needing any new \texorpdfstring{\ensuremath{\ouro}}{unreachable}{-}specific rules.

To see how this captures more than simply eliminating the checks, we explore
the undefined behavior of signed integer addition. According to the C
standard\Autobibref{~(\hyperref[t:x28autobib_x22International_Organization_for_StandardizationISOx2fIEC_14882x3a2011_Cx2bx2b_Standard2011httpsx3ax2fx2fwwwx2eisox2eorgx2fstandardx2f50372x2ehtmlx22x29]{\AutobibLink{International Organization for Standardization}} \hyperref[t:x28autobib_x22International_Organization_for_StandardizationISOx2fIEC_14882x3a2011_Cx2bx2b_Standard2011httpsx3ax2fx2fwwwx2eisox2eorgx2fstandardx2f50372x2ehtmlx22x29]{\AutobibLink{2011}})}, {``}If during the evaluation of an
expression, the result is not mathematically defined or not in the range
of representable values for its type the behavior is undefined{''}.
Put differently, if the result of signed integer addition would be greater than
\Scribtexttt{INT{\char`\_}MAX} or smaller than \Scribtexttt{INT{\char`\_}MIN}, then the addition exhibits
undefined behavior. For instance, consider the C code snippet
\Scribtexttt{x {\Stttextless} x + 1}. If a C compiler
decides to implement \Scribtexttt{+} by deferring to the underlying
machine arithmetic, and that arithmetic is two{'}s complement,
then the code snippet is equivalent to \Scribtexttt{x == INT{\char`\_}MAX}.
If, however, the compiler decides that overflow is undefined behavior, and therefore
assuming it assumes that overflow never happen, then it is justified to
compile to a constant expression that is always true.
Indeed, gcc v7.5 (with the default options) produces code semantically
(but not syntactically)
equivalent to comparing \Scribtexttt{x} with \Scribtexttt{INT{\char`\_}MAX}, but gcc v8.1
(with the default options) compiles the snippet to \Scribtexttt{1}.

Our calculus can capture both of these possibilities. To do so, the
calculus is extended with primitive operators, $=_\mathbb{Z}$, $\ne_\mathbb{Z}$, $<_\mathbb{Z}$ and
$+_\mathbb{Z}$, where each operator
corresponds to its mathematical counterpart,
which is well{-}defined for all (mathematical) integers.
Additionally, $x +_\texttt{int} y$, which is akin to C{'}s signed integer addition,
is a shorthand for the
conditional expression
\[
\begin{array}{@{}l@{}}
\If{(\mathsf{MAX}_\texttt{int} <_\mathbb{Z} x +_\mathbb{Z} y)}{ \\
 \quad \ouro
 }{ \\
 \quad
 \If{(x +_\mathbb{Z} y <_\mathbb{Z} \mathsf{MIN}_\texttt{int})}{\\
  \quad \quad \ouro
  }{ \\
  \quad \quad (x +_\mathbb{Z} y)
 }
}
\end{array}
\]
Given these operators, our calculus
is able to equate the expression
$x <_\mathbb{Z} (x +_\texttt{int} 1)$ with both the
expressions $x \ne_\mathbb{Z} \mathsf{MAX}_\texttt{int}$ and $\True$.
First, the \texorpdfstring{\ensuremath{\ouro}}{unreachable} calculus can equate
$x <_\mathbb{Z} (x +_\texttt{int} 1)$ with $\True$ by eliminating $\ouro$s in $x +_\texttt{int} 1$ with
$\arrU$ reductions. Second, the calculus can  equate $x <_\mathbb{Z} (x +_\texttt{int} 1)$
with $x \ne_\mathbb{Z} \mathsf{MAX}_\texttt{int}$ with successive applications of $\arrM$
reductions. The basic idea is that an $\arrM$ reduction can
perform a reverse standard reduction to introduce a conditional that
checks whether $x$ is $\mathsf{MAX}_\texttt{int}$ or not.
This transformation enables further $\arrM$ reductions in each of the two
branches taking advantage of the assumption that the result of the
conditional{'}s check is different for the then
and else branch. The complete proof can be found on Appendix K.

\sectionNewpage

\Ssection{Unreachable in Racket on Chez}{Unreachable in Racket on Chez}\label{t:x28part_x22Unreachablex5finx5fRacketx5fonx5fChezx22x29}

The simplicity of the \texorpdfstring{\ensuremath{\ouro}}{unreachable} calculus raises the question of its
relation to production compilers. In this section, we look at the
relation between the reduction rules of our calculus and the way Racket{'}s
compiler transforms unreachable code.

Racket{'}s core compiler is Chez Scheme. Overall, Chez Scheme iterates
through a sequence of source{-}to{-}source passes a configurable number of
times (the default is two).  Chez Scheme{'}s support for unreachable is
primarily in the \Scribtexttt{cptypes} pass, which performs source{-}to{-}source
optimizations based on type inference.  In addition, transformations that
involve simplifications for unused variables or expressions are also part
of the earlier \Scribtexttt{cp0} source{-}to{-}source pass.

The notions of reduction in
Figure~\hyperref[t:x28counter_x28x22figurex22_x22figx3acompilerruleUx22x29x29]{\FigureRef{6}{t:x28counter_x28x22figurex22_x22figx3acompilerruleUx22x29x29}} and Figure~\hyperref[t:x28counter_x28x22figurex22_x22figx3acompilerrulePx22x29x29]{\FigureRef{5}{t:x28counter_x28x22figurex22_x22figx3acompilerrulePx22x29x29}} map to
specific lines in the implementation of Chez Scheme (the corresponding
full files are part of the supplemental Appendix M):

Rules U.1 and U.2 correspond directly to the case that matches
 \RktSym{if} forms in the implementations of the  \Scribtexttt{cptypes} pass.
 Specifically, the \Scribtexttt{cptypes} pass optimizes recursively  both  branches of
 an \RktSym{if} form. Together with the optimized branches, the pass
 returns additional information, including a type.  If the pass determines
 than one of the branches never returns then the corresponding type is
 \RktVal{{\textquotesingle}}\RktVal{bottom}. The
 pass uses  the
 \RktSym{unsafe{-}unreachable{\hbox{\texttt{?}}}} predicate to identify a non{-}returning
 branch due to unreachable, and then replaces the \RktSym{if}
 form with a call to \RktSym{make{-}seq}, which constructs a \RktSym{seq}
 expression, the same as our calculus{'}s \RktSym{begin} expression.
 Hence, the pass eliminates the branches of a
 conditional that do not terminate because they are unreachable in exactly
 the same way that the U.1 and U.2 rules simplify conditionals.

Rule P.1 corresponds to uses of  the \RktSym{make{-}seq} function. In
  detail,  the \Scribtexttt{cp0} pass uses \RktSym{make{-}seq}
  to optimize \RktSym{seq} forms; it acts as a
  smart constructor that, same as rule P.1, drops from a sequence those
  expressions that are {``}simple{''} and whose result is unused.  There{'}s
  also a limited variant of the same simplification in the \Scribtexttt{cptypes} pass;
  the duplication aims to reduce the number of optimizer{-}pass iterations
  needed in practice.

Rule P.2 corresponds to a case that matches \RktSym{seq} forms in
  the \Scribtexttt{cptypes} pass. This case handles all
   expressions at the beginning of a  \RktSym{seq} form that do not return.
   Specifically, if the first expression of  a  \RktSym{seq}
   does not return, then all subsequent expressions are ignored:

\begin{smaller}\begin{SCodeFlow}\begin{RktBlk}\begin{SingleColumn}\RktPn{(}\RktSym{define{-}pass}\mbox{\hphantom{\Scribtexttt{x}}}\RktSym{cptypes}\mbox{\hphantom{\Scribtexttt{x}}}\RktSym{{\hbox{\texttt{.}}}{\hbox{\texttt{.}}}{\hbox{\texttt{.}}}{\hbox{\texttt{.}}}}

\mbox{\hphantom{\Scribtexttt{xx}}}\RktPn{[}\RktPn{(}\RktSym{seq}\mbox{\hphantom{\Scribtexttt{x}}}\RktRdr{,}\RktSym{e1}\mbox{\hphantom{\Scribtexttt{x}}}\RktRdr{,}\RktSym{e2}\RktPn{)}

\mbox{\hphantom{\Scribtexttt{xxx}}}\RktPn{(}\RktSym{let{-}values}\mbox{\hphantom{\Scribtexttt{x}}}\RktPn{(}\RktPn{[}\RktPn{(}\RktSym{e1}\mbox{\hphantom{\Scribtexttt{x}}}\RktSym{ty1}\mbox{\hphantom{\Scribtexttt{x}}}\RktSym{{\hbox{\texttt{.}}}{\hbox{\texttt{.}}}{\hbox{\texttt{.}}}{\hbox{\texttt{.}}}}\RktPn{)}\mbox{\hphantom{\Scribtexttt{x}}}\RktPn{(}\RktSym{recur}\mbox{\hphantom{\Scribtexttt{x}}}\RktSym{e1}\mbox{\hphantom{\Scribtexttt{x}}}\RktSym{{\hbox{\texttt{.}}}{\hbox{\texttt{.}}}{\hbox{\texttt{.}}}{\hbox{\texttt{.}}}}\RktPn{)}\RktPn{]}

\mbox{\hphantom{\Scribtexttt{xxxxxxxxxxxxxxxx}}}\RktPn{[}\RktPn{(}\RktSym{e2}\mbox{\hphantom{\Scribtexttt{x}}}\RktSym{ty2}\mbox{\hphantom{\Scribtexttt{x}}}\RktSym{{\hbox{\texttt{.}}}{\hbox{\texttt{.}}}{\hbox{\texttt{.}}}{\hbox{\texttt{.}}}}\RktPn{)}\mbox{\hphantom{\Scribtexttt{x}}}\RktPn{(}\RktSym{recur}\mbox{\hphantom{\Scribtexttt{x}}}\RktSym{e2}\mbox{\hphantom{\Scribtexttt{x}}}\RktSym{{\hbox{\texttt{.}}}{\hbox{\texttt{.}}}{\hbox{\texttt{.}}}{\hbox{\texttt{.}}}}\RktPn{)}\RktPn{]}\RktPn{)}

\mbox{\hphantom{\Scribtexttt{xxxxx}}}\RktPn{(}\RktSym{cond}

\mbox{\hphantom{\Scribtexttt{xxxxxxx}}}\RktPn{[}\RktPn{(}\RktSym{predicate{-}implies{\hbox{\texttt{?}}}}\mbox{\hphantom{\Scribtexttt{x}}}\RktSym{ty1}\mbox{\hphantom{\Scribtexttt{x}}}\RktVal{{\textquotesingle}}\RktVal{bottom}\RktPn{)}\mbox{\hphantom{\Scribtexttt{x}}}\RktPn{(}\RktSym{unwrapped{-}error}\mbox{\hphantom{\Scribtexttt{x}}}\RktSym{{\hbox{\texttt{.}}}{\hbox{\texttt{.}}}{\hbox{\texttt{.}}}{\hbox{\texttt{.}}}}\mbox{\hphantom{\Scribtexttt{x}}}\RktSym{e1}\RktPn{)}\RktPn{]}

\mbox{\hphantom{\Scribtexttt{xxxxxxx}}}\RktPn{[}\RktSym{else}\mbox{\hphantom{\Scribtexttt{x}}}\RktPn{(}\RktSym{values}\mbox{\hphantom{\Scribtexttt{x}}}\RktPn{(}\RktSym{make{-}seq}\mbox{\hphantom{\Scribtexttt{x}}}\RktSym{{\hbox{\texttt{.}}}{\hbox{\texttt{.}}}{\hbox{\texttt{.}}}{\hbox{\texttt{.}}}}\mbox{\hphantom{\Scribtexttt{x}}}\RktSym{e1}\mbox{\hphantom{\Scribtexttt{x}}}\RktSym{e2}\RktPn{)}\mbox{\hphantom{\Scribtexttt{x}}}\RktSym{ty}\mbox{\hphantom{\Scribtexttt{x}}}\RktSym{{\hbox{\texttt{.}}}{\hbox{\texttt{.}}}{\hbox{\texttt{.}}}{\hbox{\texttt{.}}}}\RktPn{)}\RktPn{]}\RktPn{)}\RktPn{)}\RktPn{]}\mbox{\hphantom{\Scribtexttt{x}}}\RktSym{{\hbox{\texttt{.}}}{\hbox{\texttt{.}}}{\hbox{\texttt{.}}}}\RktPn{)}\end{SingleColumn}\end{RktBlk}\end{SCodeFlow}\end{smaller}

In the code snippet, the call to \RktSym{unwrapped{-}error} aims to
   adjust \RktSym{e1} in some contexts to preserve non{-}tail positioning.
   The adjustment does not apply to unreachable. Hence, same as rule P.2, the pass
   replaces the \RktSym{seq} form with unreachable.

Rule P.3 corresponds to how the \Scribtexttt{cp0} pass simplifies code with unused
 bindings using \Scribtexttt{begin} forms. Specifically, a loop of the
 \RktSym{letify} function of the \Scribtexttt{cp0} pass gathers unused bindings and uses
 \RktSym{residualize{-}seq} to lift their right{-}hand{-}side expressions to an
 enclosing sequence.

Rules P.4 and P.5 correspond to the \RktSym{fold{-}call/other} function
 of the \Scribtexttt{cptypes} pass. The function recursively optimizes
 a list containing the function and argument expressions
 of an application.  If, after the recursive optimizations, any of these
 sub{-}expressions has type \RktVal{{\textquotesingle}}\RktVal{bottom}, the application is replaced by
 that non{-}returning sub{-}expression.
Unlike the calculus that has a fixed order of evaluation, order of
evaluation in Chez Scheme is unspecified. As a result, the pass can treat all
other sub{-}expressions as being downstream the non{-}returning
sub{-}expression. Therefore, it discards them similar to the way rules P.4
and P.5 simplify applications in the calculus.

In sum, the rules of the calculus accurately describe the five places in
the source of the Racket compiler that take advantage of unreachable for
optimizations.

\sectionNewpage

\Ssection{Unreachable in LLVM}{Unreachable in LLVM}\label{t:x28part_x22secx3allvmx2drelationx22x29}

The LLVM Intermediate Representation (LLVM IR) represents programs as
control{-}flow graphs (CFGs).
Unlike expression{-}based languages, CFGs break down programs
into basic blocks that each consists of of a linear sequence of
instructions. Transfer of control between basic blocks is dictated
by the  edges of the CFG.

As such, compile{-}time transformations are no longer
transformations of the structure of expressions but transformations of
the structure of the CFG.
Despite this difference, CFG transformations due to \texorpdfstring{\ensuremath{\ouro}}{unreachable} follow
similar intuitions as those in our calculus.
Since the \texorpdfstring{\ensuremath{\ouro}}{unreachable} is never executed,
the LLVM compiler can prune any branches of control transfer operations that
lead to \texorpdfstring{\ensuremath{\ouro}}{unreachable} and erase code preceding \texorpdfstring{\ensuremath{\ouro}}{unreachable}.

In this section, we establish a connection between these transformations
and our calculus. Specifically, we prove a
function that performs those two transformations correct, using the $\arrC$ reductions of
our calculus.

\Ssubsection{\texorpdfstring{\ensuremath{\ouro}}{unreachable} Transformations in LLVM, Informally}{\texorpdfstring{\ensuremath{\ouro}}{unreachable} Transformations in LLVM, Informally}\label{t:x28part_x22secx3allvmx2dinformallyx22x29}

\begin{Figure}\begin{Centerfigure}\begin{FigureInside}\raisebox{-2.7379687499999985bp}{\makebox[375.06218750000005bp][l]{\includegraphics[trim=2.4000000000000004 2.4000000000000004 2.4000000000000004 2.4000000000000004]{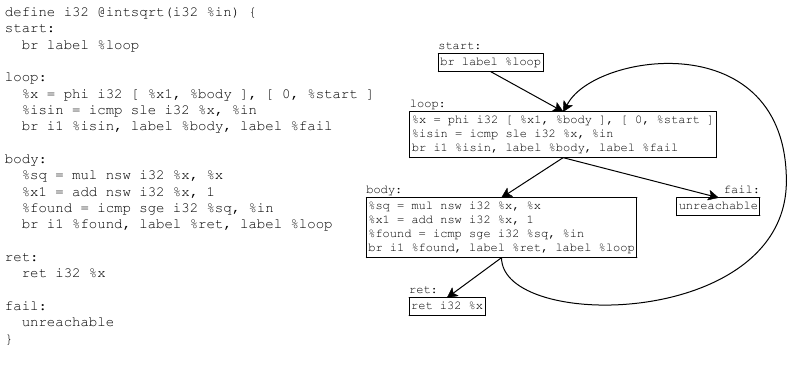}}}\end{FigureInside}\end{Centerfigure}

\Centertext{\Legend{\FigureTarget{\label{t:x28counter_x28x22figurex22_x22figx3allvmx2dbitcodex22x29x29}\textsf{Fig.}~\textsf{11}. }{t:x28counter_x28x22figurex22_x22figx3allvmx2dbitcodex22x29x29}\textsf{LLVM Code Illustrating the Control Flow Graph Representation}}}\end{Figure}

LLVM uses static single{-}assignment (SSA) form, thus
a basic block in LLVM starts with a (possibly empty) sequence of
$\phi$ nodes that assign different values to variables
depending on the predecessor executed at run time. The $\phi$ nodes are
followed by a series of instructions that typically define variables using the result of their computation.
The last instruction of a basic block is called a terminator designating where the control
should transfer to afterwards.
Examples of terminators in LLVM include a return statement, an unconditional branch, a
conditional branch, and the $\ouro$ instruction.

As a concrete example, figure~\hyperref[t:x28counter_x28x22figurex22_x22figx3allvmx2dbitcodex22x29x29]{\FigureRef{11}{t:x28counter_x28x22figurex22_x22figx3allvmx2dbitcodex22x29x29}} contains an LLVM IR
function, \Scribtexttt{@intsqrt}, and its CFG.
The \Scribtexttt{@intsqrt} function loops through the natural numbers, returning the first integer
that is no smaller than the square root of \Scribtexttt{\%in}.
The block \Scribtexttt{loop{\hbox{\texttt{:}}}} starts with a $\phi$ node that assigns a value to variable \Scribtexttt{\%x},
the loop induction variable, counting up through the naturals.
If the run{-}time predecessor of block \Scribtexttt{loop{\hbox{\texttt{:}}}} is block
\Scribtexttt{start{\hbox{\texttt{:}}}}, then \Scribtexttt{\%x} is \Scribtexttt{0}; if the predecessor is
block \Scribtexttt{body{\hbox{\texttt{:}}}} (following the long curved back edge),
\Scribtexttt{\%x} becomes equal to \Scribtexttt{\%x1}, which will hold $\Scribtexttt{\%x}+1$.
Subsequently, the loop checks the exit condition to determine whether
\Scribtexttt{\%x {\Stttextless}= \%in}, and the \Scribtexttt{br} terminator
either transfers control to the body of loop or to \Scribtexttt{fail{\hbox{\texttt{:}}}}
Note that the loop
increments \Scribtexttt{\%x}, and the annotation \Scribtexttt{nsw} signals that
no overflow happens,\NoteBox{\NoteContent{More precisely, signed overflow would
produce a \Scribtexttt{poison} value.}} hence, control transfers to \Scribtexttt{fail{\hbox{\texttt{:}}}} only if the input
is negative.

As the block \Scribtexttt{fail{\hbox{\texttt{:}}}} contains the \Scribtexttt{unreachable} instruction,
LLVM assumes that the block is never executed.
Thus, LLVM removes it and its incoming edge, leaving only
an \emph{unconditional} \Scribtexttt{br} instruction in \Scribtexttt{loop{\hbox{\texttt{:}}}}.
This simplification allows LLVM to erase the now{-}dead \Scribtexttt{\%isin}
definition, and merge \Scribtexttt{loop{\hbox{\texttt{:}}}} and \Scribtexttt{body{\hbox{\texttt{:}}}} since they are each other{'}s
unique predecessor and successor.

\Ssubsection{\texorpdfstring{\ensuremath{\ouro}}{unreachable} Transformations in Extended Vminus}{\texorpdfstring{\ensuremath{\ouro}}{unreachable} Transformations in Extended Vminus}\label{t:x28part_x22secx3allvmx2dtransformsx22x29}

\newcommand{\Changed}{\mathit{changed}}
\begin{figure}[t]
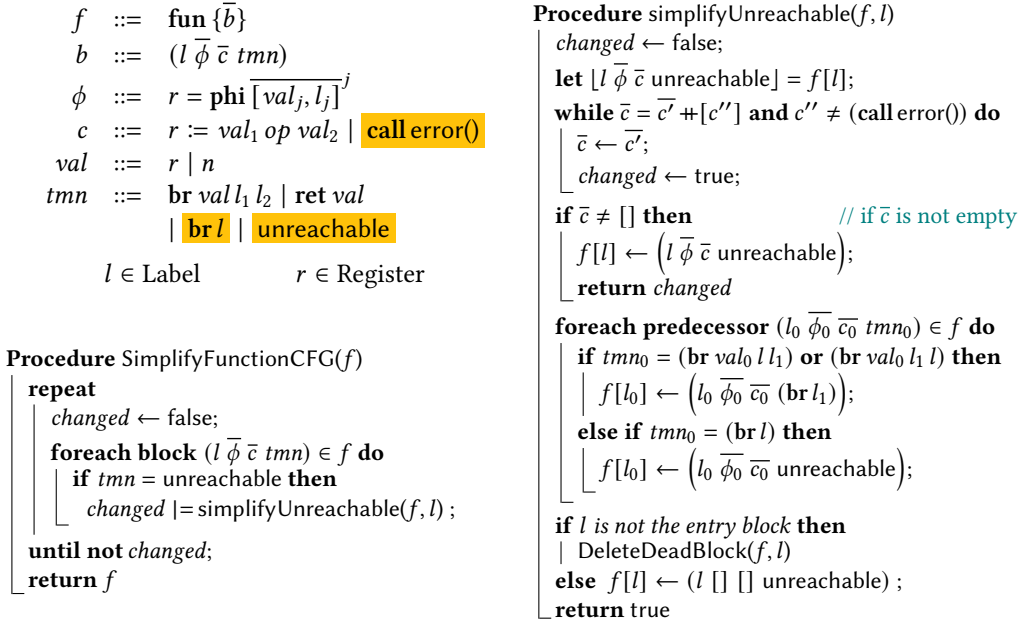

  \setlength{\algomargin}{0.1em}
  \SetInd{0.2em}{0.6em}
  \begin{minipage}{\textwidth}
    \begin{minipage}[t]{0.5\textwidth}
    \NTVminus
    \small
\begin{algorithm}[H]
  \Proc{\SimplifyFunctionCFG{$f$}}{
    \Repeat{$\KwSty{not}\,\Changed$}{
      $\Changed\gets\mathsf{false}$\;
      \ForEach{\KwSty{block} $(l\; \overline{\phi}\;\overline{c}\;\Tmn) \in f$}{
        \lIf{$\Tmn=\ouro$}{
          $\Changed \OrAssign \SimplifyUnreachable{$f, l$}$
        }
      }
    }
    \KwRet{$f$}
  }
\end{algorithm}
    \end{minipage}
    \begin{minipage}[t]{0.5\textwidth}
      \small\vspace*{-1em}
      \begin{algorithm}[H]
        \Proc{\SimplifyUnreachable{$f,l$}}{
          $\Changed\gets\False$\;
          \KwSty{let} $\lfloor l\; \overline{\phi}\;\overline{c}\;\ouro\rfloor=f[l]$\;
          \While{$\overline{c}=\overline{c'} \lappend [c'']$ \KwSty{and} $c''\neq (\CallError)$}{
            $\overline{c} \gets \overline{c'}$\;
            $\Changed\gets\True$\;
          }
          \AIf(\hfill \textcolor{\CommentColor}{// if $\overline{c}$ is not empty}){$\overline{c}\ne{} []$}{
            $f[l]\gets \left( l\; \overline{\phi}\;\overline{c}\;\ouro \right)$\;
            \KwRet{$\Changed$}
          }
          \ForEach{\KwSty{predecessor} $(l_0\;\overline{\phi_0}\;\overline{c_0}\;\Tmn_0) \in f$}{
            \uIf{$\Tmn_0=(\BrCond{\Val_0}{l}{l_1})$ \KwSty{or}
            $(\BrCond{\Val_0}{l_1}{l})$}{
              $f[l_0] \gets
                \left( l_0\;\overline{\phi_0}\;\overline{c_0}\;(\BrUncond{l_1}) \right)$\;
            }
            \ElseIf{$\Tmn_0=(\BrUncond{l})$}{
              $f[l_0] \gets
                \left( l_0\;\overline{\phi_0}\;\overline{c_0}\;\ouro \right)$\;
            }
          }
          \uIf{$l$ is not the entry block}{
            $\DeleteDeadBlock{$f,l$}$
          }
          \lElse{
            $f[l] \gets \left(l\;[]\;[]\;\ouro\right)$
          }
          \KwRet{$\True$}
        }
      \end{algorithm}
    \end{minipage}
  \end{minipage}
  \caption{The syntax of Extended Vminus and the $\ouro$ transformations}
  \label{fig:llvm-simplifycfg}
\end{figure}

We designed a transformation inspired by the optimization passes added along with the
\Scribtexttt{unreachable} instruction to the LLVM\Autobibref{~(\hyperref[t:x28autobib_x22_LLVM_ProjectLLVM_13x2e0x2e0_Release_Notes2021httpsx3ax2fx2freleasesx2ellvmx2eorgx2f13x2e0x2e0x2fdocsx2fReleaseNotesx2ehtmlRetrievedx3a_Sepx2c_2021x22x29]{\AutobibLink{LLVM Project}} \hyperref[t:x28autobib_x22_LLVM_ProjectLLVM_13x2e0x2e0_Release_Notes2021httpsx3ax2fx2freleasesx2ellvmx2eorgx2f13x2e0x2e0x2fdocsx2fReleaseNotesx2ehtmlRetrievedx3a_Sepx2c_2021x22x29]{\AutobibLink{2021}})}
codebase in 2004.\NoteBox{\NoteContent{\href{https://github.com/llvm/llvm-project/commit/5edb2f32d00d39f7d9fd98b90ff440b5dbbdcb45}{\Snolinkurl{https://github.com/llvm/llvm-project/commit/5edb2f32d00d39f7d9fd98b90ff440b5dbbdcb45}}}}

That code, upon spotting an \Scribtexttt{unreachable} terminator in the
current block, erases all preceding instructions and
prunes all the incoming edges, and so our transformation follows suit.

At the same time, other
commits\NoteBox{\NoteContent{\href{https://github.com/llvm/llvm-project/commit/8ba9ec9bbbf6625b149ae1ceeb46876553cd2f11}{\Snolinkurl{https://github.com/llvm/llvm-project/commit/8ba9ec9bbbf6625b149ae1ceeb46876553cd2f11}}
and \href{https://github.com/llvm/llvm-project/commit/a67dd32004bcc1a0a6fa2f0342e584187f5a403d}{\Snolinkurl{https://github.com/llvm/llvm-project/commit/a67dd32004bcc1a0a6fa2f0342e584187f5a403d}}}}
to LLVM added the ability to replace instructions
that obviously cannot be reached with \Scribtexttt{unreachable}. Our
transformation does not capture this second capability of LLVM.

\Cref{fig:llvm-simplifycfg} shows our transformation.
We extend the syntax of Vminus\Autobibref{~(\hyperref[t:x28autobib_x22_Jianzhou_Zhaox2c__Santosh_Nagarakattex2c__Milo_Mx2eKx2e_Martinx2c_and__Steve_ZdancewicFormal_Verification_of_SSAx2dBased_Optimizations_for_LLVMIn_Procx2e_ACM_Conference_on_Programming_Language_Design_and_Implementationx2c_PLDI_x2713x2c_ppx2e_175x2dx2d1862013httpsx3ax2fx2fdoix2eorgx2f10x2e1145x2f2491956x2e2462164x22x29]{\AutobibLink{Zhao et al\Sendabbrev{.}}} \hyperref[t:x28autobib_x22_Jianzhou_Zhaox2c__Santosh_Nagarakattex2c__Milo_Mx2eKx2e_Martinx2c_and__Steve_ZdancewicFormal_Verification_of_SSAx2dBased_Optimizations_for_LLVMIn_Procx2e_ACM_Conference_on_Programming_Language_Design_and_Implementationx2c_PLDI_x2713x2c_ppx2e_175x2dx2d1862013httpsx3ax2fx2fdoix2eorgx2f10x2e1145x2f2491956x2e2462164x22x29]{\AutobibLink{2013}})}
with additional terminators
and commands highlighted in yellow.\NoteBox{\NoteContent{We also omit the type annotations and use integers
as the only constant values.}}
Vminus is a minimal model of LLVM IR for studying
SSA{-}based optimizations. While Vminus omits features like memory access
and function calls, it is sufficient for our purposes
since LLVM primarily transforms \texorpdfstring{\ensuremath{\ouro}}{unreachable} instructions
by rearranging basic blocks and erasing commands.

A program in Vminus is a function ($f$) that contains
a list of basic blocks ($\overline{b}$). Each basic block is a 4{-}tuple
consisting of a label ($l$), a list of $\phi$ nodes ($\overline{\phi}$),
a list of commands ($\overline{c}$) and a terminator ($\Tmn$).
A command $c$ either assigns the result of a binary operation
to a fresh variable or calls the \textsf{error} function
to end the execution.
We use $\CallError$ to model commands that \emph{do not}
transfer control to the next command
in the basic block.\NoteBox{\NoteContent{$\CallError$ terminates the program
much like the \Scribtexttt{exit} function. It is unrelated to exceptions.}}
Following \Autobibref{\hyperref[t:x28autobib_x22_Jianzhou_Zhaox2c__Santosh_Nagarakattex2c__Milo_Mx2eKx2e_Martinx2c_and__Steve_ZdancewicFormal_Verification_of_SSAx2dBased_Optimizations_for_LLVMIn_Procx2e_ACM_Conference_on_Programming_Language_Design_and_Implementationx2c_PLDI_x2713x2c_ppx2e_175x2dx2d1862013httpsx3ax2fx2fdoix2eorgx2f10x2e1145x2f2491956x2e2462164x22x29]{\AutobibLink{Zhao et al\Sendabbrev{.}}}~(\hyperref[t:x28autobib_x22_Jianzhou_Zhaox2c__Santosh_Nagarakattex2c__Milo_Mx2eKx2e_Martinx2c_and__Steve_ZdancewicFormal_Verification_of_SSAx2dBased_Optimizations_for_LLVMIn_Procx2e_ACM_Conference_on_Programming_Language_Design_and_Implementationx2c_PLDI_x2713x2c_ppx2e_175x2dx2d1862013httpsx3ax2fx2fdoix2eorgx2f10x2e1145x2f2491956x2e2462164x22x29]{\AutobibLink{2013}})}{'}s notation, an overlined non{-}terminal
represents a list of the underlying non{-}terminal. The notation $l.i$
denotes the $i${-}th command in block $l$, and
$f[l]=\lfloor b\rfloor$ represents a look up of the block labeled $l$
in $f$. The notation $\lfloor b\rfloor$ asserts that the lookup succeeds
with the result $b$.

\SimplifyFunctionCFG{$f$} is the entry point of our transformation.
It iteratively updates $f$ until reaching a fixed point.
In each iteration, \FuncSty{SimplifyFunctionCFG} scans through the basic blocks to
identify basic blocks terminating with $\ouro$.
For any such basic block $l$, \FuncSty{SimplifyFunctionCFG}
invokes \FuncSty{simplifyUnreachable}
to remove commands and paths leading up to $l${'}s $\ouro$.
This includes simplifying the commands in block $l$ and possibly
the branch instructions in the predecessors of $l$.

The \SimplifyUnreachable{$f,l$} transformation exploits
the intuition that an \texorpdfstring{\ensuremath{\ouro}}{unreachable} instruction should never be reached;
otherwise the behavior of the program is undefined.
Specifically, \FuncSty{simplifyUnreachable} does two simplifications.
First, it scans the commands in $l$ from the end of the list. If the scanned
command ($c''$) \emph{always} transfers control to the next
instruction in $l$, it is erased from $l$.
The scan ends when \FuncSty{simplifyUnreachable} hits a command that
\emph{may not} transfer control the next command.
Second, \FuncSty{simplifyUnreachable} prunes the incoming edges.
For each predecessor $l_0$ of $l$, \FuncSty{simplifyUnreachable}
removes $l$ from the branch targets of $l_0$.
If $l_0${'}s terminator is an unconditional branch,
\FuncSty{simplifyUnreachable} replaces it with $\ouro$.

\Ssubsection{Translating Extended Vminus to The Unreachable Calculus}{Translating Extended Vminus to The Unreachable Calculus}\label{t:x28part_x22secx3allvmx2dkelseyx22x29}

\begin{figure}[t]
 \[\begin{array}{@{}l@{}}
  \KHjumpfn \vspace{0.5em} \\
  \KHcmdfn \vspace{0.5em} \\
  \KHtermfn
  \end{array}\]
  \caption{Translating Extended Vminus Functions to the Unreachable Calculus}
  \label{fig:kelsey-h}
\end{figure}

The transformation from \SecRef{\SectionNumberLink{t:x28part_x22secx3allvmx2dtransformsx22x29}{8.2}}{\texorpdfstring{\ensuremath{\ouro}}{unreachable} Transformations in Extended Vminus}
is based on the same insights about \texorpdfstring{\ensuremath{\ouro}}{unreachable} as the $\arrU$ and
$\arrP$ reductions of our calculus.
As such, it should be correct despite the fact that it belongs to
a different linguistic setting than that of the calculus.

To use the calculus to prove that it is correct, we leverage \Autobibref{\hyperref[t:x28autobib_x22_Richard_Ax2e_KelseyA_Correspondence_between_Continuation_Passing_Style_and_Static_Single_Assignment_FormIn_Procx2e_Papers_from_the_1995_ACM_SIGPLAN_Workshop_on_Intermediate_Representationsx2c_IR_x2795x2c_ppx2e_13x2dx2d221995httpsx3ax2fx2fdoix2eorgx2f10x2e1145x2f202530x2e202532x22x29]{\AutobibLink{Kelsey}}~(\hyperref[t:x28autobib_x22_Richard_Ax2e_KelseyA_Correspondence_between_Continuation_Passing_Style_and_Static_Single_Assignment_FormIn_Procx2e_Papers_from_the_1995_ACM_SIGPLAN_Workshop_on_Intermediate_Representationsx2c_IR_x2795x2c_ppx2e_13x2dx2d221995httpsx3ax2fx2fdoix2eorgx2f10x2e1145x2f202530x2e202532x22x29]{\AutobibLink{1995}})}{'}s algorithm that
translates programs from SSA to $\lambda$ expressions and back.
In detail, we adapt their algorithm
to work on Extended Vminus programs and use it to show that the transformation
in \cref{fig:llvm-simplifycfg} given a program A produces program B
such that the  translation of A
is equal to the translation of B under $\arrC$ reductions.

\cref{fig:kelsey-h} gives the complete definition of the translation.
$\KHproc$ translates a function $f$ in Extended Vminus
to a (closed) function $\Func{x}{e}$ in the \texorpdfstring{\ensuremath{\ouro}}{unreachable} calculus.
It comprises three auxiliary functions:
$\KHjump$ translates a basic block to a $\lambda$ function,
$\KHcmd$ takes a list of commands and morally translates them into
a nested \textsf{let} expression, and finally
$\KHterm$ translates the terminator of a block into an appropriate expression in our calculus.
In the definition of $\KHcmd$, $(\textsf{let}\, ([x\, e_1])\, e_2)$ is a syntactic sugar for
$\App{(\Func{x}{e_2})}{e_1}$ and \textsf{letrec} is implemented using the Y combinator.
Appendix L details the implementation.

The key idea behind $\KHproc$ is the encoding of the CFG of a given
program. $\KHjump$ encodes each basic block as a single function, while
$\KHterm$ adds edges that represent branches using function applications.
For each basic block handled by $\KHjump$,
the variables defined by the $\phi$ nodes become the formal parameters
of the resulting function, and the incoming values of the
$\phi$ nodes are turned into the arguments in the function applications
that $\KHterm$ uses to encode branch instructions.

Having translated individual basic block and the edges of the CFG,
the translation assembles the results of $\KHjump$ into a single expression
while respecting the variable scoping rules of Vminus programs.
To ensure that variables are defined before they are referenced, Vminus requires
a variable definition to appear in a block that dominates
all the blocks that use the variable.
Intuitively, block $l$ \label{t:x28tech_x22dominatex22x29}\textit{dominates} block $l'$,
written as $\Dom{l}{l'}$, if block $l$ appears
in every path from the entry to block $l'$.
The dominance relation between basic blocks form a so{-}called  \label{t:x28tech_x22dominator_treex22x29}\textit{dominator tree}\Autobibref{~(\hyperref[t:x28autobib_x22_Thomas_Lengauer_and__Robert_Endre_TarjanA_Fast_Algorithm_for_Finding_Dominators_in_a_FlowgraphACM_Transactions_on_Programming_Languages_and_Systems_1x281x29x2c_ppx2e_121x2dx2d1411979httpsx3ax2fx2fdoix2eorgx2f10x2e1145x2f357062x2e357071x22x29]{\AutobibLink{Lengauer and Tarjan}} \hyperref[t:x28autobib_x22_Thomas_Lengauer_and__Robert_Endre_TarjanA_Fast_Algorithm_for_Finding_Dominators_in_a_FlowgraphACM_Transactions_on_Programming_Languages_and_Systems_1x281x29x2c_ppx2e_121x2dx2d1411979httpsx3ax2fx2fdoix2eorgx2f10x2e1145x2f357062x2e357071x22x29]{\AutobibLink{1979}}; \hyperref[t:x28autobib_x22Edward_Sx2e_Lowry_and_Cx2e_Wx2e_MedlockObject_Code_OptimizationCommunications_of_the_ACM_12x281x29x2c_ppx2e_13x2dx2d221969httpsx3ax2fx2fdoix2eorgx2f10x2e1145x2f362835x2e362838x22x29]{\AutobibLink{Lowry and Medlock}} \hyperref[t:x28autobib_x22Edward_Sx2e_Lowry_and_Cx2e_Wx2e_MedlockObject_Code_OptimizationCommunications_of_the_ACM_12x281x29x2c_ppx2e_13x2dx2d221969httpsx3ax2fx2fdoix2eorgx2f10x2e1145x2f362835x2e362838x22x29]{\AutobibLink{1969}})}: if block $l$ dominates $l'$ then
$l$ is an ancestor of $l'$ in the dominator tree. Conversely,
if $l$ has children $l_1,\dots,l_m$ in the dominator tree,
then $l$ dominates all $l_i$ and each function $\KHjump(f,l_i)$
should be able to reference the variables that block $l$ defines.
Therefore, $\KHcmd$ arranges the results of $\KHjump$ into nested \textsf{letrec}s
in accordance with the dominator tree of the control flow graph
to preserve correct scoping of the variables.

With the definition of the translation in hand, we can prove the
correctness of its correctness by establishing the
correctness of  \FuncSty{simplifyUnreachable}:

\begin{theorem}\label{thm:simplify-unreachable}Let $f[l]=\lfloor l\; \overline{\phi}\; \overline{c}\; \ouro\rfloor$
and $f'$ be the new function after running the transformation $\SimplifyUnreachable{$f,l$}$.
If $l$ is reachable from the entry point of $f$
then $\KHproc(f)\arrC^*\KHproc(f')$.\end{theorem}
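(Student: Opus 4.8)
The plan is to induct on the structure of the procedure \FuncSty{simplifyUnreachable}, splitting into its two phases---trailing-command erasure inside $l$, and incoming-edge pruning together with the deletion (or reset) of $l$---and to exhibit, for each atomic change the procedure makes to $f$, a corresponding block of $\arrC$ reductions on $\KHproc(f)$. The single most useful structural fact, which I would record first as a lemma, is that a block whose terminator is $\ouro$ has no successors and therefore dominates only itself; hence $l$ is a leaf of the dominator tree, the list of dominator-tree children used in $\KHcmd(f,l,[])$ is empty, and $\KHcmd(f,l,[])$ is the empty $\textsf{letrec}$ with body $\ouro$, which degenerates to $\ouro$. Consequently the translated body of $l$ is a nest of command-translations wrapped around $\ouro$, and no other block is dominated by $l$, so every register defined in $l$ is used only within $l$ itself.

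For the first phase, each erased command is a binary operation $r \coloneq \Val_1\,op\,\Val_2$, which $\KHcmd$ turns into $\App{(\Func{r}{K})}{(\Op{op}{\Val_1}{\Val_2})}$ with a continuation $K$ that, by the leaf fact above, is $\ouro$ at the innermost step and stays of the shape ``something around $\ouro$'' as erasure proceeds from the end. Each such command is discarded by rule P.3 (turning $\App{(\Func{r}{\ouro})}{(\Op{op}{\Val_1}{\Val_2})}$ into $\Seq{(\Op{op}{\Val_1}{\Val_2})}{\ouro}$) followed by rule P.1 (absorbing the safe operand), lifted into place by \textsc{Ctx.P}; this needs $\safep{(\Op{op}{\Val_1}{\Val_2})}$, which I would discharge from well-formedness of Vminus programs (operands are integers and the modeled operations are total). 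When the scan stops at a $\CallError$ and $l$ is merely updated, these reductions already witness $\KHproc(f)\arrC^*\KHproc(f')$ and the case closes.

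For the second phase I would first use the first-phase reductions to bring the translated body of $l$ to $\ouro$, so that the $\textsf{letrec}$ binds $l$ to $\Func{r_1\dots r_n}{\ouro}$. Then, for each predecessor $l_0$, I would rewrite its translated terminator: where $l_0$ ends in $\BrCond{\Val_0}{l}{l_1}$ (or the symmetric target) $\KHterm$ produces $\If{\Val_0}{(\App{l}{\overline{\Val'}})}{(\App{l_1}{\overline{\Val''}})}$, and I would reduce the call $(\App{l}{\overline{\Val'}})$ to $\ouro$ and fire rule U.1 (or U.2) to obtain $\Seq{\Val_0}{(\App{l_1}{\overline{\Val''}})}$, finishing with P.1; where $l_0$ ends in $\BrUncond{l}$, $\KHterm$ produces $(\App{l}{\overline{\Val'}})$, which I would reduce directly to $\ouro$, matching the $\ouro$ terminator that pruning installs in $l_0$. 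After all edges are pruned, $l$ is unreferenced; deleting its $\textsf{letrec}$ entry (or, when $l$ is the entry, resetting it to $\Func{r_0}{\ouro}$) completes the match with $\KHproc(f')$, and the global statement follows by assembling these local rewrites in the order dictated by the dominator-tree nesting of the $\textsf{letrec}$s.

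The main obstacle is the step I glossed as ``reduce the call $(\App{l}{\overline{\Val'}})$ to $\ouro$.'' When $l$ has no $\phi$-nodes the call carries a single (dummy) argument and P.3 together with P.1 immediately give $\App{(\Func{r_0}{\ouro})}{0}\arrC^*\ouro$; but when $l$ carries $n\ge 2$ $\phi$-nodes the call is an $n$-fold curried application $\App{\cdots\App{(\Func{r_1\dots r_n}{\ouro})}{\Val'_1}\cdots}{\Val'_n}$, whose outer redexes the deliberately small $\arrC$ rule set cannot fire (P.3 demands a literal $\ouro$ body, while P.4 and P.5 demand an $\ouro$ in function or argument position). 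Making this step go through---together with the administrative removal of the now-dead, recursively-bound $\textsf{letrec}$ entry for $l$ under the Y-combinator encoding---is the crux, and the place where the interaction between currying, recursive scoping, and the restricted reductions must be handled with care; I expect it to require a strengthened reduction lemma about applications of constant-$\ouro$ functions (and perhaps a normalizing treatment of the translated call sites) rather than the naive per-redex argument sketched above.
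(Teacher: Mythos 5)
Your outline follows the paper's proof almost exactly: the same leaf-of-the-dominator-tree observation (so $\KHcmd(f,l,\overline{c})$ sits as a subexpression of $\KHproc(f)$ ending in $\ouro$), the same case split on whether $\overline{c}$ contains $\CallError$, the collapse of the translated body of $l$ to $\ouro$, rules U.1/U.2 at each conditional-branch predecessor, and finally dropping the dead binding for $l$. The genuine gap is the one you flagged yourself, and it dissolves once you notice that $\arrC$ is \emph{not} just the $\arrP$/$\arrU$ rules: by the first rule in Figure 9, $e \arrM e'$ implies $e \arrC e'$, so the whole $\arrM$ toolkit of Section 5 is available, and the paper leans on it at precisely your crux. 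In particular, M.3 is $\beta$-conversion with a $\Lsafep$ argument, usable in both directions via Ctx.M/CtxSym.M; since the arguments $\overline{\Val'}$ extracted from the $\phi$ nodes are registers or integer constants, hence safe, each stage of the curried application $\App{\cdots\App{(\Func{r_1\dots r_n}{\ouro})}{\Val'_1}\cdots}{\Val'_n}$ is an M.3 redex, and the call reduces to $\ouro$ with no new ``constant-$\ouro$ function'' lemma. Likewise the administrative manipulations you worry about under the Y-combinator encoding --- inlining the definition bound to $l$ at its call sites, and dropping the now-unreferenced $\textsf{letrec}$ binding --- are exactly the $\textsf{letrec}$ identities that the paper's proof invokes and that its Appendix L derives from the $\arrM$ rules. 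So the ``strengthened reduction lemma'' you anticipate already exists in the calculus; restricted to P.3/P.4/P.5 alone, the step would indeed fail for $n\ge 2$, as you correctly diagnose.

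Two smaller slips. First, after U.1 produces $\Seq{\Val_0}{(\App{l_1}{\overline{\Val_1}})}$ you say you ``finish with P.1,'' but P.1 only absorbs a safe expression into a following $\ouro$; the rule that discards the safe first component of a sequence whose tail is \emph{not} $\ouro$ is M.4 (again an $\arrM$ rule). Second, your per-command P.3-then-P.1 choreography in the erasure phase is fine --- the paper compresses it into the single observation that the translation of the erased suffix is a nested $\textsf{let}$ whose body is $\ouro$ --- and the $\Lsafep$ side condition on $\Op{op}{\Val_1}{\Val_2}$, discharged from totality of the modeled operations, is needed in both versions. (You do handle the entry-block reset explicitly, which the paper's prose glosses over --- a small point in your favor.) In sum: with $\arrM \subseteq \arrC$ admitted, your proposal becomes the paper's proof; without it, the glossed step is a real failure, not merely a technical inconvenience.
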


\begin{proof}
Because block $l$ is reachable from the entry point of $f$ and
has no successor, it must be a leaf node in the dominator tree.
Thus, $\KHcmd(f,l,\overline{c})$ is a subexpression of $\KHproc(f)$.
To analyze how \FuncSty{simplifyUnreachable} changes $f$, we take cases on whether
$\overline{c}$ contains $\CallError$ or not.

If $\overline{c}$ does not contain $\CallError$,
\FuncSty{simplifyUnreachable} prunes all incoming edges of block $l$
and deletes the entire block from $f$.
Let $f[l_0]=\lfloor l_0\; \overline{\phi_0}\; \overline{c_0}\; \Tmn_0\rfloor$
be any predecessor of $l$. If $\Tmn_0=(\BrCond{\Val_0}{l}{l_1})$, its translation
is $\KHterm(f,l_0)=\If{\Val_0}{(\App{l}{\overline{\Val}})}{(\App{l_1}{\overline{\Val_1}})}$
where the arguments are extracted from the $\phi$ nodes in block $l$ and $l_1$.

By inlining the translation of block $l$ with a series of $\arrC$
reductions, we obtain the expression
$\If{\Val_0}{(\App{\KHjump(f,l)}{\overline{\Val}})}{(\App{l_1}{\overline{\Val_1}})}$.
However, the body of $\KHjump(f,l)$, $\KHcmd(f,l,\overline{c})$, reduces to $\ouro$
under $\arrC$ since it is a nested \textsf{let} expression whose body, $\KHterm(f,l)$, is $\ouro$.
Thus the entire $\Ifk$ expression simplifies to $(\App{l_1}{\overline{\Val_1}})$,
which is precisely the translation of $(\BrUncond{l_1})$.

The case where $\Tmn_0=(\BrUncond{l})$ is also similar.

Finally, after all predecessors of $l$ are updated, $\KHjump(f,l)$ has no reference
and thus its binding can be dropped, resulting $\KHproc(f')$.

In this proof, we have used several \textsf{letrec} identities such as inlining a definition
and dropping an unreferenced binding. Appendix L proves these identities
using the $\arrM$ rules from Figure~\hyperref[t:x28counter_x28x22figurex22_x22figx3acontextpreservingrulesx22x29x29]{\FigureRef{9}{t:x28counter_x28x22figurex22_x22figx3acontextpreservingrulesx22x29x29}}.

The case where $\overline{c}$ includes $\CallError$ is similar to the
simplification of $\KHcmd(f,l,\overline{c})$ in the previous case
except that $\ouro$ stops erasing the \textsf{let} bindings after reaching
the expression $\error{}$.
Say $\overline{c}$ equals $\overline{c'}\lappend [\CallError]\lappend \overline{c''}$
such that $\overline{c''}$ contains no $\CallError$ commands, we know that
$\SimplifyUnreachable{$f,l$}$ changes block $l$ to
$(l\; \overline{\phi}\; (\overline{c'}\lappend [\CallError])\; \ouro)$.
Therefore we need to prove
\[
\KHcmd(f,l,\overline{c'}\lappend [\CallError]\lappend \overline{c''}) \arrC^*
\KHcmd(f',l,\overline{c'}\lappend [\CallError]).
\]
Note that $\KHterm(f,l)=\KHterm(f',l)=\ouro$,  so this is
straightforward as $\KHcmd$ translates $\overline{c''}$ to
a nested \textsf{let} expression whose body is just $\ouro$.
Thus \FuncSty{simplifyUnreachable} preserves the behavior
of the program.
\end{proof}

\medskip
 As a final remark in this section, the approach to proving CFG
 transformations correct via translation to the $\lambda$ calculus does not scale
 to proving realistic compilers correct. Of course, this is not the goal
 of this section but we discuss it here to eliminate any confusion. Beyond
 the obvious shortcoming that the $\lambda$ calculus and imperative features are
 not well{-}aligned, there is an additional and subtle technical challenge
 at play.  While the reductions of the  \texorpdfstring{\ensuremath{\ouro}}{unreachable} calculus leave the
 overall structure of expressions unchanged, \texorpdfstring{\ensuremath{\ouro}}{unreachable} tranformations in
 Extended Vminus, and LLVM, can modify the dominator tree of a program in
 complex ways. Hence, relating the result of a series of reductions with
 the result of an \texorpdfstring{\ensuremath{\ouro}}{unreachable} tranformations in Extended Vminus requires
 equating expressions with arbitrarily different structure. Put
 differently, the compile{-}time semanctics of the \texorpdfstring{\ensuremath{\ouro}}{unreachable} calculus do
 not map directly to the \texorpdfstring{\ensuremath{\ouro}}{unreachable} tranformations in Extended Vminus, at
 least via translations, like Kelsey{'}s, that depend on the dominator tree
 of the input program.  We conjecture that this discrepancy is due to the
 translation, but we cannot exclude a misalignment between the \texorpdfstring{\ensuremath{\ouro}}{unreachable}
 calculus and the CFG{-}based world of Extended Vminus and LLVM.

\sectionNewpage

\Ssection{Related Work}{Related Work}\label{t:x28part_x22Relatedx5fWorkx22x29}

Our work is the first that develops an equational theory for \texorpdfstring{\ensuremath{\ouro}}{unreachable}.

Other techniques that examine the correctness of compilers have to
also deal, one way or another, with the semantics of \texorpdfstring{\ensuremath{\ouro}}{unreachable} and
other undefined behaviors.
CompCert\Autobibref{~(\hyperref[t:x28autobib_x22Xavier_LeroyFormal_verification_of_a_realistic_compilerCommunications_of_the_ACM_52x2c_ppx2e_7x3a107x2dx2d7x3a1152009httpsx3ax2fx2fdoix2eorgx2f10x2e1145x2f1538788x2e1538814x22x29]{\AutobibLink{Leroy}} \hyperref[t:x28autobib_x22Xavier_LeroyFormal_verification_of_a_realistic_compilerCommunications_of_the_ACM_52x2c_ppx2e_7x3a107x2dx2d7x3a1152009httpsx3ax2fx2fdoix2eorgx2f10x2e1145x2f1538788x2e1538814x22x29]{\AutobibLink{2009\AutobibLink{a},\AutobibLink{b}}})} gives semantics to undefined behavior
implicitly, by specifying defined behavior with a co{-}inductive structure.
Based on this structure, the CompCert project proves correct whole{-}program
transformations in a realistic C compiler. While the
linguistic setting and the scale of CompCert are not comparable with
this work, the reductions of our calculus equate (open)
expressions in all contexts rather than whole programs.
Furthermore, the correctness of our reductions assumes the intuitive
predicate $\neg\Lundefp$, instead of CompCert{'}s co{-}inductive definition of
defined behavior.

Similar to CompCert, \Autobibref{\hyperref[t:x28autobib_x22Ralf_Jungx2c_Hoanghai_Dangx2c_Jeehoon_Kangx2c_and_Derek_DreyerStacked_borrowsx3a_an_aliasing_model_for_RustProceedings_of_the_ACM_on_Programming_Languages_x28POPLx29_4x2c_ppx2e_41x3a1x2dx2d41x3a322020httpsx3ax2fx2fdoix2eorgx2f10x2e1145x2f3371109x22x29]{\AutobibLink{Jung et al\Sendabbrev{.}}}~(\hyperref[t:x28autobib_x22Ralf_Jungx2c_Hoanghai_Dangx2c_Jeehoon_Kangx2c_and_Derek_DreyerStacked_borrowsx3a_an_aliasing_model_for_RustProceedings_of_the_ACM_on_Programming_Languages_x28POPLx29_4x2c_ppx2e_41x3a1x2dx2d41x3a322020httpsx3ax2fx2fdoix2eorgx2f10x2e1145x2f3371109x22x29]{\AutobibLink{2020}})} give a definition for defined
behavior in Rust, namely the Stacked Borrows pattern. Any
program that violates this pattern is considered to exhibit undefined behavior.
The authors show that  Stacked Borrows is sufficient to validate
optimizations in the Rust compiler involving both safe and unsafe code, and that it admits realistic Rust
programs. Instead of an approximate predicate,
our calculus relies on a
precise definition of undefined behavior.

Vellvm\Autobibref{~(\hyperref[t:x28autobib_x22_Yannick_Zakowskix2c__Calvin_Beckx2c__Irene_Yoonx2c__Ilia_Zaichukx2c__Vadim_Zalivax2c_and__Steve_ZdancewicModularx2c_Compositionalx2c_and_Executable_Formal_Semantics_for_LLVM_IRProceedings_of_the_ACM_on_Programming_Languages_x28ICFPx29_5x2c_ppx2e_67x3a1x2dx2d67x3a302021httpsx3ax2fx2fdoix2eorgx2f10x2e1145x2f3473572x22x29]{\AutobibLink{Zakowski et al\Sendabbrev{.}}} \hyperref[t:x28autobib_x22_Yannick_Zakowskix2c__Calvin_Beckx2c__Irene_Yoonx2c__Ilia_Zaichukx2c__Vadim_Zalivax2c_and__Steve_ZdancewicModularx2c_Compositionalx2c_and_Executable_Formal_Semantics_for_LLVM_IRProceedings_of_the_ACM_on_Programming_Languages_x28ICFPx29_5x2c_ppx2e_67x3a1x2dx2d67x3a302021httpsx3ax2fx2fdoix2eorgx2f10x2e1145x2f3473572x22x29]{\AutobibLink{2021}}; \hyperref[t:x28autobib_x22_Jianzhou_Zhaox2c__Santosh_Nagarakattex2c__Milo_Mx2eKx2e_Martinx2c_and__Steve_ZdancewicFormal_Verification_of_SSAx2dBased_Optimizations_for_LLVMIn_Procx2e_ACM_Conference_on_Programming_Language_Design_and_Implementationx2c_PLDI_x2713x2c_ppx2e_175x2dx2d1862013httpsx3ax2fx2fdoix2eorgx2f10x2e1145x2f2491956x2e2462164x22x29]{\AutobibLink{Zhao et al\Sendabbrev{.}}} \hyperref[t:x28autobib_x22_Jianzhou_Zhaox2c__Santosh_Nagarakattex2c__Milo_Mx2eKx2e_Martinx2c_and__Steve_ZdancewicFormalizing_the_LLVM_Intermediate_Representation_for_Verified_Program_TransformationsIn_Procx2e_ACM_Symposium_on_Principles_of_Programming_Languagesx2c_POPL_x2712x2c_ppx2e_427x2dx2d4402012httpsx3ax2fx2fdoix2eorgx2f10x2e1145x2f2103656x2e2103709x22x29]{\AutobibLink{2012}}, \hyperref[t:x28autobib_x22_Jianzhou_Zhaox2c__Santosh_Nagarakattex2c__Milo_Mx2eKx2e_Martinx2c_and__Steve_ZdancewicFormal_Verification_of_SSAx2dBased_Optimizations_for_LLVMIn_Procx2e_ACM_Conference_on_Programming_Language_Design_and_Implementationx2c_PLDI_x2713x2c_ppx2e_175x2dx2d1862013httpsx3ax2fx2fdoix2eorgx2f10x2e1145x2f2491956x2e2462164x22x29]{\AutobibLink{2013}})} is a long{-}running project for the formal verification of
transformations in LLVM. It covers various versions of undefined behavior,
including \texorpdfstring{\ensuremath{\ouro}}{unreachable}, by reducing them to a basic notion of undefined
behavior similar to the discussion in \ChapRef{\SectionNumberLink{t:x28part_x22secx3aotherx2dubx22x29}{6}}{Beyond Unreachable Code}. The latest
version of Vellvm relies on  interaction
trees\Autobibref{~(\hyperref[t:x28autobib_x22Lix2dyao_Xiax2c_Yannick_Zakowskix2c_Paul_Hex2c_Chungx2dKilx5cn____________________Hurx2c_Gregory_Malechax2c_Benjamin_Cx2e_Piercex2c_and_Stevex5cn______________________ZdancewicInteraction_Treesx3a_rRepresenting_Recursive_and_Impurex5cn____Programs_in_CoqProceedings_of_the_ACM_on_Programming_Languages_x28POPLx29_4x2c_ppx2e_51x3a1x2dx2d51x3a322020httpsx3ax2fx2fdoix2eorgx2f10x2e1145x2f3371119x22x29]{\AutobibLink{Xia et al\Sendabbrev{.}}} \hyperref[t:x28autobib_x22Lix2dyao_Xiax2c_Yannick_Zakowskix2c_Paul_Hex2c_Chungx2dKilx5cn____________________Hurx2c_Gregory_Malechax2c_Benjamin_Cx2e_Piercex2c_and_Stevex5cn______________________ZdancewicInteraction_Treesx3a_rRepresenting_Recursive_and_Impurex5cn____Programs_in_CoqProceedings_of_the_ACM_on_Programming_Languages_x28POPLx29_4x2c_ppx2e_51x3a1x2dx2d51x3a322020httpsx3ax2fx2fdoix2eorgx2f10x2e1145x2f3371119x22x29]{\AutobibLink{2020}})}, and we conjecture it can prove
correct equivalences that correspond to the compile{-}time reductions of our
calculus. However, such proofs would need to establish equalities between
denotations of LLVM code fragments, rather than the syntax{-}based
equivalences of our calculus, which we claim match the way compiler
writers reason about code through local rewriting steps.

\Autobibref{\hyperref[t:x28autobib_x22_Manjeet_Dahiya_and__Sorav_BansalModeling_Undefined_Behaviour_Semantics_for_Checking_Equivalence_Across_Compiler_OptimizationsIn_Procx2e_Haifa_Verification_Conference2017httpsx3ax2fx2fdoix2eorgx2f10x2e1007x2f978x2d3x2d319x2d70389x2d3x5f2x22x29]{\AutobibLink{Dahiya and Bansal}}~(\hyperref[t:x28autobib_x22_Manjeet_Dahiya_and__Sorav_BansalModeling_Undefined_Behaviour_Semantics_for_Checking_Equivalence_Across_Compiler_OptimizationsIn_Procx2e_Haifa_Verification_Conference2017httpsx3ax2fx2fdoix2eorgx2f10x2e1007x2f978x2d3x2d319x2d70389x2d3x5f2x22x29]{\AutobibLink{2017}})} presents a simulation relation for C programs
that takes undefined behavior into account. Their work considers a number
of different forms of undefined behavior, but not \texorpdfstring{\ensuremath{\ouro}}{unreachable}.

\Autobibref{\hyperref[t:x28autobib_x22_Melissa_MearsFunction_to_mark_unreachable_code2021httpx3ax2fx2fwwwx2eopenx2dstdx2eorgx2fjtc1x2fsc22x2fwg21x2fdocsx2fpapersx2f2021x2fp0627r6x2epdfx22x29]{\AutobibLink{Mears}}~(\hyperref[t:x28autobib_x22_Melissa_MearsFunction_to_mark_unreachable_code2021httpx3ax2fx2fwwwx2eopenx2dstdx2eorgx2fjtc1x2fsc22x2fwg21x2fdocsx2fpapersx2f2021x2fp0627r6x2epdfx22x29]{\AutobibLink{2021}})}{'}s proposals
 favor the addition of a construct like \texorpdfstring{\ensuremath{\ouro}}{unreachable} to C and C++. The authors
note the advantages of introducing such a construct for optimizations.
Similar to Racket and Rust, the authors point out that for
debugging purposes, the proposed construct could be treated as an
exception at run time.

As a final note, a considerable body of work focuses on program checkers
that detect undefined behavior, including \texorpdfstring{\ensuremath{\ouro}}{unreachable}.  Some checkers
rely on static analysis\Autobibref{~(\hyperref[t:x28autobib_x22_Will_Dietzx2c__Peng_Lix2c__John_Regehrx2c_and__Vikram_AdveUnderstanding_Integer_Overflow_in_Cx2fCx2bx2bIn_Procx2e_International_Conference_on_on_Software_Engineering2012httpsx3ax2fx2fdoix2eorgx2f10x2e1109x2fICSEx2e2012x2e6227142x22x29]{\AutobibLink{Dietz et al\Sendabbrev{.}}} \hyperref[t:x28autobib_x22_Will_Dietzx2c__Peng_Lix2c__John_Regehrx2c_and__Vikram_AdveUnderstanding_Integer_Overflow_in_Cx2fCx2bx2bIn_Procx2e_International_Conference_on_on_Software_Engineering2012httpsx3ax2fx2fdoix2eorgx2f10x2e1109x2fICSEx2e2012x2e6227142x22x29]{\AutobibLink{2012}}; \hyperref[t:x28autobib_x22Jacques_Henri_Jourdanx2c_Vincent_Laportex2c_Sandrine_Blazyx2c_Xavier_Leroyx2c_and_David_PichardieA_Formallyx2dVerified_C_Static_AnalyzerIn_Procx2e_ACM_Symposium_on_Principles_of_Programming_Languagesx2c_ppx2e_247x2dx2d2592015httpsx3ax2fx2fdoix2eorgx2f10x2e1145x2f2676726x2e2676966x22x29]{\AutobibLink{Jourdan et al\Sendabbrev{.}}} \hyperref[t:x28autobib_x22Jacques_Henri_Jourdanx2c_Vincent_Laportex2c_Sandrine_Blazyx2c_Xavier_Leroyx2c_and_David_PichardieA_Formallyx2dVerified_C_Static_AnalyzerIn_Procx2e_ACM_Symposium_on_Principles_of_Programming_Languagesx2c_ppx2e_247x2dx2d2592015httpsx3ax2fx2fdoix2eorgx2f10x2e1145x2f2676726x2e2676966x22x29]{\AutobibLink{2015}}; \hyperref[t:x28autobib_x22_Xi_Wangx2c__Nickolai_Zeldovichx2c__Mx2e_Frans_Kaashoekx2c_and__Armando_Solarx2dLezamaA_Differential_Approach_to_Undefined_Behavior_DetectionCommunications_of_the_ACM_59x283x29x2c_ppx2e_99x2dx2d1062016httpsx3ax2fx2fdoix2eorgx2f10x2e1145x2f2885256x22x29]{\AutobibLink{Wang et al\Sendabbrev{.}}} \hyperref[t:x28autobib_x22_Xi_Wangx2c__Nickolai_Zeldovichx2c__Mx2e_Frans_Kaashoekx2c_and__Armando_Solarx2dLezamaA_Differential_Approach_to_Undefined_Behavior_DetectionCommunications_of_the_ACM_59x283x29x2c_ppx2e_99x2dx2d1062016httpsx3ax2fx2fdoix2eorgx2f10x2e1145x2f2885256x22x29]{\AutobibLink{2016}})},
while others on testing\Autobibref{~(\hyperref[t:x28autobib_x22_John_RegehrBetter_Testing_With_Undefined_Behavior_Coverage2011httpsx3ax2fx2fblogx2eregehrx2eorgx2farchivesx2f388x22x29]{\AutobibLink{Regehr}} \hyperref[t:x28autobib_x22_John_RegehrBetter_Testing_With_Undefined_Behavior_Coverage2011httpsx3ax2fx2fblogx2eregehrx2eorgx2farchivesx2f388x22x29]{\AutobibLink{2011}})}.  Here we
focus on two of these works.  \Autobibref{\hyperref[t:x28autobib_x22Chris_Hathhornx2c_Chucky_Ellisonx2c_and_Grigore_Rox15fuDefining_the_Undefinedness_of_CACM_Conference_on_Programming_Language_Design_and_Implementation2015httpsx3ax2fx2fdoix2eorgx2f10x2e1145x2f2737924x2e2737979x22x29]{\AutobibLink{Hathhorn et al\Sendabbrev{.}}}~(\hyperref[t:x28autobib_x22Chris_Hathhornx2c_Chucky_Ellisonx2c_and_Grigore_Rox15fuDefining_the_Undefinedness_of_CACM_Conference_on_Programming_Language_Design_and_Implementation2015httpsx3ax2fx2fdoix2eorgx2f10x2e1145x2f2737924x2e2737979x22x29]{\AutobibLink{2015}})} describes a
model checker that detects undefined behavior based on a formal semantics
for undefined behavior in C.  RustBelt\Autobibref{~(\hyperref[t:x28autobib_x22Ralf_Jungx2c_Jacquesx2dHenri_Jourdanx2c_Robbert_Krebbersx2c_and_Derek_DreyerRustBeltx3a_securing_the_foundations_of_the_Rust_programming_languageProceedings_of_the_ACM_on_Programming_Languages_x28POPLx29_2x2c_ppx2e_66x3a1x2dx2d66x3a342018httpsx3ax2fx2fdoix2eorgx2f10x2e1145x2f3158154x22x29]{\AutobibLink{Jung et al\Sendabbrev{.}}} \hyperref[t:x28autobib_x22Ralf_Jungx2c_Jacquesx2dHenri_Jourdanx2c_Robbert_Krebbersx2c_and_Derek_DreyerRustBeltx3a_securing_the_foundations_of_the_Rust_programming_languageProceedings_of_the_ACM_on_Programming_Languages_x28POPLx29_2x2c_ppx2e_66x3a1x2dx2d66x3a342018httpsx3ax2fx2fdoix2eorgx2f10x2e1145x2f3158154x22x29]{\AutobibLink{2018}})} proves the absence
  of one kind of undefined behavior from Rust programs, data races. It
  relies on semantic type soundness, which admits programs that the
  conventional syntactic type soundness rejects. While all these
  techniques define what certain kinds of undefined behavior mean in
  different settings, they seek to eliminate unexpected undefined behavior
  as opposed to explaining the optimization opportunities that undefined
  behavior provides.

\sectionNewpage

\Ssection{Conclusion}{Conclusion}\label{t:x28part_x22Conclusionx22x29}

This paper gives a formal account of the essence of \texorpdfstring{\ensuremath{\ouro}}{unreachable}.
Specifically, it confronts head on that  \texorpdfstring{\ensuremath{\ouro}}{unreachable} is a form of
undefined behavior, and hence, a compiler may take advantage of
it to legally transform a program in a way that causes it to
evaluate differently.  For that reason, the paper presents a pair of
specifications for a programming language: one that covers how programs
evaluate normally and a separately{-}defined and strikingly simple one that covers what
transformations are legal for a compiler. We prove that, despite its
simplicity, the specification of the compiler is
correct: its rules preserve the meaning of programs (according to normal
evaluation), under the assumption that the original programs do not
exhibit undefined behavior, i.e., they do not evaluate \texorpdfstring{\ensuremath{\ouro}}{unreachable}.
Importantly, the correctness of the compile{-}time transformations
depends on this precise definition of undefined behavior, rather than an
approximation that aims to facilitate the proofs. In other words, our
formal specification of \texorpdfstring{\ensuremath{\ouro}}{unreachable} provides simple rewriting rules that
capture how compiler writers
reason about  undefined behavior and how they use this reasoning to
justify transformation implementations.

Taking a step back, we hope that our approach  can provide a
template for others who wish to formally state and prove meta{-}theoretic
properties that correctly capture aspects of undefined behavior.
While undefined behavior has a reputation as an unruly phenomenon,
our work shows that there are well{-}behaved undefined behaviors. Similar to
\texorpdfstring{\ensuremath{\ouro}}{unreachable}, we conjecture that other aspects of undefined behavior can
be described precisely and intuitively. Hence, we see this work as the first
step towards demystifying undefined behavior.

\sectionNewpage

\Ssectionstarx{References}{References}\label{t:x28part_x22docx2dbibliographyx22x29}

\begin{AutoBibliography}\begin{SingleColumn}\Autobibtarget{\label{t:x28autobib_x22CVEx2d2014x2d01602014httpsx3ax2fx2fcvex2emitrex2eorgx2fcgix2dbinx2fcvenamex2ecgix3fnamex3dcvex2d2014x2d0160The_Heartbleed_Bugx2e_Retrievedx3a_Julyx2c_2022x2e__Discovered_by_Neel_Mehta_from_Googlex2ex22x29}\Autobibentry{CVE{-}2014{-}0160. 2014. \href{https://cve.mitre.org/cgi-bin/cvename.cgi?name=cve-2014-0160}{\Snolinkurl{https://cve.mitre.org/cgi-bin/cvename.cgi?name=cve-2014-0160}} The Heartbleed Bug. Retrieved: July, 2022.  Discovered by Neel Mehta from Google.}}

\Autobibtarget{\label{t:x28autobib_x22_Manjeet_Dahiya_and__Sorav_BansalModeling_Undefined_Behaviour_Semantics_for_Checking_Equivalence_Across_Compiler_OptimizationsIn_Procx2e_Haifa_Verification_Conference2017httpsx3ax2fx2fdoix2eorgx2f10x2e1007x2f978x2d3x2d319x2d70389x2d3x5f2x22x29}\Autobibentry{ Manjeet Dahiya and  Sorav Bansal. Modeling Undefined Behaviour Semantics for Checking Equivalence Across Compiler Optimizations. In \textit{Proc. Haifa Verification Conference}, 2017. \href{https://doi.org/10.1007/978-3-319-70389-3_2}{\Snolinkurl{https://doi.org/10.1007/978-3-319-70389-3_2}}}}

\Autobibtarget{\label{t:x28autobib_x22_Will_Dietzx2c__Peng_Lix2c__John_Regehrx2c_and__Vikram_AdveUnderstanding_Integer_Overflow_in_Cx2fCx2bx2bIn_Procx2e_International_Conference_on_on_Software_Engineering2012httpsx3ax2fx2fdoix2eorgx2f10x2e1109x2fICSEx2e2012x2e6227142x22x29}\Autobibentry{ Will Dietz,  Peng Li,  John Regehr, and  Vikram Adve. Understanding Integer Overflow in C/C++. In \textit{Proc. International Conference on on Software Engineering}, 2012. \href{https://doi.org/10.1109/ICSE.2012.6227142}{\Snolinkurl{https://doi.org/10.1109/ICSE.2012.6227142}}}}

\Autobibtarget{\label{t:x28autobib_x22Chris_Hathhornx2c_Chucky_Ellisonx2c_and_Grigore_Rox15fuDefining_the_Undefinedness_of_CACM_Conference_on_Programming_Language_Design_and_Implementation2015httpsx3ax2fx2fdoix2eorgx2f10x2e1145x2f2737924x2e2737979x22x29}\Autobibentry{Chris Hathhorn, Chucky Ellison, and Grigore Ro\c{s}u. Defining the Undefinedness of C. ACM Conference on Programming Language Design and Implementation, 2015. \href{https://doi.org/10.1145/2737924.2737979}{\Snolinkurl{https://doi.org/10.1145/2737924.2737979}}}}

\Autobibtarget{\label{t:x28autobib_x22International_Organization_for_StandardizationISOx2fIEC_14882x3a2011_Cx2bx2b_Standard2011httpsx3ax2fx2fwwwx2eisox2eorgx2fstandardx2f50372x2ehtmlx22x29}\Autobibentry{International Organization for Standardization. ISO/IEC 14882:2011 C++ Standard. 2011. \href{https://www.iso.org/standard/50372.html}{\Snolinkurl{https://www.iso.org/standard/50372.html}}}}

\Autobibtarget{\label{t:x28autobib_x22Jacques_Henri_Jourdanx2c_Vincent_Laportex2c_Sandrine_Blazyx2c_Xavier_Leroyx2c_and_David_PichardieA_Formallyx2dVerified_C_Static_AnalyzerIn_Procx2e_ACM_Symposium_on_Principles_of_Programming_Languagesx2c_ppx2e_247x2dx2d2592015httpsx3ax2fx2fdoix2eorgx2f10x2e1145x2f2676726x2e2676966x22x29}\Autobibentry{Jacques Henri Jourdan, Vincent Laporte, Sandrine Blazy, Xavier Leroy, and David Pichardie. A Formally{-}Verified C Static Analyzer. In \textit{Proc. ACM Symposium on Principles of Programming Languages}, pp. 247{--}259, 2015. \href{https://doi.org/10.1145/2676726.2676966}{\Snolinkurl{https://doi.org/10.1145/2676726.2676966}}}}

\Autobibtarget{\label{t:x28autobib_x22Ralf_JungUndefined_Behavior_deserves_a_better_reputation2021httpsx3ax2fx2fblogx2esigplanx2eorgx2f2021x2f11x2f18x2fundefinedx2dbehaviorx2ddeservesx2dax2dbetterx2dreputationx2fRetrievedx3a_Julyx2c_2022x2ex22x29}\Autobibentry{Ralf Jung. Undefined Behavior deserves a better reputation. 2021. \href{https://blog.sigplan.org/2021/11/18/undefined-behavior-deserves-a-better-reputation/}{\Snolinkurl{https://blog.sigplan.org/2021/11/18/undefined-behavior-deserves-a-better-reputation/}} Retrieved: July, 2022.}}

\Autobibtarget{\label{t:x28autobib_x22Ralf_Jungx2c_Hoanghai_Dangx2c_Jeehoon_Kangx2c_and_Derek_DreyerStacked_borrowsx3a_an_aliasing_model_for_RustProceedings_of_the_ACM_on_Programming_Languages_x28POPLx29_4x2c_ppx2e_41x3a1x2dx2d41x3a322020httpsx3ax2fx2fdoix2eorgx2f10x2e1145x2f3371109x22x29}\Autobibentry{Ralf Jung, Hoanghai Dang, Jeehoon Kang, and Derek Dreyer. Stacked borrows: an aliasing model for Rust. \textit{Proceedings of the ACM on Programming Languages (POPL)} 4, pp. 41:1{--}41:32, 2020. \href{https://doi.org/10.1145/3371109}{\Snolinkurl{https://doi.org/10.1145/3371109}}}}

\Autobibtarget{\label{t:x28autobib_x22Ralf_Jungx2c_Jacquesx2dHenri_Jourdanx2c_Robbert_Krebbersx2c_and_Derek_DreyerRustBeltx3a_securing_the_foundations_of_the_Rust_programming_languageProceedings_of_the_ACM_on_Programming_Languages_x28POPLx29_2x2c_ppx2e_66x3a1x2dx2d66x3a342018httpsx3ax2fx2fdoix2eorgx2f10x2e1145x2f3158154x22x29}\Autobibentry{Ralf Jung, Jacques{-}Henri Jourdan, Robbert Krebbers, and Derek Dreyer. RustBelt: securing the foundations of the Rust programming language. \textit{Proceedings of the ACM on Programming Languages (POPL)} 2, pp. 66:1{--}66:34, 2018. \href{https://doi.org/10.1145/3158154}{\Snolinkurl{https://doi.org/10.1145/3158154}}}}

\Autobibtarget{\label{t:x28autobib_x22_Richard_Ax2e_KelseyA_Correspondence_between_Continuation_Passing_Style_and_Static_Single_Assignment_FormIn_Procx2e_Papers_from_the_1995_ACM_SIGPLAN_Workshop_on_Intermediate_Representationsx2c_IR_x2795x2c_ppx2e_13x2dx2d221995httpsx3ax2fx2fdoix2eorgx2f10x2e1145x2f202530x2e202532x22x29}\Autobibentry{ Richard A. Kelsey. A Correspondence between Continuation Passing Style and Static Single Assignment Form. In \textit{Proc. Papers from the 1995 ACM SIGPLAN Workshop on Intermediate Representations}, IR {'}95, pp. 13{--}22, 1995. \href{https://doi.org/10.1145/202530.202532}{\Snolinkurl{https://doi.org/10.1145/202530.202532}}}}

\Autobibtarget{\label{t:x28autobib_x22_Thomas_Lengauer_and__Robert_Endre_TarjanA_Fast_Algorithm_for_Finding_Dominators_in_a_FlowgraphACM_Transactions_on_Programming_Languages_and_Systems_1x281x29x2c_ppx2e_121x2dx2d1411979httpsx3ax2fx2fdoix2eorgx2f10x2e1145x2f357062x2e357071x22x29}\Autobibentry{ Thomas Lengauer and  Robert Endre Tarjan. A Fast Algorithm for Finding Dominators in a Flowgraph. \textit{ACM Transactions on Programming Languages and Systems} 1(1), pp. 121{--}141, 1979. \href{https://doi.org/10.1145/357062.357071}{\Snolinkurl{https://doi.org/10.1145/357062.357071}}}}

\Autobibtarget{\label{t:x28autobib_x22Xavier_LeroyFormal_verification_of_a_realistic_compilerCommunications_of_the_ACM_52x2c_ppx2e_7x3a107x2dx2d7x3a1152009httpsx3ax2fx2fdoix2eorgx2f10x2e1145x2f1538788x2e1538814x22x29}\Autobibentry{Xavier Leroy. Formal verification of a realistic compiler. \textit{Communications of the ACM} 52, pp. 7:107{--}7:115, 2009a. \href{https://doi.org/10.1145/1538788.1538814}{\Snolinkurl{https://doi.org/10.1145/1538788.1538814}}}}

\Autobibtarget{\label{t:x28autobib_x22Xavier_LeroyMechanized_Semantics_for_the_Clight_Subset_of_the_C_LanguageJournal_of_Automated_Reasoning_43x2c_ppx2e_263x2dx2d3682009httpsx3ax2fx2fdoix2eorgx2f10x2e1007x2fs10817x2d009x2d9148x2d3x22x29}\Autobibentry{Xavier Leroy. Mechanized Semantics for the Clight Subset of the C Language. \textit{Journal of Automated Reasoning} 43, pp. 263{--}368, 2009b. \href{https://doi.org/10.1007/s10817-009-9148-3}{\Snolinkurl{https://doi.org/10.1007/s10817-009-9148-3}}}}

\Autobibtarget{\label{t:x28autobib_x22_LLVM_ProjectLLVM_13x2e0x2e0_Release_Notes2021httpsx3ax2fx2freleasesx2ellvmx2eorgx2f13x2e0x2e0x2fdocsx2fReleaseNotesx2ehtmlRetrievedx3a_Sepx2c_2021x22x29}\Autobibentry{ LLVM Project. LLVM 13.0.0 Release Notes. 2021. \href{https://releases.llvm.org/13.0.0/docs/ReleaseNotes.html}{\Snolinkurl{https://releases.llvm.org/13.0.0/docs/ReleaseNotes.html}} Retrieved: Sep, 2021}}

\Autobibtarget{\label{t:x28autobib_x22Edward_Sx2e_Lowry_and_Cx2e_Wx2e_MedlockObject_Code_OptimizationCommunications_of_the_ACM_12x281x29x2c_ppx2e_13x2dx2d221969httpsx3ax2fx2fdoix2eorgx2f10x2e1145x2f362835x2e362838x22x29}\Autobibentry{Edward S. Lowry and C. W. Medlock. Object Code Optimization. \textit{Communications of the ACM} 12(1), pp. 13{--}22, 1969. \href{https://doi.org/10.1145/362835.362838}{\Snolinkurl{https://doi.org/10.1145/362835.362838}}}}

\Autobibtarget{\label{t:x28autobib_x22_Melissa_MearsFunction_to_mark_unreachable_code2021httpx3ax2fx2fwwwx2eopenx2dstdx2eorgx2fjtc1x2fsc22x2fwg21x2fdocsx2fpapersx2f2021x2fp0627r6x2epdfx22x29}\Autobibentry{ Melissa Mears. Function to mark unreachable code. 2021. \href{http://www.open-std.org/jtc1/sc22/wg21/docs/papers/2021/p0627r6.pdf}{\Snolinkurl{http://www.open-std.org/jtc1/sc22/wg21/docs/papers/2021/p0627r6.pdf}}}}

\Autobibtarget{\label{t:x28autobib_x22Godon_Dx2e_Plotkinx7bCallx2dbyx2dnamex2c_callx2dbyx2dvalue_and_the_x3bbx2dcalculusTheoretical_Computer_Sciencex281x282x29x29x2c_ppx2e_125x2dx2d1591975x22x29}\Autobibentry{Godon D. Plotkin. \{Call{-}by{-}name, call{-}by{-}value and the $\lambda${-}calculus. \textit{Theoretical Computer Science}(1(2)), pp. 125{--}159, 1975.}}

\Autobibtarget{\label{t:x28autobib_x22_John_RegehrBetter_Testing_With_Undefined_Behavior_Coverage2011httpsx3ax2fx2fblogx2eregehrx2eorgx2farchivesx2f388x22x29}\Autobibentry{ John Regehr. Better Testing With Undefined Behavior Coverage. 2011. \href{https://blog.regehr.org/archives/388}{\Snolinkurl{https://blog.regehr.org/archives/388}}}}

\Autobibtarget{\label{t:x28autobib_x22_Xi_Wangx2c__Nickolai_Zeldovichx2c__Mx2e_Frans_Kaashoekx2c_and__Armando_Solarx2dLezamaA_Differential_Approach_to_Undefined_Behavior_DetectionCommunications_of_the_ACM_59x283x29x2c_ppx2e_99x2dx2d1062016httpsx3ax2fx2fdoix2eorgx2f10x2e1145x2f2885256x22x29}\Autobibentry{ Xi Wang,  Nickolai Zeldovich,  M. Frans Kaashoek, and  Armando Solar{-}Lezama. A Differential Approach to Undefined Behavior Detection. \textit{Communications of the ACM} 59(3), pp. 99{--}106, 2016. \href{https://doi.org/10.1145/2885256}{\Snolinkurl{https://doi.org/10.1145/2885256}}}}

\Autobibtarget{\label{t:x28autobib_x22Lix2dyao_Xiax2c_Yannick_Zakowskix2c_Paul_Hex2c_Chungx2dKilx5cn____________________Hurx2c_Gregory_Malechax2c_Benjamin_Cx2e_Piercex2c_and_Stevex5cn______________________ZdancewicInteraction_Treesx3a_rRepresenting_Recursive_and_Impurex5cn____Programs_in_CoqProceedings_of_the_ACM_on_Programming_Languages_x28POPLx29_4x2c_ppx2e_51x3a1x2dx2d51x3a322020httpsx3ax2fx2fdoix2eorgx2f10x2e1145x2f3371119x22x29}\Autobibentry{Li{-}yao Xia, Yannick Zakowski, Paul He, Chung{-}Kil
                    Hur, Gregory Malecha, Benjamin C. Pierce, and Steve
                      Zdancewic. Interaction Trees: rRepresenting Recursive and Impure
    Programs in Coq. \textit{Proceedings of the ACM on Programming Languages (POPL)} 4, pp. 51:1{--}51:32, 2020. \href{https://doi.org/10.1145/3371119}{\Snolinkurl{https://doi.org/10.1145/3371119}}}}

\Autobibtarget{\label{t:x28autobib_x22_Yannick_Zakowskix2c__Calvin_Beckx2c__Irene_Yoonx2c__Ilia_Zaichukx2c__Vadim_Zalivax2c_and__Steve_ZdancewicModularx2c_Compositionalx2c_and_Executable_Formal_Semantics_for_LLVM_IRProceedings_of_the_ACM_on_Programming_Languages_x28ICFPx29_5x2c_ppx2e_67x3a1x2dx2d67x3a302021httpsx3ax2fx2fdoix2eorgx2f10x2e1145x2f3473572x22x29}\Autobibentry{ Yannick Zakowski,  Calvin Beck,  Irene Yoon,  Ilia Zaichuk,  Vadim Zaliva, and  Steve Zdancewic. Modular, Compositional, and Executable Formal Semantics for LLVM IR. \textit{Proceedings of the ACM on Programming Languages (ICFP)} 5, pp. 67:1{--}67:30, 2021. \href{https://doi.org/10.1145/3473572}{\Snolinkurl{https://doi.org/10.1145/3473572}}}}

\Autobibtarget{\label{t:x28autobib_x22_Jianzhou_Zhaox2c__Santosh_Nagarakattex2c__Milo_Mx2eKx2e_Martinx2c_and__Steve_ZdancewicFormalizing_the_LLVM_Intermediate_Representation_for_Verified_Program_TransformationsIn_Procx2e_ACM_Symposium_on_Principles_of_Programming_Languagesx2c_POPL_x2712x2c_ppx2e_427x2dx2d4402012httpsx3ax2fx2fdoix2eorgx2f10x2e1145x2f2103656x2e2103709x22x29}\Autobibentry{ Jianzhou Zhao,  Santosh Nagarakatte,  Milo M.K. Martin, and  Steve Zdancewic. Formalizing the LLVM Intermediate Representation for Verified Program Transformations. In \textit{Proc. ACM Symposium on Principles of Programming Languages}, POPL {'}12, pp. 427{--}440, 2012. \href{https://doi.org/10.1145/2103656.2103709}{\Snolinkurl{https://doi.org/10.1145/2103656.2103709}}}}

\Autobibtarget{\label{t:x28autobib_x22_Jianzhou_Zhaox2c__Santosh_Nagarakattex2c__Milo_Mx2eKx2e_Martinx2c_and__Steve_ZdancewicFormal_Verification_of_SSAx2dBased_Optimizations_for_LLVMIn_Procx2e_ACM_Conference_on_Programming_Language_Design_and_Implementationx2c_PLDI_x2713x2c_ppx2e_175x2dx2d1862013httpsx3ax2fx2fdoix2eorgx2f10x2e1145x2f2491956x2e2462164x22x29}\Autobibentry{ Jianzhou Zhao,  Santosh Nagarakatte,  Milo M.K. Martin, and  Steve Zdancewic. Formal Verification of SSA{-}Based Optimizations for LLVM. In \textit{Proc. ACM Conference on Programming Language Design and Implementation}, PLDI {'}13, pp. 175{--}186, 2013. \href{https://doi.org/10.1145/2491956.2462164}{\Snolinkurl{https://doi.org/10.1145/2491956.2462164}}}}\end{SingleColumn}\end{AutoBibliography}

\postDoc
\end{document}